
\documentclass[journal]{IEEEtran}
\ifCLASSINFOpdf
  % \usepackage[pdftex]{graphicx}
  % declare the path(s) where your graphic files are
  % \graphicspath{{../pdf/}{../jpeg/}}
  % and their extensions so you won't have to specify these with
  % every instance of \includegraphics
  % \DeclareGraphicsExtensions{.pdf,.jpeg,.png}
\else
  % or other class option (dvipsone, dvipdf, if not using dvips). graphicx
  % will default to the driver specified in the system graphics.cfg if no
  % driver is specified.
  % \usepackage[dvips]{graphicx}
  % declare the path(s) where your graphic files are
  % \graphicspath{{../eps/}}
  % and their extensions so you won't have to specify these with
  % every instance of \includegraphics
  % \DeclareGraphicsExtensions{.eps}
\fi

\usepackage{glossaries}
\usepackage{acronym}
\usepackage{amsthm}
\usepackage{amsmath}
\usepackage{amsfonts}
\usepackage{mathtools}
\usepackage{amssymb}
\usepackage{pgfplots}
\usepackage{graphics}
\usepackage{graphicx}
\usepackage{epstopdf}
\usepackage{algorithm}
\usepackage{algorithmicx}
\usepackage{algorithm,algpseudocode}
\usepackage{bm}
\usepackage{bbm}
\usepackage{ragged2e}
\usepackage{subfig}
\usepackage{xr}
\externaldocument{Appendix}
\usepackage{xfrac}
\usepackage{fix-cm}

\newacronym{5G}{5G}{Fifth-Generation}
\newacronym{4G}{4G}{Fourth-Generation}
\newacronym{eMBB}{eMBB}{Enhanced Mobile BroadBand}
\newacronym{mMTC}{mMTC}{Massive Machine Type Communications}
\newacronym{URLLC}{URLLC}{Ultra-Reliable Low Latency Communications}
\newacronym{NR}{NR}{New Radio}
\newacronym{MIMO}{MIMO}{Multiple Input Multiple Output}
\newacronym{LTE}{LTE}{Long Term Evolution}
\newacronym{D2D}{D2D}{Device-to-Device}
\newacronym{M2M}{M2M}{Machine-to-Machine}
\newacronym{V2V}{V2V}{Vehicle-to-Vehicle}
\newacronym{V2X}{V2X}{Vehicle-to-Everything}
\newacronym{ProSe}{ProSe}{Proximity Services}
\newacronym{PS}{PS}{Public Safety}
\newacronym{IoT}{IoT}{Internet of Things}
\newacronym{CSI}{CSI}{Channel State Information}
\newacronym{HARQ}{HARQ}{Hybrid Automatic Repeat Request}
\newacronym{AMC}{AMC}{Adaptive Modulation Coding}
\newacronym{TDD}{TDD}{Time Division Duplex}
\newacronym{TDMA}{TDMA}{Time Division Multiple Access}
\newacronym{QoS}{QoS}{Quality of Service}
\newacronym{FDD}{FDD}{Frequency Division Duplex}
\newacronym{DTMC}{DTMC}{Discrete Time Markov Chain}
\newacronym{DRX}{DRX}{Discontinuous Reception}
\newacronym{DL}{DL}{Downlink}
\newacronym{UL}{UL}{Uplink}
\newacronym{3GPP}{3GPP}{3rd Generation Partnership Project}
\newacronym{SNR}{SNR}{Signal-to-Noise Ratio}
\newacronym{SINR}{SINR}{Signal-to-Interference-plus-Noise Ratio}
\newacronym{OP}{OP}{Optimization Problem}
\newacronym{RB}{RB}{Resource Block}
\newacronym{RE}{RE}{Resource Emplacement}
\newacronym{PUCCH}{PUCCH}{Physical Uplink Control Channel}
\newacronym{BS}{BS}{Base Station}
\newacronym{UB}{UB}{Upper Bound}
\newacronym{UE}{UE}{User Equipment}
\newacronym{RI}{RI}{Rank Index}
\newacronym{PMI}{PMI}{Precoding Matrix Indicator}
\newacronym{CQI}{CQI}{Channel Quality Index}
\newacronym{DCI}{DCI}{Downlink Control Information}
\newacronym{AWGN}{AWGN}{Additive White Gaussian Noise}
\newacronym{QPSK}{QPSK}{Quadrature Phase-Shift Keying}
\newacronym{PDCCH}{PDCCH}{Physical Downlink Control Channel}
\newacronym{PUSCH}{PUSCH}{Physical Uplink Shared Channel} 
\newacronym{RRC}{RRC}{Radio Resource Control} 
\newacronym{E-UTRA}{E-UTRA}{Evolved Universal Mobile Telecommunications System Terrestrial Radio Access,}  
\newacronym{MANET}{MANET}{Mobile Ad-Hoc Networks}  
\newacronym{OFDMA}{OFDMA}{Orthogonal Frequency Division Multiple Access}  
\newacronym{SC-FDMA}{SC-FDMA}{Single Carrier Frequency Division Multiple Access}  
\newacronym{EC}{EC}{Energy Consumption}  
\newacronym{EE}{EE}{Energy Efficiency} 
\newacronym{MU}{MU}{Master User Equipment}
\newacronym{TTI}{TTI}{Time Transmission Interval}
\newacronym{ISD}{ISD}{Inter-Site Distance}
\newacronym{RF}{RF}{Radio Frequency}
\newacronym{UAV}{UAV}{unmanned aerial vehicle}
\newacronym{CDF}{CDF}{Cumulative Distribution Function}
\newacronym{RMAB}{RMAB}{Restless Markov Multi-armed Bandit}
\newacronym{MDP}{MDP}{Markov Decision Process}
\newacronym{CMDP}{CMDP}{Constrained Markov Decision Process}
\newacronym{POMDP}{POMDP}{Partially Observable Markov Decision Process}
\newacronym{CPOMDP}{CPOMDP}{Constrained Partially Observable Markov Decision Process}
\newacronym{PWLC}{PWLC}{Piecewise-Linear Convex} 
\newacronym{MILP}{MILP}{Mixed-Integer Linear Program}
\newacronym{PBVI}{PBVI}{Point-Based Value Iteration}
\newacronym{CPBVI}{CPBVI}{Constrained Point-Based Value Iteration}
\newacronym{GCPBVI}{GCPBVI}{Greedy Constrained Point-Based Value Iteration}

%==================================New Theorems========================================%

\newtheorem{theorem}{Theorem}[section]
\newtheorem{lemma}[theorem]{Lemma}

\newtheorem{example}{Example}

\newtheorem{definition}[theorem]{Definition}
\DeclareGraphicsExtensions{.pdf,.eps}
\graphicspath{{.//}}

\algnewcommand\algorithmicswitch{\textbf{switch}}
\algnewcommand\algorithmiccase{\textbf{case}}
\algnewcommand\algorithmicassert{\texttt{assert}}
\algnewcommand\Assert[1]{\State \algorithmicassert(#1)}%
\algdef{SE}[SWITCH]{Switch}{EndSwitch}[1]{\algorithmicswitch\ #1\ \algorithmicdo}{\algorithmicend\ \algorithmicswitch}%
\algdef{SE}[CASE]{Case}{EndCase}[1]{\algorithmiccase\ #1}{\algorithmicend\ \algorithmiccase}%
\algtext*{EndSwitch}%
\algtext*{EndCase}%

\newcommand{\argmax}{\mathop{\mathrm{argmax}}}

\DeclarePairedDelimiter\ceil{\lceil}{\rceil}

\newcommand{\bigO}[1]{\ensuremath{\mathop{}\mathopen{}O\mathopen{}\left(#1\right)}}

% correct bad hyphenation here
%\hyphenation{op-tical net-works semi-conduc-tor}

\begin{document}
%
% paper title
% Titles are generally capitalized except for words such as a, an, and, as,
% at, but, by, for, in, nor, of, on, or, the, to and up, which are usually
% not capitalized unless they are the first or last word of the title.
% Linebreaks \\ can be used within to get better formatting as desired.
% Do not put math or special symbols in the title.

\title{A Dynamic and Incentive Policy for Selecting \gls{D2D} Mobile Relays}
%
% author names and IEEE memberships
% note positions of commas and nonbreaking spaces ( ~ ) LaTeX will not break
% a structure at a ~ so this keeps an author's name from being broken across
% two lines.
% use \thanks{} to gain access to the first footnote area
% a separate \thanks must be used for each paragraph as LaTeX2e's \thanks
% was not built to handle multiple paragraphs
%

\author{Rita Ibrahim,~\IEEEmembership{Member,~IEEE,}
        Mohamad Assaad,~\IEEEmembership{Senior Member,~IEEE,}
        and~Berna Sayrac,~\IEEEmembership{Member,~IEEE,}}
\maketitle

%\IEEEtitleabstractindextext{%
% make the title area
% in the abstract or keywords.
\begin{abstract}
User-to-network relaying enabled via \acrfull{D2D} communications is a promising technique for improving the performance of cellular networks. Since in practice relays are in mobility, a dynamic relay selection scheme is unavoidable. In this paper, we propose a dynamic relay selection policy that maximizes the performance of cellular networks (e.g. throughput, reliability, coverage) under cost constraints (e.g. transmission power, power budget). We represent the relays' dynamics as a \acrfull{MDP} and assume that only the locations of the selected relays are observable. Therefore, the dynamic relay selection process is modeled as a \acrfull{CPOMDP}. Since the exact solution of such framework is intractable to find, we develop a point-based value iteration solution and evaluate its performance. In addition, we prove the submodularity property of both the reward and cost value functions and deduce a greedy solution which is scalable with the number of discovered relays. For the muti-user scenario, a distributed approach is introduced in order to reduce the complexity and the overhead of the proposed solution. We illustrate the numerical results of the scenario where throughput is maximized under energy constraint and evaluate the gain that the proposed relay selection policy achieves compared to a traditional cellular network.
\end{abstract}
\begin{IEEEkeywords}
 \acrfull{D2D} communications, relay selection, mobility, \acrfull{CPOMDP}
\end{IEEEkeywords}
%}

% For peer review papers, you can put extra information on the cover
% page as needed:
% \ifCLASSOPTIONpeerreview
% \begin{center} \bfseries EDICS Category: 3-BBND \end{center}
% \fi
%
% For peerreview papers, this IEEEtran command inserts a page break and
% creates the second title. It will be ignored for other modes.
\IEEEpeerreviewmaketitle

\section{Introduction}
\IEEEPARstart{I}{n} \gls{D2D} enabled cellular networks, user to network relaying can be handled for improving the performance of cellular networks. Under realistic assumption that relays are in mobility, it is crucial to define a strategy for designating the relays that will respectively serve each \acrfull{MU} in the network. In this paper, we propose a relay selection policy that maximizes the performance of cellular networks (e.g. throughput, reliability, coverage) under cost constraints (e.g. transmission power, power budget). We assume that the relays' dynamics are represented by a \acrfull{MDP} where the \gls{MU} cannot directly observe the locations of all the potential relays that have been discovered. Therefore, the sequential relay decision process is modeled by a \acrfull{CPOMDP}. Since the exact solutions of such framework are computationally intractable to find, we have developed an approximated solution as well as discussed the existing trade-off between its complexity and its preciseness. Moreover, proving the submodularity property of the reward and cost functions leads us to propose a greedy form of the approximated solution. Numerical results are presented to endorse our relay selection policies and to show how introducing \gls{D2D} relaying can highly improve the performance of cellular networks. Furthermore, a system-level simulator is developed in order to implement our strategy of relay selection and to test its performance in a nearly realistic cellular network.  
\iffalse
knowing that potential relays are mobile comes the question which relay to choose. 

Mobility (location not known at each TTI), max throughput under cost constraint (pricing model)

Finding policy for different TTI without knowing the states leads to propose a relay selection based on CPOMDP. 
agent decision process

Complexity is reduced by considering PBVI and then greedy algorithms.

Numerical results shows the improvement of cell networks using UEtoNW relaying

System level simulator implemented 
\fi

\section{Concept and related work \label{sec:RS_Intro}}
%Mainly review the state-of-the-art that I have done at the beginning of working on this chapter and UEtoNW relaying folder
In traditional cellular networks, single hop communications are deployed between the users and the \gls{BS}. However, introducing relays to cellular networks has become one of the major concern of cellular network planners that aim to improve the capacity and the coverage of their networks. The emergence of \gls{D2D} communications encourages the deployment of \gls{UE}-to-Network relaying functionality. \gls{D2D} communications will take place between the \gls{MU} and the relays however cellular communications are maintained between the relays and the \gls{BS}. The main advantages of such relaying feature is: (i) improving the performance of the network (e.g. capacity enhancement, coverage extension, transmission power reduction, load balancing, network offloading etc.) and enabling new services (e.g. data on demand).

\gls{D2D} relaying technique has been the subject of interest of significant research in both academia and industry. Several tools have been used for evaluating the performance of this technique. Stochastic geometry is used for analytically modeling and analyzing the performance of cellular network with fix \gls{D2D} relays in \cite{Chen2014Stochastic} and mobile \gls{D2D} relays in \iffalse\cite{Dhillon2012Modeling},\fi \cite{Liu2015Device}. Monte-Carlo simulations in \cite{Vanganuru2012System} and \cite{Huang2009Capacity} show how enabling \gls{D2D} communications to carry relayed traffic can enhance the capacity and coverage of cellular networks. A system-level simulator in \cite{Babun2015MultiHop} was developed to evaluate the extension of the cellular coverage due to \gls{D2D} relaying.
\iffalse
%1. Stochastic geometry tool (1.a Modeling and Analysis of K-Tier Downlink Heterogeneous Cellular Networks 1.b Stochastic analysis o%f a cellular network with mobile relays; 1.c Energy Harvesting-Based D2D-Assisted Machine-Type Communications),1.d Device-to-%Device Communication Overlaying Two-Hop Multi-Channel Uplink Cellular Networks; 1.e Spatial and Social Paradigms for %Interference and Coverage Analysis in Underlay D2D Network 3. Monte Carlo Simulations CAPACITY ANALYSIS OF %DEDICATED FIXED AND MOBILE RELAY IN LTE-ADVANCED CELLULAR NETWORKS;;;System Capacity and Coverage %of a Cellular Network with D2D Mobile Relays  4. System level simulators: The Potential of Moving Relays A Performance Analysis; %The Energy Efficiency Potential of Moving and Fixed Relays for Vehicular Users 
%Specify if it was fix or mobile relays and if it was single or multiple relays (in paragraph before)
\fi

Despite the performance gain that \gls{UE}-to-Network relaying has promised, several challenging issues require further investigation. 
One can ask how the relays should be strategically positioned in order to optimize the performance of the network.\iffalse Work in \cite{collenergy} aims to minimize the overall energy consumption of a two-hop cellular network with mobile relays by identifying the optimal relay location from which this relay should start forwarding the data to the \gls{BS}\fi Sharing the spectrum between \gls{D2D} and cellular communciations is one of the existing challenges. Using stochastic geometry modeling, authors in \cite{Atat2017Energy} studied underlay \gls{D2D} enabled cellular networks. They analytically derived the tradeoff generated by the spectrum partition between \gls{D2D} and cellular communications and they deduce the optimal spectrum partition that guarantees the fairness in the network. Several solutions of power and/or resource allocation in \gls{D2D} relayed cellular networks have been studied in the recent literature. The work \cite{Hasan2015Distributed}, that proposes a distributed resource allocation for \gls{D2D} relay-aided cellular networks using game theory tool, exposes a summary of the different existing centralized and distributed resource allocation schemes. In addition to resource and power allocation, a mode and path selection algorithm was developed and simulated in \cite{Silva2014Performance}. Furthermore, \gls{CMDP} problems were formulated in \cite{Niyato2009Optimization} and \cite{Zhang2016Energy} to obtain the optimal decision of packet scheduling that mobile relays should take.

One of the main challenges of \gls{D2D} relay-aided cellular networks is to decide how relays should be selected in order to achieve the  performance of cooperative relaying. Due to relay mobility, a dynamic relay selection policy is unavoidable. Indeed, a relay selected at one position can be no longer helpful at another position. In this work, we address this question by proposing a dynamic relay selection strategy that \gls{MU} may take for optimizing a certain performance metric (e.g. throughput, coverage, reliability etc.). Since the relaying functionality is costly for the relays (e.g. in terms of energy, data consumption etc), we propose to charge \gls{MU}s for using their selected relays. This cost aims to encourage the users to behave as potential relays. Based on the fact that the reward as well as the cost of each relay depends on its location, we propose a dynamic relay selection policy that maximizes some \gls{QoS} metric of cellular networks while satisfying some cost constraints.    

\subsection{Related Work} 
Relay selection is crucial for improving the performance of \gls{D2D} relay-aided cellular networks. The rich literature on relay selection problems can be divided into two categories: mobile relay selection and fix relay selection schemes. Most of the previous works considers fix relay selection scenarios. Some of these schemes are briefly presented in the following. Based on a stochastic formulation, authors in \cite{Wei2010Distributed} propose a fully distributed single relay association scheme that aims to increase the spectral efficiency of the network. An energy efficient relay selection was proposed in \cite{Seonghwa2013Energy} based on a \gls{DTMC} modeling of the relay node with \gls{DRX} mechanism. Based on an iterative technique, \cite{Kim2014Iterative} proposes a joint relay selection and power allocation problem scheme for relay-aided \gls{D2D} underlying cellular networks. The work in \cite{Ma2017RelaySelction} uses a queuing theory model to propose a single relay selection scheme that optimizes the network in terms of relay remaining battery life, end-to-end data rate and end-to-end delay criteria. For underlay \gls{D2D} enabled cellular networks, interference is mitigated between cellular and \gls{D2D}communications by considering a distributed relay selection algorithm in \cite{Ma2012distributed}.  

However, the aforementioned works do not take user mobility into account but consider fix relays which limits their applicability in cellular mobile networks. Considering the mobility of the relays seems to be a challenging scenario. Mobile relays have been the subject of the work in \cite{Li2011dynamic}. Based on the knowledge of the relay mobility pattern, a dynamic relay selection scheme aiming to minimize the cost of relaying under \gls{QoS} requirements was proposed. The studied optimization model is based on \gls{CMDP}. Since relay selection appears as decision making process, several \gls{MDP} based formulation has been studied in order to propose optimal relay selection scheme. For example, while predicting the channel states of the available relays, authors in \cite{LiCooperative2011} address a relay selection policy that maximizes the long term transmission rate. This decision strategy is obtained by solving a \gls{POMDP} with dynamic programming-based algorithm.

\subsection{Contribution and Organization}
The use cases of \gls{D2D} communications that will be studied in this paper is user to network relaying in cellular networks. The mobility of the relays is the main challenge for relay selection decision. We propose a dynamic relay selection policy that maximizes a certain performance metric of the network (e.g. throughput, reliability, coverage etc.) under cost constraints (e.g. energy consumption, data consumption etc.). The main contribution in this work can be summarized as follows: 
\begin{itemize}
\item \gls{CPOMDP} formulation of the sequential relay selection process that a \gls{MU} will decide and discussion concerning the complexity of this problem.
\item Proposition of a dynamic policy of relay selection that approximately optimizes the formulated problem based on a greedy point based value iteration. 
\item Extending the results to multiple users scenario. 
\item Numerical results as well as system-level simulations show the performance that cellular networks may gain by implementing the proposed relay selection policy.
\end{itemize}

%\newpage
The main particularity of this work compared to other relay selection schemes previously proposed in the literature is the following:
\begin{itemize}
\item The consideration of a realistic scenario where relays are in mobility, thus dynamic relay selection scheme is necessary.
\item The cellular network performance is optimized under cost constraints.
\item The sequential relay decision process is modeled by a \acrfull{CPOMDP} (and not a \gls{CMDP}) since the states of the potential relays cannot be observed until these relays are selected by the \gls{MU}. 
\item An approximated solution is proposed for avoiding the intractability of exact solutions. A trade-off between the preciseness of the approximated solution and its complexity is derived.
\item The relay selection is not limited to one relay, thus the performance gain is improved due to the increasing in the cooperative diversity order. Actually, the proposed relay selection policy aims to maximize the performance of the network under cost constraints without any constraints on the number of chosen relays. 
\item The relays' mobility pattern is assumed communicated to the agent decision maker ( \gls{MU} or the \gls{BS}).
\end{itemize}

The rest of this paper is organized as follows. Section \ref{sec:RS_SystemModel} describes the system model for a single \gls{MU} scenario. Section \ref{sec:RS_ProbForm} formulates the optimization problem as a \gls{CPOMDP}. Since the exact solutions of such problem are intractable to find, a low-complexity dynamic relay selection, called \gls{CPBVI}, is proposed in section \ref{sec:RS_Policies}. The submodularity property is verified for this problem, thus a greedy form of this approximation, called \gls{GCPBVI}, is deduced. These results are extended to a multi-\gls{MU} scenario in section \ref{sec:RS_MultiPlayer}. For this scenario, a distributed approach is exposed for reducing the complexity of centralized solutions. Numerical results in section \ref{sec:RS_NumResults} corroborate our claims. The performance enhancement of the cellular networks is shown in section \ref{sec:RS_SimuResults} by implementing the proposed relay selection scheme in a system-level simulator. Section \ref{sec:RS_Conclusion} concludes the paper whereas the proofs are provided in the appendices. 
\section{System Model \label{sec:RS_SystemModel}}
For the sake of clarity, we describe the system model for a single \gls{MU} scenario. We show that extending the result to multi-\gls{MU} scenario is a straight forward process.
\subsection{Network model}
We consider a single cell scenario with one \gls{MU} and a set of $K$ potential mobile relay. \gls{MU} is allowed to use \gls{D2D} communications to access the network via mobile relays and by that improving the performance of its cellular communications. As figure \ref{fig.RS_NR_scenario_SUE} shows, the \gls{MU} is allowed to use \gls{D2D} communications to access the network via mobile relays.  \gls{MU} discovers $K$ nearby potential relays by launching a discovery process of period $T$. The time between two discovery processes is partitioned into decision epochs $t$ (e.g. \gls{TTI} in \gls{LTE} networks) of constant duration (with $t \in \lbrace{ 1, 2,..., T \rbrace}$). The goal of this work is to determine the relay selection policy that should be applied by the \gls{MU} at each epoch $t$ in order to maximize its cumulative reward under cost constraints.
\iffalse The goal of this work is to determine the relay selection policy that should be applied by the user at each epoch $t$ in order to maximize its cumulative reward (resp. minimize its cumulative cost ) under cost (resp. reward) constraints. \fi

$\mathcal{K}=\lbrace{0, 1, 2,..., K \rbrace}$ denotes the set of $K$ existing potential relays (from index $1$ to $K$) as well as the direct cellular link (index $0$). The relay selection policy consists in deciding whether the \gls{MU} will have direct communication with the network or will pass by some mobile relays. In the latter case, this policy chooses the subset of relays that will be used for attaining the network. Please note that \gls{D2D} relaying can be applied to both \gls{DL} and \gls{UL} communications. Due to the practical consideration that mobile terminals do not support simultaneous signal transmission and reception, a two-phases transmission scheme is assumed. Therefore, for \gls{UL} (resp. \gls{DL}) relayed communication, the transmission protocol is divided into two phases: (i) in the first phase the relay receives the data from the \gls{MU} (resp. \gls{BS}) and (ii) in the second phase the relay transmits the received data to the \gls{BS} (resp. \gls{MU}). 
\iffalse{\color{orange}consider also the choice of having direct link without relaying as the first potential relays}\fi

\begin{figure}%[H]
\centering
%\captionsetup{justification=centering}
\includegraphics[scale=0.7]{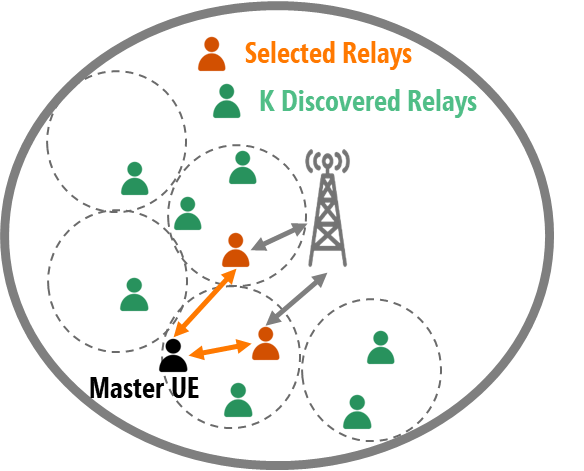}
\caption{Single-\gls{UE} system model}
\label{fig.RS_NR_scenario_SUE} 
\end{figure}

\subsection{Mobility Model}
Relays' locations in a service coverage area are quantized and represented by a set of regions $\mathcal{S} = \lbrace S_1,S_2, . . . ,S_{|\mathcal{S}|}\rbrace$. We assume that the relays remain in the same region during a decision epoch. We denote by $s_i\left( t\right)$ the location of relay $i$ at epoch $t$. In the next epoch, the location of each relay is changed (i.e. by either staying in the same region or moving to another neighboring region). Thus, the mobility of relay $i$ is modeled by a transition matrix $\bm{P}_i$ where each element $P_i\left(S_n,S_{n'}\right)$ of this matrix denotes the probability that relay $i$ moves from region $S_n \in \mathcal{S}$ to region $S_{n'} \in \mathcal{S}$ in the next decision epoch. In this work, we assume that the \gls{MU} is aware of the mobility pattern of each discovered relay (i.e. equivalent to the relay's transition matrix). The vector $\bm{s}_t=\left( s_1\left(t\right),s_2\left(t\right),...,s_K\left(t\right) \right)$ denotes the location of the $K$ potential relays at epoch $t$. \iffalse{\color{orange}known probabilities because known speed and take into account the speed by considering $P^{speed}$}\fi
\iffalse this user will either stay at the same location or move to another location $s_i \left( t\right) \in \mathcal{N}\left( s_i \left(t\right)\right) $ in the next epoch (i.e. where $\mathcal{N} \left( s_i \left( t\right)\right) $ denotes the set of neighboring locations of $s_i \left( t\right)$). Assuming a mobility time negligible compared with one decision period, the $i^{th}$ relay's mobility can be modeled by a transition matrix $P_i$. Thus, $P_i\left(s_i,s_i'\right)$ denotes the probability to move from position $s_i$ to position $s_i'$ in the next decision epoch (i.e. with $\sum \limits _{s_i' \in \mathcal{N} \left( s_i\right)} P_i\left(s_i,s_i'\right)=1$).
\fi

We underline that the location of each relay is changing over time epochs based on its transition probability matrix and independently from the fact that this relay has been or not selected. We assume that the \gls{MU} will be able to know the current localization region of only the chosen relays. On the other hand, the \gls{MU} will just have a certain belief of the locations of the non selected relays. It is a practical assumption because when a relay is selected by the \gls{MU} some signaling is exchanged between these two equipments in order to allow \gls{D2D} communications. This exchange may include some information that indicates the region of the selected relay. However, when the relay is not selected then no need for any signaling between the \gls{MU} and the relay since no communication will take place between these two equipments. In this case, recognizing the region of the relays requires an overhead of signaling which can be avoided.

\subsection{Cost and Reward Model}
We propose a relay selection policy that aims to optimize a cellular reward under cost constraints. Here, cost and reward models are described. The reward function, which depends on the relay's location, represents the benefit of choosing a relay in terms of throughput and/or reliability and/or coverage etc. The cost function, which depends on the relay's location, defines the charge that the \gls{MU} owes to each selected relay (i.e. in terms of energy and/or incentive budget etc.). We respectively denote by $r_i\left(s_i\right)$ and $c_i\left(s_i\right)$ the reward and the cost of the $i^{th}$ relay in location $s_i \in \mathcal{S}$. We assume that when the \gls{MU} chooses a set of relays then the total reward (resp. cost) is the sum of the reward (resp. cost) of each relay in the chosen set. We denote by $\bm{a}=\left(a_1,a_2,...,a_K\right)$ the vector such that $a_i=1$ if relay $i$ is selected and $0$ otherwise. Thus, the total reward and cost at a given state $\bm{s}$ are given by:
\begin{equation}
\label{eq.TotalReward}
R\left(\bm{\bm{s},\bm{a}}\right)=\sum \limits_{i \in {\mathcal{K}}}r_i\left(s_i\right)\mathbbm{1}_{\lbrace a_i=1\rbrace}
\end{equation}
\begin{equation}
\label{eq.TotalCost}
C\left(\bm{\bm{s},\bm{a}}\right)=\sum \limits_{i \in {\mathcal{K}}}c_i\left(s_i\right)\mathbbm{1}_{\lbrace a_i=1\rbrace}
\end{equation}
\section{Problem formulation \label{sec:RS_ProbForm}}
Enabling user to network relaying functionality based on \gls{D2D} communications leads to the following question: which mobile relays should be chosen by the \gls{MU} for ensuring an enhancement in cellular communications. This decision depends essentially on the reward and cost parameters of the discovered relays. The main challenge of such relay selection policy remains in the mobility of these potential relays. The goal is to make the relay selection decision ($\bm{a}_0,...,\bm{a}_T$) that optimize the following:
\begin{equation}
\label{eq.Prob1}
\max \mathbb{E}\left[ \sum \limits _ {t=1}^T \gamma^{t} R\left(\bm{s}_{t} ,\bm{a}_{t}\right) \right] \text{ s.t. } \mathbb{E}\left[ \sum \limits _ {t=1}^T\gamma^{t} C\left(\bm{s}_{t} ,\bm{a}_{t}\right) \right] \leq C_{th} 
\end{equation}
where $\gamma \in \left[ 0,1 \right]$ is a discount factor that represents the difference in importance between future rewards (resp. costs) and present rewards (resp. costs).

The considered reward model includes a large scope of reward metrics. In the following we give few examples that indicate how the reward mode \ref{eq.TotalReward} can be applied:
\begin{itemize}
\item Throughput criteria: different packets are transmitted to the selected relays, thus the total throughput is the sum of the throughput of each selected link. Considering the example of Shannon capacity over a bandwidth W, the total reward is given by $R\left(\bm{\bm{s},\bm{a}}\right)=$
 \iffalse
 \[\resizebox{1\hsize}{!}{$
r_i\left(s_i\right)=W \log_2\left[1+\min \left\{ \text{SINR}_{\text{\gls{MU}}-\text{Relay}_i}\left(s_i\right), \text{SINR}_{\text{Relay}_i-\text{\gls{BS}}}\left(s_i\right) \right\} \right] $}\]
Therefore, the total reward is given by:\fi
\[\resizebox{0.9\hsize}{!}{$\hspace{-20pt}W\sum \limits_{i \in \mathcal{K}| a_i=1} \log_2\left[1+\min \left\{ \text{SINR}_{\text{\gls{MU}}-\text{Relay}_i}\left(s_i\right), \text{SINR}_{\text{Relay}_i-\text{\gls{BS}}}\left(s_i\right) \right\} \right]$}
\]
\item Reliability criteria: same packets are transmitted to the selected relays,  thus the error probability $q$ is equal to the product of the error probability of each selected link $q_i$. For example, in the case of considering the bit error ratio in the case of \gls{QPSK} modulation and \gls{AWGN} channel, the error probability of relay $i$ at state $s_i$ can be written as follows: 
\[\resizebox{0.9\hsize}{!}{$\hspace{-20pt}
q_i\left(s_i\right)=\frac{1}{2}erfc\left(\sqrt{ \text{SINR}_{\text{\gls{MU}}-\text{Relay}_i}\left(s_i\right)}\right)\times\frac{1}{2}erfc\left(\sqrt{\text{SINR}_{\text{Relay}_i-\text{\gls{BS}}}\left(s_i\right)}\right)$}
\]
When action $\bm{a}$ is taken, then $q\left(\bm{a},\bm{s}\right)=\prod\limits_{i \in \mathcal{K} | a_i=1} q_i\left(s_i\right)$. Thus, minimizing this error probability corresponds to maximizing the $-\log$ of the probability error of each selected link 
\[\resizebox{0.9\hsize}{!}{$\hspace{-20pt}
\min\limits_{\bm{a}} q\left(\bm{a},\bm{s}\right)=\min\limits_{\bm{a}} \prod\limits_{i \in \mathcal{K} | a_i=1}q_i\left(s_i\right)=\max\limits_{\bm{a}}  -\sum\limits_{i \in \mathcal{K} | a_i=1}\log\left[q_i\left(s_i\right)\right]$}
\]
Therefore, we can express the reliability reward as $-\log$ of the error probability in order to have the overall reward as the sum of the reward of each selected relay $R\left(\bm{\bm{s},\bm{a}}\right)=$
\[\resizebox{0.9\hsize}{!}{$\hspace{-20pt}
-\sum\limits_{i \in \mathcal{K} | a_i=1}\log\left[ \frac{1}{4}erfc\left(\sqrt{ \text{SINR}_{\text{\gls{MU}}-\text{Relay}_i}\left(s_i\right)}\right)erfc\left(\sqrt{\text{SINR}_{\text{Relay}_i-\text{\gls{BS}}}\left(s_i\right)}\right) \right]$}
\] 
The process above can be applied to any other error probability function (e.g. \cite{Polyanskiy2010}). 
\iffalse 
Reference: Digital Communications, John Proakis, Massoud Salehi, McGraw-Hill Education, Nov 6, 2007
\fi
\item Coverage probability criteria: same packets are transmitted to the selected relays,  thus the overall outage probability is equal to the product of the outage probability of each selected link. Similarly to reliability criteria, we can express the coverage reward as $-\log$ of the outage probability in order to have the overall reward as the sum of the reward of each selected relay.
\iffalse 
(i.e. having a \gls{SNR} lower that $\text{SNR}_{th}$). Considering Rayleigh fading, the outage probability of a link $i$ can be written as follows:
\[
q_i\left(s_i\right)=\exp{-\frac{\text{SNR_{th}}N_o}{•}}
\]
 \fi
\end{itemize}

The considered cost model can be applied to a wide range of cost metrics. In the following, we give some examples to illustrate how the cost model given by \ref{eq.TotalCost} can be applied:
\begin{itemize}
\item Energy criteria: the total consumed energy is the sum of the energy consumed by each selected relay. For example, if we denote by $P_i\left( s_i\right)$ the transmission power of relay $i$ when it is in state $s_i$, thus:
\[
C\left(\bm{\bm{s},\bm{a}}\right)=\sum \limits_{i \in \mathcal{K}| a_i=1}P_i\left(s_i\right)
\]
\item Incentive criteria: the total charged cost is equal to the sum of the incentive budget required by each selected relay. For example, one can denote by $L_i\left( s_i\right)$ the tokens used by relay $i$ in state $s_i$, thus:
\[
C\left(\bm{\bm{s},\bm{a}}\right)=\sum \limits_{i \in \mathcal{K}| a_i=1}L_i\left(s_i\right)
\]
\end{itemize}
\subsection{\gls{RMAB} representation}
The \gls{MU} relay selection procedure consists of choosing a limited set of mobile relays that optimizes its subjective function under cost constraints. However, the positions of the potential relays are partially known at the epoch of decision and turn to be more observable as time passes. This problem can be modeled by a \gls{RMAB} due to the three following reasons:
\begin{itemize}
\item It is a multi-armed bandit problem because the set of potential relays represents the arms among which the \gls{MU} will make its selection. 
\item It is a Markov multi-armed bandit problem because the location of the $i^{th}$ arm is modeled by a discrete, irreducible and aperiodic Markov chain with finite space $\mathcal{S}$ and transition probability matrix $\bm{P}_i$ (i.e. relays' locations are the states of this \gls{RMAB})
\item  It is restless because the potential relays change their location/state from a decision epoch to another independently from the current decision (whether they've been selected or not). 
\end{itemize}
The objective of the relay selection strategy is to optimize problem \ref{eq.Prob1} through a sequence of selection $1 \leq t\leq T$.  For each selection decision, there exists a trade-off between exploitation phase that tends to achieve the highest expected reward and exploration phase that tends to get more information concerning the positions of other relays. In the sequel, we describe the \gls{CPOMDP} formulation equivalent to the proposed \gls{RMAB} problem.\\

\subsection{\gls{CPOMDP} formulation \label{ssec:CPOMDP}}
In order to study the \gls{RMAB} defined above, we formulate its equivalent finite-horizon \gls{CPOMDP} which is characterized by the tuple $\left\langle \mathcal{\bm{S}}, \mathcal{A}, T\left(\right), \mathcal{Z}, O\left(\right), R, C,\bm{b}_0,T, C_{th}, \gamma \right\rangle$ defined below:
\begin{itemize}
\item \textbf{State} $\bm{s}=\left( s_1, s_2, ...., s_K \right)$ denotes the state vector or the location vector of the $K$ potential relays (where $s_i \in \mathcal{S}$ for all $1 \leq i \leq K$). $\bm{\mathcal{S}}$ represents the set of all possible state vectors with $|\bm{\mathcal{S}}|=|\mathcal{S}|^K$.

\item \textbf{Action} $\bm{a}=\left( a_1, a_2, ...., a_K \right) \in \lbrace 0,1\rbrace^K$ denotes the vector of the $K$ binary actions such that $a_i \in \lbrace 0 , 1 \rbrace$ specifies whether relay $i$ is selected ($a_i=1$) or not ($a_i=0$) for all $1 \leq i \leq K$. $\hat{a}=\lbrace i:a_i=1\rbrace$ represents the set of the indexes of the selected relays. $\mathcal{A}$ denotes the set of all action vectors, it contains $|\mathcal{A}| = \sum \limits _{i=1}^{K}{{K}\choose{i}}=2^K$ elements.   

\item \textbf{Transition function }$T\left(\bm{s},\bm{s}'\right): \bm{\mathcal{S}}\times  \bm{\mathcal{S}} \rightarrow \left[0,1 \right]$ represents the probability of transiting between states. $T\left(\bm{s},\bm{s}'\right)$ characterizes the probability of passing to state $\bm{s}'$ in the next decision epoch knowing that the current state is $\bm{s}$. Assuming that the relays' mobility are independent and described by their transition probability matrix, then $T\left(\bm{s},\bm{s}'\right)=\prod\limits_{i=1}^K P_i\left(s_i,s_i'\right)$. This probability is independent from the action $\bm{a}$ since the selection decision has a purely observational role (i.e. the relays change their locations independently from the fact that they have been selected or not). $\bm{T}$ represents the transition matrix of $|\bm{\mathcal{S}}|\times |\bm{\mathcal{S}}|$ elements. 

\item \textbf{Observation} $\bm{z}=\left( z_1, z_2, ...., z_K \right)$ denotes the observation vector of the $K$ potential relays. Selecting a relays leads to the observation of its state, hence when relay $i$ is not selected then $z_i=\emptyset$ otherwise $z_i=s_i$. The set of all observations is denoted by $\mathcal{Z}$. \iffalse The set of all observations $\mathcal{Z}$ contains $|\mathcal{Z}| = \sum \limits _{i=1}^{K}\mathrm{S}^i{{K}\choose{i}}$ elements.   \fi

\item \textbf{Conditional observation probability} $O\left(\bm{z}',\bm{s}',\bm{a}\right),: \mathcal{Z} \times \bm{\mathcal{S}} \times \mathcal{A} \rightarrow \left[0,1 \right]$ represents the probability of receiving an observation $\bm{z}^{'}\in Z$ knowing that the decision policy takes action $\bm{a} \in \mathcal{A}$ and by that transits to state $\bm{s}^{'} \in \bm{\mathcal{S}}$, we define:
\[O\left(\bm{z}^{'},\bm{s}^{'},\bm{a} \right)=O\left( \bm{z}^{'}|\bm{s}^{'},\bm{a}\right)\]
\[=Pr\left(\bm{z}_{t+1}=\bm{z}'|\bm{s}_{t+1}=\bm{s}', \bm{a}_{t}=\bm{a}\right)\]

\item \textbf{Reward} $R\left(\bm{s},\bm{a}\right)$ is the reward achieved by taking action $\bm{a}$ when the $K$ potential relays are in state $\bm{s}$. We suppose $R_{min} \leq R\left(\bm{s},\bm{a}\right)\leq R_{max}$ for all $\bm{s} \in \bm{\mathcal{S}}$ and $\bm{a} \in {\mathcal{A}}$.

\item \textbf{Cost} $C\left(\bm{s},\bm{a}\right)$  is the cost charged by taking action $\bm{a}$ when the $K$ potential relays are in state $\bm{s}$. We suppose $C_{min} \leq C\left(\bm{s},\bm{a}\right)\leq C_{max}$ for all $\bm{s} \in \bm{\mathcal{S}}$ and $\bm{a} \in {\mathcal{A}}$.

\item \textbf{Initial belief} $\bm{b}_0$ is a vector of $\bm{\mathcal{S}}$ elements that denotes the initial distribution probability of being at each state $\bm{s}\in \bm{\mathcal{S}}$. \iffalse We assume that due to the discovery process, \gls{MU} is able to deduce the current state of the discovered relays hence the initial belief is deduced.\fi

\item \textbf{Horizon} $T$ of the \gls{CPOMDP} represents the number of epochs of the relay selection policy between two successive discovery processes (i.e. $T$ is the discovery periodicity). 

\item $C_{th}$ the cost threshold.

\item $\gamma \in \left[ 0,1\right]$ as a discount factor.
\end{itemize}

The \gls{MU} chooses its action as function of the history of observations and actions that have been executed in the past. The result of  \cite{smallwood1973optimal} demonstrates that using the belief states for defining the optimal policy provides as much information as using the entire history of actions taken and observations received. Indeed, there is no need to explicitly save this history but having the current belief state is sufficient for deciding the upcoming actions.

Therefore, a \gls{POMDP} can be represented as a belief \gls{MDP} where every belief is a state. Since there is infinite belief states, this \gls{MDP} is defined over a continuous set of belief states. At a given epoch $t$, the \gls{MU} gathers all the information concerning the past decisions in the belief state defined as follows:
\begin{equation}
\label{eq.BeliefDef}
b_t\left( \bm{s} \right) := P\left( {\bm{s}}_t=\bm{s}| \bm{z}_t, \bm{a}_{t-1},\bm{z}_{t-1}, ... ,\bm{a}_0 \right) \forall \bm{s} \in {\bm{\mathcal{S}}}
\end{equation} 
The belief vector of $|{\bm{\mathcal{S}}}|$ elements is denoted by $\bm{b}_t=\left(b_t\left( \bm{s}_1\right),b_t\left(\bm{s}_2\right),...,b_t\left( \bm{s}_{|\bm{\mathcal{S}}|}\right) \right)$. By analogy, we define $b_t^i\left( {s} \right) :=P\left( {{s}}_i\left(t\right)={s}| \bm{z}_t, \bm{a}_{t-1},\bm{z}_{t-1}, ... ,\bm{a}_0 \right)$ as the belief state of the mobile relay $i$ and its corresponding belief vector: 
\[
\bm{b}_t^i=\left(b_t^i\left( S_1\right),b_t^i\left(S_2\right),...,b_t^i\left(S_{|{\mathcal{S}}|}\right) \right)
\]
Given the fact that the mobile relays move independently, the belief state $b^t\left( \bm{s} \right)$, at a given state $\bm{s}=\left(s_1,s_2,...,s_K\right) \in {\bm{\mathcal{S}}}$ and epoch $t$, can be deduced from the belief state of each relay as follows: 
\[
b_t\left( \bm{s} \right)=b_t{\left(s_1,s_2,...,s_K\right)}=\prod \limits_{i\in \mathcal{K}}b_t^i\left(s_i\right)
\] 

\iffalse
The corresponding belief \gls{MDP} consider a continuous set of states that represent the belief values. The belief vector $\bm{b}=\left(b_{\bm{s}_1},b_{\bm{s}_2},...,b_{\bm{s}_{\bm{\mathcal{S}}}}\right)$ represents the probability of being in each state vector $\bm{s}_i \in {\bm{\mathcal{S}}}$ for $1 \leq i \leq |{\bm{\mathcal{S}}}|$. Given the independence between the relays mobility, the belief of being in state ${\bm{s}_i} \in {\bm{\mathcal{S}}}$ with $\bm{s}_i=\left(s_1,s_2,...,s_K\right)$ is given by:
\[
b_{\bm{s}_i}=b_{\left(s_1,s_2,...,s_K\right)}=\prod \limits_{j=1}^Kb_j\left(s_j\right)
\]where $b_j\left(s_j\right)$ denotes the belief that relay $j$ is in state $s_j$. 
\fi
The belief state $\bm{b}^i_{t}$ of relay $i$ at epoch $t$ is recursively computed based on the previous belief state $\bm{b}^i_{t-1}$, previous action $\bm{a}_{t-1}$ and current observation $\bm{z}_{t}=\left(z_1\left(t\right),z_2\left(t\right),...,z_K\left(t\right)\right):$
\begin{equation}
\label{eq.UeBeliefUpdate}
 \bm{b}^i_{t} = \begin{cases}
\bm{b}^i_{t-1}\bm{P}_i  & \text{if } a_i\left(t-1\right)=0 \\
\bm{P}_i\left(z_i\left(t\right),:\right) & \text{if } a_i\left(t-1\right)=1
\end{cases} 
\end{equation}
where $\bm{P}_i\left(S_j,:\right) =\left(P_i\left(S_j,S_1\right),P_i\left(S_j,S_2\right), ..., P_i\left(S_j,S_{|\mathcal{S}|}\right)\right)$.
\iffalse
 maintains a certain belief concerning the state of each discovered relay. $b_s^j$ denotes the belief that relay $j$ is in state $s$ (with $s \in \mathcal{S}$). Therefore, the belief vector of relay $j$ is denoted by $\bm{b}^j\left(b_{S_1}^j,b_{S_2}^j,...,b_{S_{|\mathcal{S}|}}^j\right)$ and it represents the belief o $b_j\left(s_j\right)$ denotes the belief that relay $j$ is in state $s_j$.
\fi
Therefore, when the action $\bm{a}$ (with equivalent set $\hat{{a}}$)  is taken and the observation $\bm{z}$ is detected then the belief of being in state $\bm{s}=\left( s_1, s_2, ..., s_K\right) \in {\bm{\mathcal{S}}}$ is updated as follows:
%= {b}_{\bm{s}}^{\bm{a},\bm{z}}=={b}^{t+1}\left({\bm{s}}\right)={b}^{t+1}\left(s_1, s_2, ..., s_K\right)=\prod \limits_{i \in \mathcal{K}}b_i^{t+1}\left(s_i\right)
\begin{equation}
\label{eq.BeliefUpdate}
{b}_{\bm{z}}^{\bm{a}}\left({\bm{s}}\right)=\prod \limits _{i\in\hat{a}_t}P_i\left(z_i,s_i\right)\prod \limits _{j\in \mathcal{K}|\hat{a}_t}\sum \limits _ {s_j' \in \mathcal{S} } b_j\left(s'_j\right)P_j\left(s_j',s_j\right)
\end{equation}
The belief states pursue a Markov Process where the belief state at a given epoch depends on the belief, action and observation of the previous epoch with the following transition function:

\begin{equation}
\label{eq.Tb_bprime}
\tau\left(\bm{b},\bm{a},\bm{b}'\right)=\sum \limits_{\bm{z}\in Z}Pr\left(\bm{b}'|\bm{b},\bm{a},\bm{z}\right)Pr\left(\bm{z}|\bm{b},\bm{a}\right)
\end{equation}
 
\begin{equation}
\label{eq.Pz_ba}
\text{with }Pr\left(\bm{z}|\bm{b},\bm{a}\right)=\sum \limits_{\bm{s}'\in \mathcal{\bm{S}}} O\left(\bm{s}',\bm{a},\bm{z}\right)\sum \limits _{\bm{s}\in \mathcal{\bm{S}}}T\left(\bm{s},\bm{s}'\right)b\left(\bm{s}\right)
\end{equation}
\[
\text{and }P\left(\bm{b}'|\bm{b},\bm{a},\bm{z}\right)=\begin{cases}
1 & \text{ iff }  \bm{b}'=\bm{{b}}^{\bm{a}}_{\bm{z}}\\
0 & \text{ otherwise }
\end{cases}
\]
A policy $\pi\left(\bm{b}\right): \bm{b}\rightarrow\bm{a}$ is a function that determines the action $\bm{a}$ to take at each belief state $\bm{b}$. For a given initial policy $\bm{b}_0$, a policy $\pi$ is characterized by a value function $V^{r,\pi}\left(\bm{b}\right)$ for the reward evaluation and $V^{c,\pi}\left(\bm{b}\right)$ for the cost evaluation:
\begin{equation}
\label{eq.ValueFun_r}
V^{r,\pi}\left(\bm{b}_0\right)= \sum \limits_{t=1}^T\gamma^{t} \rho_r\left(\bm{b}_t,\pi\left( \bm{b}_t \right)\right) =\mathbb{E}\left[ \sum \limits_{t=1}^T\gamma^{t} R\left(\bm{s}_t,\bm{a}_t\right)|\bm{b}_0,\pi  \right]
\end{equation}
\begin{equation}
\label{eq.ValueFun_c}
V^{c,\pi}\left(\bm{b}_0\right)= \sum \limits_{t=1}^T\gamma^{t} \rho_c\left(\bm{b}_t,\pi\left( \bm{b}_t \right)\right) =\mathbb{E}\left[ \sum \limits_{t=1}^T\gamma^{t} C\left(\bm{s}_t,\bm{a}_t\right)|\bm{b}_0,\pi  \right]
\end{equation}
where the belief-based reward $\rho_r\left(\bm{b}, \bm{a}\right)$ and the belief-based cost $\rho_c\left(\bm{b}, \bm{a}\right)$ are: 
\begin{equation}
\label{eq.BeliefReward}
\rho_r(\bm{b},\bm{ a}) =\sum \limits _{\bm{s}\in \mathcal{\bm{S}}}b(\bm{s})R(\bm{s}, \bm{a})
\end{equation}
\begin{equation}
\label{eq.BeliefCost}
\rho_c(\bm{b},\bm{ a}) =\sum \limits _{\bm{s}\in \mathcal{\bm{S}}}b(\bm{s})C(\bm{s}, \bm{a})
\end{equation}
The value functions $V_t^{r,\pi}$ and $V_t^{c,\pi}$ at epoch $t$ and under policy $\pi$ are given by the \textit{Bellman} equation:
\begin{equation}
\label{eq.Bellman_r}
V_t^{r,\pi}\left(\bm{b}\right)=\rho_r\left(\bm{b}, \bm{a}_{\pi}\right)+\gamma\sum \limits _ {\bm{z}\in Z}Pr\left(\bm{z}|\bm{a}_{\pi},\bm{b}\right)V_{t+1}^{r,\pi}\left(\bm{b}^{\bm{a}_{\pi}}_{\bm{z}}\right)
\end{equation}
\begin{equation}
\label{eq.Bellman_c}
V_t^{c,\pi}\left(\bm{b}\right)=\rho_c\left(\bm{b}, \bm{a}_{\pi}\right)+\gamma\sum \limits _ {\bm{z}\in Z}Pr\left(\bm{z}|\bm{a}_{\pi},\bm{b}\right)V_{t+1}^{c,\pi}\left(\bm{b}^{\bm{a}_{\pi}}_{\bm{z}}\right)
\end{equation}
where $Pr\left(\bm{z}|\bm{a}_{\pi},\bm{b}\right)$ is given by equation (\ref{eq.Pz_ba}).

The action-value function $Q_t^{r,\pi}\left(\bm{b},\bm{a}\right)$ is the reward of taking action $\bm{a}$ at epoch $t$ and following policy $\pi$ thereafter:
\begin{equation}
\label{eq.Qfunction_r}
Q_t^{r,\pi}\left(\bm{b},\bm{a}\right)=\rho_r\left(\bm{b}, \bm{a}\right)+\gamma\sum \limits _ {\bm{z}\in Z}Pr\left(\bm{z}|\bm{a},\bm{b}\right)V_{t+1}^{r,\pi}\left(\bm{b}^{\bm{a}}_{\bm{z}}\right)
\end{equation}
The action-value function $Q_t^{c,\pi}\left(\bm{b},\bm{a}\right)$ is the cost of taking action $\bm{a}$ at epoch $t$ and following policy $\pi$ thereafter:
\begin{equation}
\label{eq.Qfunction_c}
Q_t^{c,\pi}\left(\bm{b},\bm{a}\right)=\rho_c\left(\bm{b}, \bm{a}\right)+\gamma\sum \limits _ {\bm{z}\in Z}Pr\left(\bm{z}|\bm{a},\bm{b}\right)V_{t+1}^{c,\pi}\left(\bm{b}^{\bm{a}}_{\bm{z}}\right)
\end{equation}
The objective of the optimal relay selection policy is to achieve the highest reward under cost constraints. The optimal policy $\pi^*$ maximizes the value function $V^{r,\pi}$ under constraints on $V^{c,\pi}$. There exists an optimal Markov policy which defines the action to be executed for each belief state supposing that the upcoming actions will be chosen in an optimal manner. The value of the optimal policy $\pi^*$ function $V_t^{*}\left(\bm{b}\right)$ satisfies the Bellman optimal equation:
\[V_t^{r,*}\left(\bm{b}\right)=\max \limits _{\bm{a}}Q_t^{r,\pi}\left(\bm{b},\bm{a}\right)\]
\begin{equation}
\label{eq.BellmanOpt}=
\max \limits _{\bm{a}} \left[ \rho_r\left(\bm{b}, \bm{a}\right)+\gamma\sum \limits _ {\bm{z}\in Z}Pr\left(\bm{z}|\bm{a},\bm {b}\right)V_{t+1}^{r,*}\left(\bm{b}^{\bm{a}}_{\bm{z}}\right) \right]
\end{equation}
s.t.
\[
V_t^{c,*}\left(\bm{b}\right)= \left[ \rho_c\left(\bm{b}, \bm{a}\right)+\gamma\sum \limits _ {\bm{z}\in Z}Pr\left(\bm{z}|\bm{a},\bm{b}\right)V_{t+1}^{c,*}\left(\bm{b}^{\bm{a}}_{\bm{z}}\right) \right]\leq C_{th}
\]
We define $\mathcal{L}$ the Bellman dynamic operator: \[ V_t^{*}\left(\bm{b}\right)=\mathcal{L}V_{t+1}^{*}\left(\bm{b}\right)\]

\section{Relay Selection Policies \label{sec:RS_Policies}}
\subsection{Exact Solution of \gls{CPOMDP} \label{subsec:RS_PWLC}}
\cite{smallwood1973optimal} shows that the value function of \gls{POMDP} is a \gls{PWLC} function over the infinite belief simplex (denoted by $\Delta$). Thus, the value function can be characterized by a finite set of value function vectors (hyperplans) \cite{Cassandra98exactand}. Each vector represents the optimal value of the value function in a given region of the belief space. The exact solution of the \gls{CPOMDP}, proposed in \cite{isom2008piecewise}, is inspired from the value iteration of \gls{POMDP}. This technique presents the value functions $V_t^{r,\pi}$ and $V_t^{c,\pi}$ as \gls{PWLC} functions over the infinite belief simplex $\Delta$. Therefore, the reward and cost value functions $V_t^{r,\pi}$ and $V_t^{c,\pi}$ are represented as a finite set of $\alpha$-vectors pairs $\left(\alpha_r^{i},\alpha_c^{i}  \right)$.

Considering $\mathcal{V}_t$ the set of $\alpha$-vectors pairs at epoch $t$, then the set $\mathcal{V}_{t+1}$ at the following epoch is constructed  by applying the following dynamic programming operator $\mathcal{L}$ over all the action-observation pairs: 
\footnote{The cross sum operator $\oplus$ of two sets $\mathcal{A}$ and $\mathcal{B}$ is given by the set : $\mathcal{A} \oplus \mathcal{B}= \lbrace a+b | a \in A , b\in B\rbrace$.}
 \[
 \Gamma_r^{\bm{a},*} \leftarrow \alpha_r^{\bm{a},*}\left(\bm{s} \right) = R \left( \bm{s},\bm{a} \right) \text{ and }  \Gamma_c^{\bm{a},*} \leftarrow \alpha_c^{\bm{a},*}\left(\bm{s} \right) = C \left( \bm{s},\bm{a} \right) 
 \]
 \[
 \Gamma_r^{\bm{a},\bm{z}} \leftarrow \alpha_{r,i}^{\bm{a},z}\left(\bm{s} \right) =\gamma \sum \limits _{\bm{s}' \in \bm{\mathcal{S}}}T \left( \bm{s},\bm{s}' \right)O\left( \bm{z},\bm{s}',\bm{a} \right)\alpha_r^i\left(\bm{s'}\right)
 \]
 \[\resizebox{0.9\hsize}{!}{$\hspace{-20pt}\Gamma_c^{\bm{a},\bm{z}} \leftarrow \alpha_{c,i}^{\bm{a},z}\left(\bm{s} \right) =\gamma \sum \limits _{\bm{s}' \in \bm{\mathcal{S}}}T \left( \bm{s},\bm{s}'\right)O\left( \bm{z},\bm{s}',\bm{a} \right)\alpha_c^i\left(\bm{s}'\right), \forall \left(\alpha_r^i,\alpha_c^i\right) \in \mathcal{V}_t$}
\]
 Thus,
 \[
 \Gamma_r^{\bm{a}}= \Gamma_r^{\bm{a},*}+ \oplus_{\bm{z}\in\mathcal{Z}} \Gamma_r^{\bm{a},\bm{z}}
\text{ and }  \Gamma_c^{\bm{a}}= \Gamma_c^{\bm{a},*}+ \oplus_{\bm{z}\in\mathcal{Z}} \Gamma_c^{\bm{a},\bm{z}}
 \]
This dynamic programming update produces $\mathcal{V}_{t+1}$:
\[
\mathcal{V}_{t+1}\leftarrow \mathcal{LV}_t =\bigcup \limits_{\bm{a}\in A}\left( \Gamma_c^{\bm{a}}, \Gamma_r^{\bm{a}}\right)
\]
%The value function is given by 
%\[
%V^{r,*}_{t+1}\left( \bm{b} \right) = \max \limits_{\left( \alpha_r^i, \alpha_c^i\right) \in \mathcal{V}_{t+1}} \lbrace \bm{b .} %\alpha_r^i :  \bm{b.}\alpha_c^i \leq C_{th} \rbrace
%\]
At the worst case, each dynamic programming update will generate an exponentially increasing  $|\mathcal{V}_{t+1}|=|\mathcal{A}||\mathcal{V}_t|^{|\mathcal{Z}|}$ pairs of $\alpha$-vectors. However, some pairs of $\alpha$-vectors are never the optimal one in any region of the belief simplex (i.e. called useless vectors). Therefore, for mitigating this exponential explosion, different pruning algorithms were developed in order to exclude these useless vectors. For \gls{POMDP} problem, \cite{Cassandra98exactand} have proposed an algorithm that identifies these useless vectors in order to ignore them. For \gls{CPOMDP} a pruning operation for the value functions were proposed in \cite{isom2008piecewise} in order to generate the minimal set of pairs of $\alpha$-vectors at each iteration. The pruning function $Prune$ that we adopt for implementing this solution consists of keeping each pair $\left( \alpha_r^i, \alpha_c^i\right) \in \mathcal{V}_{t}$ that satisfies the cumulative cost constraint $\alpha_c^i . \bm{b}\leq C_{th}$ and that have the higher cumulative reward $\alpha_r^i . \bm{b}$ in some region of the belief simplex $\Delta$. This can be decided by solving a \gls{MILP} for each pair of $\alpha$-vector (one can refer to equation (3) in \cite{kim2011point}). This pruning method eliminates each pair of vectors $\left( \alpha_r^i, \alpha_c^i\right)$ that violates the cumulative cost constraint at a given iteration $t$. Note that this operation may lead to a suboptimal policy. Therefore, randomized policies are the subject of further study to achieve the optimal solution of \gls{CPOMDP}.

As presented in algorithm \ref{Algo_PWLC},  the iteration $t$ of the exact solution of \gls{CPOMDP} follows this procedure: (i) exact dynamic programming to generate the pairs $\alpha$-vectors, (ii) pruning operation, based on a mixed integer linear program, that produces the minimal set of $\alpha$-vectors, (iii) deducing the optimal value function.

\begin{algorithm}
\caption{Iteration of \gls{CPOMDP} Exact Solution}\label{Algo_PWLC}
\begin{algorithmic}[1]
\State \textbf{Input:} $\alpha$-vector set $\mathcal{V}_t$,  Actions $\mathcal{A}$, States $\bm{\mathcal{S}}$, Observations $\mathcal{Z}$, Reward function $R\left(\bm{s},\bm{a}\right)$, cost function $C\left(\bm{s},\bm{a}\right)$, Cost threshold $C_{th}$ 
%\State $t=1$ ,  $\mathcal{V}_1=\emptyset$
%\For {$t=1:H$} 
\For {$\bm{a} \in \mathcal{A}$}
\State $\left( \alpha_r^{\bm{a},*},\alpha_c^{\bm{a},*} \right)\leftarrow \left( R\left(.,\bm{a}\right),C\left(.,\bm{a}\right)\right)$
\For{$\bm{z} \in \mathcal{Z}$}
\For{$\left( \alpha_r^i,\alpha_c^i \right) \in \mathcal{V}_{t}$}
\State  $\alpha_{r,i}^{\bm{a},\bm{z}}\left( \bm{s} \right)=\gamma\sum \limits_{\bm{s}' \in \bm{\mathcal{S}}}T\left( \bm{s},\bm{s}' \right)O\left( \bm{z},\bm{s}',\bm{a} \right)\alpha_r^i\left( \bm{s}'\right)$
\State  $\alpha_{c,i}^{\bm{a},\bm{z}}\left( \bm{s} \right)=\gamma\sum \limits_{\bm{s}' \in \bm{\mathcal{S}}}T\left( \bm{s},\bm{s}' \right)O\left( \bm{z},\bm{s}',\bm{a}  \right)\alpha_c^i\left(\bm{s}'\right)$
\State $\Gamma_r^{\bm{a},\bm{z}}=\Gamma_r^{\bm{a},\bm{z}} \cup  \alpha_{r,i}^{\bm{a},\bm{z}} $ and $\Gamma_c^{\bm{a},\bm{z}}=\Gamma_c^{\bm{a},\bm{z}} \cup  \alpha_{c,i}^{\bm{a},\bm{z}} $ 
\EndFor
\EndFor
\State $\Gamma^{\bm{a}}_r=  \alpha_r^{\bm{a},*}+\bigoplus \limits_{\bm{z}\in\mathcal{Z}} \Gamma_r^{\bm{a},\bm{z}} $ and $\Gamma^{\bm{a}}_c=  \alpha_c^{\bm{a},*}+\bigoplus \limits_{\bm{z}\in\mathcal{Z}} \Gamma_c^{\bm{a},\bm{z}} $
\EndFor
\State $\mathcal{V}_{t+1}= \mathcal{V}_{t+1}  \bigcup Prune\left( \bigcup \limits_{\bm{a}} \left( \Gamma_r^{\bm{a}},\Gamma_c^{\bm{a}}\right) \right) $
%\EndFor
%\State \textbf{Output:} $V_H^{r,*}\left( \bm{b} \right)=\max \limits_{\left( \alpha_r^i, \alpha_c^i\right) \in \mathcal{V}_{H}} \lbrace  \alpha_r^i . \bm{b } :  \alpha_c^i . \bm{b} \leq C_{th} \rbrace$
\State \textbf{Output:} $\mathcal{V}_{t+1}$
\end{algorithmic}
\end{algorithm}
Due to \gls{PWLC} propriety of the value function, the value iteration algorithm of \gls{CPOMDP} is limited to find the set of $\alpha$-vector pairs $\mathcal{V}_{t+1}$ that represents the value functions $V_{t+1}^r$ and  $V_{t+1}^c$ given the previous set $\mathcal{V}_{t}$. Constructing the set of hyperplans $\mathcal{V}_{t+1}$ by considering all the possible pairs of observations and actions over the previous set $\mathcal{V}_t$ has an exponential complexity of $\bigO{ |\mathcal{A}||\mathcal{V}_t|^{|\mathcal{Z}|}}$ (i.e. considering the states it gives $\bigO{|\bm{{\mathcal{S}}}|^2|A||V_t|^{|\mathcal{Z}|}}$). Since many pairs of vectors in $\mathcal{V}_t$ are dominated by others, pruning algorithms are developed in order to eliminate useless vectors and find the smallest subset sufficient for representing the value functions. Given the exponential complexity increase at each iteration, the necessity of pruning operations is obvious. However, pruning techniques consists on resolving a Mixed-Integer Linear Program \gls{MILP} for each pair of $\alpha$-vector. Therefore, value iteration algorithm for \gls{CPOMDP} remains computationally demanding to solve as the size of the problem increases. This requires the exploration of approximated solutions for finding the optimal solution of \gls{CPOMDP}.

\subsection{\gls{CPBVI} \label{subsec:RS_CPBVI}}
Since the value iteration algorithms for \gls{POMDP} do not scale to highly sized real problems, an approximate \gls{POMDP} planning solution called \gls{PBVI} was introduced in \cite{Pineau2003}. Most \gls{POMDP} problems unlikely reach most of the points in the belief simplex $\Delta$. Thus, it is preferable to focus the planning on the most probable belief points without considering all the possible belief points as exact algorithms do. Instead of considering the entire belief simplex, \gls{PBVI} limits the value update to a representative small set of  belief points $\mathcal{B}=\lbrace \bm{b}_0,\bm{b}_1,...,\bm{b}_q \rbrace$.  An $\alpha$-vector is initialized for each belief point and then the value of this vector is iteratively updated. The \gls{PBVI} algorithm can be simply adapted to solve \gls{CPOMDP} problem (e.g. \cite{kim2011point}). 

\subsubsection{\gls{CPBVI} algorithm}
We call \gls{CPBVI} the proposed suboptimal algorithm of relay selection inspired from \gls{PBVI}. Algorithm \ref{Algo_PWLC} is modified in such a way that the value function update is restrictively done over a finite belief set $\mathcal{B}$ and then the pruning algorithm chooses the dominated pair of vectors for each belief state $\bm{b} \in \mathcal{B}$. This allows the \gls{CPBVI} algorithm to achieve much better scalability.  At each iteration, \gls{CPBVI} follows the following steps for computing the set of $\alpha-$vectors $\mathcal{V}^{B}_{t+1}$ at epoch $t+1$ given the previous one  $\mathcal{V}^{B}_{t}$:
\begin{enumerate}
\item The first step is to generate the sets $\Gamma_r^{\bm{a},\bm{z}}$ and $\Gamma_c^{\bm{a},\bm{z}}$ for all $\bm{a} \in \mathcal{A}$, all $\bm{z} \in \mathcal{Z}$ and all pairs of $\alpha$-vectors $\left(\alpha_r^i,\alpha_c^i\right) \in \mathcal{V}^{B}_{t}$:
 \[
 \Gamma_r^{\bm{a},\bm{z}} \leftarrow \alpha_r^{\bm{a},\bm{z}}\left(\bm{s} \right) =\gamma \sum \limits _{\bm{s}' \in \bm{\mathcal{S}}}T \left( \bm{s},\bm{s}' \right)O\left( \bm{z},\bm{s}',\bm{a}  \right)\alpha_r^i\left(\bm{s'}\right)
 \]
 \[
 \Gamma_c^{\bm{a},\bm{z}} \leftarrow \alpha_c^{\bm{a},\bm{z}}\left(\bm{s} \right) =\gamma \sum \limits _{\bm{s}' \in \bm{\mathcal{S}}}T \left( \bm{s},\bm{s}'\right)O\left( \bm{z},\bm{s}',\bm{a}  \right)\alpha_c^i\left(\bm{s'}\right)
 \]
 \item The next step is to generate the sets $\Gamma_r^{\bm{a}}$ and $\Gamma_c^{\bm{a}}$ for all $\bm{a} \in \mathcal{A}$:
\[
\Gamma^{\bm{a}}_r= R\left(.,\bm{a}\right)+\bigoplus \limits_{\bm{z}\in\mathcal{Z}} \Gamma_r^{\bm{a},\bm{z}} \text{ and } \Gamma^{\bm{a}}_c= C\left(.,\bm{a}\right)+\bigoplus \limits_{\bm{z}\in\mathcal{Z}} \Gamma_c^{\bm{a},\bm{z}}
\]
Contrarily to \gls{PBVI} algorithm used for approximately solving \gls{POMDP}, we need the cross-summation overall the possible observations to find the sets $\Gamma^{\bm{a}}_r$ and $\Gamma^{\bm{a}}_c$. This cross-summation is mandatory in order to not impose the cost constraint on each action and observation pair while computing the best pair of $\alpha$-vector for each belief state $\bm{b}$. A local combinatorial explosion of $|\mathcal{A}||\mathcal{B}|^{|\mathcal{Z}|}$ is required.
\item The final step is the pruning operation. In our algorithm we propose to find one optimal pair of $\alpha$-vectors $\left(\alpha_r^{\bm{b}},\alpha_c^{\bm{b}}\right)$ for each belief point $\bm{b} \in \mathcal{B}$ as follows:
\[
\left(\alpha_r^{\bm{b}},\alpha_c^{\bm{b}}\right) = \argmax \limits_{\alpha_r \in \Gamma_r^{\bm{a}}, \alpha_c \in \Gamma_c^{\bm{a}}} \lbrace \alpha_r.\bm{b} : \alpha_c.\bm{b} \leq C_{th} \rbrace
\]
Note that  the proposed deterministic policy ensures the satisfaction of the cumulative cost constraint \ref{eq.Prob1} at each epoch. Thus, such deterministic policies can be sub-optimal for \gls{CPOMDP} (by analogy to \gls{CMDP}). Ideally, randomized policies that consider convex combination of $\alpha$-vectors during the pruning operation can be applied to guarantee optimality. 
\item Finally $\mathcal{V}^{B}_{t-1}=\bigcup \limits_{\bm{b} \in \mathcal{B}} \left(\alpha_r^{\bm{b}},\alpha_c^{\bm{b}}\right) $.
\end{enumerate}
\begin{algorithm}
\caption{\gls{CPBVI} Iteration}\label{algo.CPBVI}
\begin{algorithmic}[1]
\State \textbf{Input:} Pair of $\alpha$-vectors $ \mathcal{V}^{B}_{t}$, Actions $\mathcal{A}$, States $\bm{\mathcal{S}}$, Observations $\mathcal{Z}$, Rewards $R\left(\bm{s},\bm{a} \right)$, costs $C\left(\bm{s},\bm{a} \right)$, Belief subset $\mathcal{B}$, thresholds $C_{th}$ 
%\State $t=h-1$ ,  $\Gamma_h=\emptyset$
%\While {$t >1$}
%\State $\Gamma_t=\emptyset$ ,  $V=\emptyset$ 
\For{$\bm{a} \in \mathcal{A}$}
%\State $\Gamma_a=\emptyset$
\State $\left( \alpha_r^{\bm{a} ,*},\alpha_c^{\bm{a} ,*} \right)\leftarrow \left( R\left(.,\bm{a} \right),C\left(.,\bm{a} \right)\right)$
\For{$\bm{z} \in \mathcal{Z}$}
%\State $\Gamma_{a,z}=\emptyset$ 
\For{$\left( \alpha_r^i,\alpha_c^i \right) \in \mathcal{V}^B_{t}$}
\State  $\alpha_{r,i}^{\bm{a} ,\bm{z} }\left(\bm{s} \right)=\gamma\sum \limits_{\bm{s}' \in \bm{\mathcal{S}}}T\left( \bm{s} ,\bm{s}' \right)O\left( \bm{s} ',\bm{a} ,\bm{z}  \right)\alpha_r^i\left( \bm{s} '\right)$
\State  $\alpha_{c,i}^{\bm{a} ,\bm{z}}\left( \bm{s} \right)=\gamma\sum \limits_{\bm{s} ' \in \bm{\mathcal{S}} }T\left( \bm{s} ,\bm{s} ' \right)O\left( \bm{s} ',\bm{a} ,\bm{z}  \right)\alpha_c^i\left( \bm{s} '\right)$
\State $\Gamma_r^{\bm{a} ,\bm{z} }=\Gamma_r^{\bm{a} ,\bm{z} } \cup  \alpha_{r,i}^{\bm{a} ,\bm{z} }$  and $\Gamma_c^{\bm{a} ,\bm{z} }=\Gamma_c^{\bm{a} ,\bm{z} } \cup  \alpha_{c,i}^{\bm{a} ,\bm{z} }$ 
\EndFor
\EndFor
\State $\Gamma_r^{\bm{a}}=  \alpha_r^{\bm{a},*}+\bigoplus \limits_{\bm{z} \in \mathcal{Z}} \Gamma_r^{\bm{a}, \bm{z}}$ and $\Gamma_c^{\bm{a}}=  \alpha_c^{\bm{a},*}+\bigoplus \limits_{\bm{z} \in \mathcal{Z}} \Gamma_c^{\bm{a}, \bm{z}}$
\EndFor
%\State $V= V \bigcup \limits_a \Gamma_a $
\For{$\bm{b} \in {\mathcal{B}}$}
%\Switch {Problem}
%\Case{1} 
\State $\left( \alpha_r^{\bm{b}},\alpha_c^{\bm{b}}\right) =  \argmax \limits_{ \alpha_r \in \Gamma_r^{\bm{a}}; \alpha_c \in \Gamma_c^{\bm{a}} } \sum \limits_{\bm{s} \in \bm{\mathcal{S}}} \alpha_r\left( \bm{s}\right)b\left(\bm{s}\right)$ s.t. $\sum \limits_{\bm{s} \in \bm{\mathcal{S}}} \alpha_c\left( \bm{s}\right)b\left(\bm{s}\right) \leq C_{th}$
%\EndCase
%\Case{2}{}
%\State $\left( \alpha_r^b,\alpha_c^b\right) =  \argmin \limits_{\left( \alpha_r, \alpha_c \right) \in V} \sum \limits_{s \in S} \alpha_c\left( s\%right)b\left(s\right)$ s.t. $\sum \limits_{s \in S} \alpha_r\left( s\right)b\left(s\right) \geq R_{th}$
%\EndCase
%\EndSwitch
\State $\mathcal{V}^B_{t+1}=\mathcal{V}^B_{t+1}\bigcup \left( \alpha_r^{\bm{b}},\alpha_c^{\bm{b}}\right)$
\EndFor
%\State $t=t-1$
%\EndWhile
\State \textbf{Output:} $\mathcal{V}^B_{t+1}$
\end{algorithmic}
\end{algorithm} 

In order to study the complexity of each iteration of the \gls{CPBVI} algorithm, we detail the complexity of each step. Step 1 creates $|\mathcal{A}||\mathcal{Z}||\mathcal{B}|$ pair of vectors because the  previous set of $\alpha$-vector is limited to $|\mathcal{B}|$ components. The cross sum in the second step generates $\bigO{|\mathcal{A}||\mathcal{B}|^{|\mathcal{Z}|}}$ operations. Since the size of the $\alpha$-vector set remains constant (equals to $|\mathcal{B}|$), each \gls{CPBVI} update takes only polynomial time to be executed. This complexity is linear with the number of possible actions $|\mathcal{A}|$. However, in our settings, $|\mathcal{A}|=2^K$ leads to a poor scalability of our algorithm  in the number of potential relays $K$. Therefore, we propose, in subsection \ref{subsec:RS_GCPBVI}, a greedy version of the \gls{CPBVI} algorithm.

\subsubsection{Performance Evaluation}
A belief set $\mathcal{B}$ is characterized by a density $\epsilon^{\mathcal{B}}$ which is the maximum distance from any point in the belief simplex $\Delta$ to the set $\mathcal{B}$. 
\begin{definition}
The density $\epsilon^{\mathcal{B}}$ of a belief set $\mathcal{B}$ is defined as:
\begin{equation}
\label{eq.DensityB}
 \epsilon^{\mathcal{B}}:=\max \limits_{\bm{b}'\in \Delta}\min \limits_{\bm{b}\in \mathcal{B}} {||\bm{b}-\bm{b}'||}_1
\end{equation} 
\end{definition}
We denote by $V^{r,B}_t$ and $V^{c,B}_t$ the reward and cost value functions produced by the proposed \gls{CPBVI} algorithm. From theorem 1 of \cite{Pineau2003}, we know that for a belief set $\mathcal{B}$ of density $\epsilon^{\mathcal{B}}$, the errors $\eta^r_t$ and $\eta_t^c$ of the \gls{CPBVI} algorithm at horizon $t$ are bounded as follows:
\[
\eta_t^r:={||V^{r,B}_t-V_t^{r,*}||}_\infty \leq \frac{\left(R_{max}-R_{min}\right)\epsilon^{\mathcal{B}}}{{\left(1-\gamma\right)}^2}
\]
\begin{equation}
\label{eq.PBVI_Error}
\eta_t^c:={||V^{c,B}_t-V_t^{c,*}||}_\infty \leq \frac{\left(C_{max}-C_{min}\right)\epsilon^B}{{\left(1-\gamma\right)}^2}
\end{equation}

One can remark that this result does not take into account the case where the discount factor is equal to $1$. Hence, we extend the result to the case where $\gamma=1$.
\begin{theorem}
At horizon $h$ and discount factor $\gamma=1$, the errors $\eta^r_h$ and $\eta_h^c$ of applying the \gls{CPBVI} algorithm over a belief set $\mathcal{B}$ of density $\epsilon^{\mathcal{B}}$ are bounded as follows:
\[
\eta_h^r \leq \sum\limits_{t=1}^h t\left( R_{max}-R_{min}\right)\epsilon^{\mathcal{B}}
= \frac{h\left( h+1\right)}{2}\left( R_{max}-R_{min}\right)\epsilon^{\mathcal{B}}\]
\begin{equation}
\label{eq.PBVI_Error_Gamma1}
\eta_h^c \leq \sum\limits_{t=1}^h t\left( C_{max}-C_{min}\right)\epsilon^{\mathcal{B}}= \frac{h\left( h+1\right)}{2}\left( C_{max}-C_{min}\right)\epsilon^{\mathcal{B}}
\end{equation}
\end{theorem}

\subsubsection{Belief set $\mathcal{B}$ Selection}
The bound calculation of the error that occurs with the application of the \gls{CPBVI} algorithm shows the importance of the selected belief set $\mathcal{B}$. Indeed, these bounds are proportional to the density $\epsilon^{\mathcal{B}}$ of the chosen belief set $\mathcal{B}$. For this reason we show how to find the belief set, denoted by $\mathcal{B}^{\epsilon}$, that limits the upper bounds  $\eta_h^r$ and $\eta_h^c$ to a small $\epsilon$. It is obvious that a trade-off exists between the size of the belief set $\mathcal{B}$ and the precision of the value functions $\epsilon$.

We denote by $\bar{\Delta}^h$ the set of reachable belief states in our settings with an horizon $h$. Therefore, for a relay $i$, the corresponding set of reachable beliefs $\bar{\Delta}^h_i$ is given by:
\begin{equation}
\label{eq.Delta_i}
\bar{\Delta}^h_i=\lbrace{\bm{P}^t_i\left(S_j,: \right): t=\lbrace 1,...,h \rbrace \text{ and } S_j \in \mathcal{S} \rbrace}
\end{equation}
where $\bm{P}_i^t$ correspond to the transition matrix $\bm{P}_i$ to the power of $t$ (i.e. the transition matrix from a state to another after $t$ epochs). Therefore, the set $\bar{\Delta}^h_i$ contains $h\times |\mathcal{S}|$ vectors of  $|\mathcal{S}|$  elements. Thus, considering all the candidate relays $K$, the size of the overall reachable belief points for an horizon $h$ is given by $|\bar{\Delta}^h|={h |\mathcal{S}|}^K$ . 

It is clear that, for an horizon $h$, considering the set of reachable belief points $\bar{\Delta}^h$ as the belief set $\mathcal{B}$ of the \gls{CPBVI} algorithm minimizes the errors $\eta_h^r$ and $\eta_h^c$ (i.e. because all the reachable belief points were taken into account in $\mathcal{B}$). However, due to the exponentially large size of the set $\bar{\Delta}^h$, we propose to construct a smaller set of belief points $\mathcal{B}^{\epsilon}$ that is sufficient for upper bounding the performance errors $\eta_h^r$ and $\eta_h^c$ by a small $\epsilon > 0$. \iffalse For this aim, we start by finding the expression of the density $\epsilon^{\mathcal{B}}$ of the belief set $\mathcal{B}$ and by that deducing the belief set $\mathcal{B}^{\epsilon}$ \fi We denote by $\mathcal{N}_t \left( \bm{s} \right)$ the set of reachable states after passing $t$ epochs and knowing that the initial state is $\bm{s}$ (i.e. $\mathcal{N}_1\left( \bm{s} \right)$ represents the neighbor states of $ \bm{s}$).
For constructing the belief set $\mathcal{B}^{\epsilon}$, we give the following definition.
\begin{definition}
We define the $h$-belief set $\mathcal{B}\left( \bm{s},h\right)$ for an initial state $ \bm{s}_0$ and an horizon $h$ as follows:
\begin{equation}
\label{def.BeliefSet}
\mathcal{B}\left( \bm{s},h\right)=\lbrace \bigcup \limits_{t=1}^h \bm{T}^n\left(\mathcal{N}_t\left( \bm{s}\right) ,: \right): n=1,...,h-t  \rbrace
\end{equation}
with $\bm{T}^n$ the transition matrix $\bm{T}$ power to $n$. 
\end{definition}
We emphasize that the size of a set $|\mathcal{B}\left( \bm{s}_0,h\right)|$ is bounded by $|\mathcal{B}\left( \bm{s}_0,h\right)| \leq \sum \limits_{t=1}^h \left(h-t\right){\mathcal{N}_1\left( \bm{s}_0 \right)}^{t-1}$.

\begin{theorem}
\label{th.DensityBound}
The density $\epsilon^{\mathcal{B}}$ of an $h$-belief set $\mathcal{B}\left( \bm{s},h\right)$ is bounded by:
\begin{equation}
\label{eq.DensityBound}
\epsilon^{\mathcal{B}}\left( \mathcal{B}\left( \bm{s},h\right) \right) \leq 2\sum_{i=1}^K\frac{\lambda_i^{*h}}{\pi_{i,min}}
\end{equation}
where $\lambda_i^*$ is the highest eigenvalue of the transition matrix ${\bm{P}}_i$ of relay $i$ and $\pi_{i,min}=\min \limits_{{s} \in {\mathcal{S}}}{\pi}_i\left( {s} \right)$ with $\pi_i$ the stationary distribution corresponding to ${\bm{P}}_i$.
\end{theorem}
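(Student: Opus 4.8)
The plan is to combine two ingredients: the product structure of the belief states induced by the mutually independent relay mobility, and the geometric ergodicity of each relay's chain, whose mixing rate is governed by its spectral gap. \emph{Step 1 (reduction to per-relay errors).} Because the relays move independently, every belief the agent can hold factorizes as $b\left(\bm{s}\right)=\prod_{i=1}^K b^i\left(s_i\right)$, and so does every vector of $\mathcal{B}\left(\bm{s},h\right)$: this follows from $\bm{T}^n=\bm{P}_1^n\otimes\cdots\otimes\bm{P}_K^n$ and from $\mathcal{N}_t\left(\bm{s}\right)=\prod_{i=1}^K\mathcal{N}_t^i\left(s_i\right)$, so that a row $\bm{T}^n\left(\bm{s}',:\right)$ with $\bm{s}'\in\mathcal{N}_t\left(\bm{s}\right)$ has $i$-th marginal $\bm{P}_i^n\left(s_i',:\right)$ with $s_i'$ reachable by relay $i$ in $t$ steps; call $\mathcal{B}_i$ the collection of these marginals. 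A standard hybrid (telescoping) estimate for product distributions gives $\left\|\prod_i b^i-\prod_i \tilde b^i\right\|_1\le\sum_{i=1}^K\left\|b^i-\tilde b^i\right\|_1$, using that each factor has unit $\ell_1$-mass and that $\ell_1$ is multiplicative over tensor products. Hence it suffices to show that for each relay $i$ and each reachable marginal $b'^i$ there is $\tilde b^i\in\mathcal{B}_i$ with $\left\|b'^i-\tilde b^i\right\|_1\le 2\lambda_i^{*h}/\pi_{i,\min}$: choosing such approximants coordinatewise assembles a single member $\tilde{\bm{b}}\in\mathcal{B}\left(\bm{s},h\right)$ with $\left\|\bm{b}'-\tilde{\bm{b}}\right\|_1\le 2\sum_{i=1}^K\lambda_i^{*h}/\pi_{i,\min}$, which yields the claimed bound on $\epsilon^{\mathcal{B}}$ after the $\max$/$\min$ in \eqref{eq.DensityB}.

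\emph{Step 2 (approximating a single marginal).} Fix relay $i$. By the belief recursion \eqref{eq.UeBeliefUpdate}, a reachable marginal has the form $b'^i=\bm{P}_i^{k}\left(\hat s,:\right)$, where $\hat s$ is the state at the last epoch relay $i$ was observed and $0\le k\le h$ is its staleness. Writing $\bm{P}_i^{k}\left(\hat s,:\right)=\sum_{s'}\bm{P}_i^{k-n}\left(\hat s,s'\right)\bm{P}_i^{n}\left(s',:\right)$ for a suitable $n\le k$ exhibits $b'^i$ as a convex combination of rows $\bm{P}_i^{n}\left(s',:\right)$ that lie in $\mathcal{B}_i$ by Step 1, so $b'^i\in\mathrm{conv}\left(\mathcal{B}_i\right)$ and is within the $\ell_1$-diameter of that sub-family from any single member of it. The worst case, over all reachable beliefs within the horizon, is the one of maximal staleness: there $b'^i$ and the available rows of $\mathcal{B}_i$ are all $t$-step distributions of the chain $\bm{P}_i$ with $t$ of order $h$, hence all $\ell_1$-close to the stationary vector $\pi_i$, and the triangle inequality through $\pi_i$ produces the factor $2$.

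\emph{Step 3 (spectral mixing estimate and conclusion).} It remains to prove that for an irreducible aperiodic chain $\bm{P}_i$ with subdominant eigenvalue $\lambda_i^*$ (its largest eigenvalue modulus strictly below $1$), $\left\|\bm{P}_i^{n}\left(s',:\right)-\pi_i\right\|_1\le\lambda_i^{*n}/\pi_{i,\min}$ for every $s'$. This follows from the spectral expansion $\bm{P}_i^{n}\left(s',s''\right)=\pi_i\left(s''\right)+\sum_{j\ge 1}\lambda_{i,j}^{n}u_j\left(s'\right)v_j\left(s''\right)$, bounding the suitably normalized left and right eigenvector entries by $1/\sqrt{\pi_{i,\min}}$ and summing over $s''$ (equivalently, one invokes the classical textbook bound on the mixing of a finite ergodic chain in terms of its spectral gap). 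Plugging this into Steps 2 and 1 gives $\epsilon^{\mathcal{B}}\left(\mathcal{B}\left(\bm{s},h\right)\right)\le 2\sum_{i=1}^K\lambda_i^{*h}/\pi_{i,\min}$. The delicate points are this Step 3 — getting the single power $1/\pi_{i,\min}$ rather than $1/\sqrt{\pi_{i,\min}}$ or a state-count-dependent constant, which for a non-reversible $\bm{P}_i$ needs care with the eigendecomposition (a Jordan-form argument if $\bm{P}_i$ is not diagonalizable) — together with the bookkeeping in Step 2 that the per-relay staleness indices can be reconciled within a common horizon-$h$ budget so that the coordinatewise approximants really form an admissible row of some $\bm{T}^n$ over $\mathcal{N}_t\left(\bm{s}\right)$; the short-staleness (near point-mass) marginals, which are not close to $\pi_i$, are then a secondary matter since those beliefs are themselves (almost) contained in $\mathcal{B}_i$.
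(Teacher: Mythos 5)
Your proposal follows essentially the same route as the paper's proof: reduce the $\ell_1$ distance between product beliefs to a sum of per-relay $\ell_1$ distances, split into the case where the reachable marginal is already in the set (zero error) and the stale case handled by the triangle inequality through $\pi_i$, and close with the spectral-gap mixing bound $\max_s\|\bm{P}_i^t(s,:)-\pi_i\|_1\le\lambda_i^{*t}/\pi_{i,\min}$. The only difference is cosmetic: the paper simply cites this mixing bound (as a property of the distance function $d_i(t)$ from the Markov-chain mixing literature) rather than re-deriving it from the spectral expansion as you sketch in Step 3.
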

\begin{proof}
Please refer to the proofs section \ref{proof.DensityBound}.
\end{proof}
\begin{theorem}
\label{th.Belief_set_eps}
For an initial state $\bm{s}_0$ and a problem of horizon $T$ and discount factor $\gamma<1$, the belief set $\mathcal{B}^{\epsilon}$ that should be selected in order to achieve an $\epsilon$ performance (i.e. $\eta_T^r \leq \epsilon$ and $\eta_T^c \leq \epsilon$) is given by:
\begin{equation}
\label{eq.Belief_set_eps}
\mathcal{B}^{\epsilon}=\mathcal{B}\left( \bm{s}_0, \ceil*{\min\left( \frac{f_r\left(\epsilon \right)}{\log\left( \lambda ^*\right)}, \frac{f_c\left(\epsilon \right)}{\log\left( \lambda ^*\right)} \right)}\right)
\end{equation} 
with 
\[
f_r\left(\epsilon \right)=\log\left( \frac{\epsilon \pi_{min}{\left(1-\gamma\right)}^2}{2K\left(R_{max}-R_{min}\right)} \right)\]
\[f_c\left(\epsilon \right)=\log\left( \frac{\epsilon \pi_{min}{\left(1-\gamma\right)}^2}{2K\left(C_{max}-C_{min}\right)} \right)
\]
\[
\lambda^*=\max\limits_{i \in \mathcal{K}}\lambda_i^* \text{ ; } \pi_{min}^*=\min\limits_{i \in \mathcal{K}}\pi_{i,min}^*
\]
\end{theorem}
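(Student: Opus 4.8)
The plan is to chain together the two estimates already available in the excerpt: the \gls{CPBVI} error bounds (\ref{eq.PBVI_Error}), which hold for $\gamma<1$, and the density bound of Theorem~\ref{th.DensityBound}. First I would fix the belief set to be an $h$-belief set $\mathcal{B}\left(\bm{s}_0,h\right)$ in the sense of (\ref{def.BeliefSet}) and treat the parameter $h$ as the quantity to be tuned against $\epsilon$. Substituting the density estimate $\epsilon^{\mathcal{B}}\le 2\sum_{i=1}^K \lambda_i^{*h}/\pi_{i,min}$ from (\ref{eq.DensityBound}) into (\ref{eq.PBVI_Error}) bounds the reward error at the problem horizon $T$ by $\eta_T^r\le \tfrac{2\left(R_{max}-R_{min}\right)}{\left(1-\gamma\right)^2}\sum_{i=1}^K \lambda_i^{*h}/\pi_{i,min}$ and the cost error by the same expression with $C_{max}-C_{min}$ in place of $R_{max}-R_{min}$. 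I would then collapse the $K$-term sum onto its worst summand, $\sum_{i=1}^K \lambda_i^{*h}/\pi_{i,min}\le K(\lambda^*)^h/\pi_{min}$ with $\lambda^*=\max_i\lambda_i^*$ and $\pi_{min}=\min_i\pi_{i,min}$, so that the whole $h$-dependence reduces to the scalar $(\lambda^*)^h$.

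Next I would impose $\eta_T^r\le\epsilon$ and $\eta_T^c\le\epsilon$ separately. Each becomes an inequality of the form $(\lambda^*)^h\le \epsilon\,\pi_{min}\left(1-\gamma\right)^2/\big(2K(R_{max}-R_{min})\big)$ (respectively with $C_{max}-C_{min}$), whose right-hand side is precisely $\exp\!\big(f_r(\epsilon)\big)$ (resp.\ $\exp\!\big(f_c(\epsilon)\big)$) for the $f_r,f_c$ defined in the statement. Taking logarithms and dividing through by $\log\left(\lambda^*\right)$ --- the step that requires the most care, since $\lambda^*<1$ forces $\log\left(\lambda^*\right)<0$ and reverses the inequality --- yields the lower bounds $h\ge f_r(\epsilon)/\log\left(\lambda^*\right)$ and $h\ge f_c(\epsilon)/\log\left(\lambda^*\right)$, both of which are positive numbers in the regime of interest because $f_r(\epsilon),f_c(\epsilon)<0$ for small $\epsilon$. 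Taking the ceiling of the binding one of these two thresholds gives the horizon appearing in (\ref{eq.Belief_set_eps}), and setting $\mathcal{B}^{\epsilon}=\mathcal{B}\left(\bm{s}_0,h\right)$ for that $h$ then makes the \gls{CPBVI} policy $\epsilon$-optimal by construction.

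Once Theorem~\ref{th.DensityBound} is granted the argument is essentially algebraic, so I do not expect a deep obstacle; the points that genuinely need care are: (i) justifying that $\lambda^*<1$, which is where irreducibility and aperiodicity of each transition matrix $\bm{P}_i$ enter and which is exactly what makes the density bound decay geometrically in $h$, so that a finite horizon suffices --- together with the sign bookkeeping when inverting the exponential inequality through the logarithm; (ii) checking that the coarse replacement of $\sum_i\lambda_i^{*h}/\pi_{i,min}$ by $K(\lambda^*)^h/\pi_{min}$ is the one that makes the constants line up with the $K$, $\pi_{min}$, $f_r$, $f_c$ appearing in (\ref{eq.Belief_set_eps}); and (iii) recording that (\ref{eq.PBVI_Error}) was stated only for $\gamma<1$, which is precisely the hypothesis here, whereas the complementary $\gamma=1$ regime would instead be treated through the $h(h+1)/2$ bound of the preceding proposition.
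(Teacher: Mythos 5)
Your proposal follows essentially the same route as the paper's own proof: substitute the density bound of Theorem~\ref{th.DensityBound} into the \gls{CPBVI} error bounds (\ref{eq.PBVI_Error}), coarsen the sum $\sum_i \lambda_i^{*h}/\pi_{i,min}$ to $K\lambda^{*h}/\pi_{min}$, and solve the resulting inequality for $h$, with the $\gamma=1$ case deferred to the separate $h(h+1)/2$ bound. The one point worth flagging is that satisfying \emph{both} constraints $\eta_T^r\leq\epsilon$ and $\eta_T^c\leq\epsilon$ forces $h$ to exceed the \emph{larger} of the two ratios $f_r(\epsilon)/\log\left(\lambda^*\right)$ and $f_c(\epsilon)/\log\left(\lambda^*\right)$ --- your phrase ``the binding one'' and your explicit sign bookkeeping for $\log\left(\lambda^*\right)<0$ get this right, whereas the theorem's displayed formula writes $\min$ and the paper's proof leaves the inversion implicit.
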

\begin{proof}
Please refer to the section \ref{proof.Belief_set_eps} for the proof of the theorem as well as its extension to $\gamma=1$.
\end{proof}

\iffalse
{\color{orange}
\begin{example}
with toeplitz matrix
\end{example}

PUT a figure concerning this tradeoff (size of the belief set as function of $\epsilon$\\
Consider the case of $\gamma=1$ to find the corresponding belief set}
\begin{itemize}
\item \textbf{Compute $\delta_B$ and find the relation between $V_B\left(t\right)$ and $V^*\left(t\right)$}
\item \textbf{Compute the belief subset $B$ (put a figure concerning the number of elements of B).} 
\end{itemize}
\fi

\subsection{Greedy \gls{CPBVI} \label{subsec:RS_GCPBVI}}
The computational complexity of each iteration of the \gls{CPBVI} algorithm is proportional to the size of the set of actions $|\mathcal{A}|=2^K$ which increases exponentially with the number of potential relays $K$. Therefore, we propose a \gls{GCPBVI} algorithm that exploits greedy maximization which consists of iteratively choosing the relays that should be selected in the aim of optimizing the value functions of the problem. This can be done by replacing $\argmax$ in line 14 of algorithm \ref{algo.CPBVI} by $greedy-argmax$. Greedy maximization is the main motivation for considering \gls{CPBVI} algorithms to solve \gls{CPOMDP}.  In fact, \gls{CPBVI} algorithms perform $\argmax$ operations over a finite set of belief points which is essential for the application of the greedy approach (i.e. contrarily to exact methods that compute the value function over all the continuous belief simplex). The \gls{GCPBVI} algorithm is basically deduced from the submodularity property of the $Q$-function.

\subsubsection{Submodularity of $Q$-function}
Since the upcoming result can be applied on both reward and cost $Q$-functions (given respectively by equations (\ref{eq.Qfunction_r}) and (\ref{eq.Qfunction_c})), we omit the index $r$ (for reward) and $c$ (for cost) and use $Q_t^{\pi}$ notation to generalize the results on both $Q_t^{r,\pi}$ and $Q_t^{c,\pi}$. Recall that:
\[
Q_t^{\pi}\left(\bm{b},\bm{a}\right)=\rho\left(\bm{b}, \bm{a}\right)+\sum \limits _ {\bm{z}\in Z}Pr\left(\bm{z}|\bm{a},\bm{b}\right)V_{t+1}^{\pi}\left(\bm{b}^{\bm{a}}_{\bm{z}}\right)\]
\begin{equation}
\label{eq.Q_function_new}=\rho\left(\bm{b}, \bm{a}\right)+\sum \limits _ {k=t+1}^{T}G_k^{\pi}\left(\bm{b}^t,\bm{a}^t\right) 
\end{equation}
where $G_k^{\pi}\left(\bm{b}^t,\bm{a}^t\right) $ is the expected immediate value (reward or cost) under policy $\pi$ at epoch $k$ conditioned on the belief $\bm{b}_t$ and the action $\bm{a}_t$ at epoch $t$, thus:
\begin{equation}
\label{eq.G_k}
G_k^{\pi}\left(\bm{b}^t,\bm{a}^t\right) =\gamma^{k}\sum \limits _{\bm{z}^{t:k}}P\left(\bm{z}^{t:k}|\bm{b}^t,\bm{a}^t, \pi\right)\rho\left(\bm{b}^k,\bm{a}^k|\pi\right)
\end{equation}
where $\bm{z}^{t:k}$ is a vector of observations received in the interval $t$ to $k$ epochs.
\begin{theorem}
\label{th.SubmodQ}
For all policies $\pi$, $Q_t^{r,\pi}\left(\bm{b},\bm{a}\right) $ and $Q_t^{c,\pi}\left(\bm{b},\bm{a}\right) $ are non-negatives, monotones and submodulars in $\bm{a}$.
\end{theorem}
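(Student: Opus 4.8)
The plan is to exploit the additive structure of the stage reward and cost and to propagate it through the Bellman recursion. Since $Q_t^{r,\pi}$ and $Q_t^{c,\pi}$ have exactly the same form (eq.~(\ref{eq.Q_function_new})), with $r_i$ replaced by $c_i$, I would prove the three properties once for a generic non-negative per-relay reward and read off both statements. Non-negativity is immediate: $R,C\ge 0$ because $R_{min},C_{min}\ge 0$ for the metrics of interest, hence $\rho_r,\rho_c\ge 0$, and by eq.~(\ref{eq.Q_function_new})--(\ref{eq.G_k}) the whole $Q$ is a non-negative combination of such quantities.

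For monotonicity and submodularity I would first decompose per relay. Because the relay chains are independent the joint belief factorizes, and because the stage reward is additive, $\rho_r(\bm{b}^{k},\bm{a}^{k})=\sum_{j}\mathbbm{1}_{\{a_j^{k}=1\}}\,\bar r_j(\bm{b}_k^{j})$ with $\bar r_j(\bm{b}):=\sum_{s}b(s)r_j(s)\ge 0$ and $\bm{b}_k^{j}$ the $j$-th marginal of $\bm{b}_k$. Substituting into eq.~(\ref{eq.Q_function_new})--(\ref{eq.G_k}) gives
\[
Q_t^{r,\pi}(\bm{b},\bm{a})=\sum_{j}\mathbbm{1}_{\{a_j=1\}}\bar r_j(\bm{b}^{j})+\sum_{k=t+1}^{T}\gamma^{k}\sum_{j}\mathbb E_{\pi}\!\left[\mathbbm{1}_{\{a_j^{k}=1\}}\bar r_j(\bm{b}_k^{j})\,\Big|\,\bm{b},\bm{a}\right].
\]
Since non-negativity, monotonicity and submodularity are preserved under finite sums, it suffices to establish them for a single summand: the $k=t$ term is modular in $\bm{a}$ with non-negative weights $\bar r_j(\bm{b}^{j})$ and therefore both monotone and submodular, so the work is entirely in the future terms $\mathbb E_{\pi}[\mathbbm{1}_{\{a_j^{k}=1\}}\bar r_j(\bm{b}_k^{j})\mid\bm{b},\bm{a}]$ for one relay $j$ at one future epoch $k$.

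The key observation for that term is that the current action $\bm{a}$ enters only through the one-step belief update (eq.~(\ref{eq.UeBeliefUpdate})): selecting relay $i$ replaces its propagated belief $\bm{b}^{i}\bm{P}_i$ by the observation-determined refined belief and leaves every other relay's belief untouched, because the observation channel $O$ factorizes over relays. Monotonicity then follows from a value-of-information argument: refining one extra coordinate is a mean-preserving refinement of the posterior, and since the value function is piecewise-linear-convex (Section~\ref{subsec:RS_PWLC}) the expected future value can only increase, so adding a relay to $\bm{a}$ cannot decrease $Q_t^{r,\pi}$. For submodularity I would induct on the remaining horizon $T-t$: the base case is the modular immediate term, and the inductive step shows that one Bellman backup maps a monotone, submodular value function to a function of the current action that is again monotone and submodular, using the product structure of the observation channel together with the independence of the chains to argue that the marginal value of additionally refining relay $j$ cannot grow once more of the other, independent relays have already been refined. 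Summing over $j$ and $k$, and specializing $r_i\mapsto c_i$, yields both claims.

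I expect the induction step --- preservation of submodularity through a Bellman backup --- to be the main obstacle. Non-negativity is trivial, the per-relay decomposition is bookkeeping, and monotonicity reduces cleanly to convexity of the value function plus a Blackwell-type comparison; but the diminishing-marginal-returns property of the expected future value genuinely requires combining the independence of the relay chains, the additive separability of the stage reward, and the structure of the policy-induced value function, so that the gain from revealing one extra relay is bounded by a quantity monotone in the set of still-unrevealed relays. I would isolate this as a lemma (one Bellman backup preserves monotone submodularity under a product observation channel with additively separable reward) and obtain Theorem~\ref{th.SubmodQ} by applying it inductively.
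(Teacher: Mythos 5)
Your opening moves match the paper's: non-negativity from $R,C\ge 0$, and the reduction, via the additive reward model (\ref{eq.TotalReward}) and the independence of the relay chains, of the increment $Q_t^{\pi}(\bm{b},\hat a_M\cup e)-Q_t^{\pi}(\bm{b},\hat a_M)$ to the immediate modular term $\sum_{s}b_t^e(s)r_e(s)$ plus a correction in the future terms $G_k^{\pi}$ involving only relay $e$. Where you diverge is in how you handle that correction, and this is exactly where your proposal has a genuine gap. You reduce submodularity to the claim that one Bellman backup preserves monotone submodularity under a product observation channel, acknowledge that this is the main obstacle, and do not prove it. That claim cannot be taken for granted: Bellman backups in a POMDP do not in general preserve submodularity of the value in the action, and the paper does not argue this way at all. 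Instead, the paper computes the discrete derivative $\Delta_{Q_t^{\pi}}(e|\hat a_M)$ directly and performs a case analysis on the continuation policy: if $\pi$ never re-selects relay $e$ after epoch $t$, the future correction vanishes identically (the additive reward means relay $e$ contributes nothing to future terms from which it is absent, and the other relays' beliefs are untouched), so $\Delta=\sum_s b_t^e(s)r_e(s)$; if $\pi$ re-selects $e$ at some epoch $t^*$, the correction telescopes to an explicit expression of the form $\gamma^{t^*+1}\sum_{s'}\left[(\bm{P}_e^{t^*-t+3}\bm{b}_t^e)(s')-(\bm{P}_e^{t^*-t+2}\bm{b}_t^e)(s')\right]r_e(s')$, which is then controlled by an elementary induction on iterated belief propagation (Lemma \ref{lemma.BeliefInequality}). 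Nothing resembling a general backup-preservation lemma is needed, and without supplying one your argument does not close.

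A second, smaller problem is your monotonicity argument. You invoke convexity of the value function (Section \ref{subsec:RS_PWLC}) plus a Blackwell-type refinement argument to conclude that observing one more relay cannot decrease the expected continuation value. But the theorem is stated for \emph{all} policies $\pi$, and convexity of the value function in the belief is a property of the \emph{optimal} value function; for a fixed policy the value-of-information inequality can fail, since the policy need not exploit the extra observation favorably. The paper sidesteps this with the same case analysis: the sign of the future correction is read off from the explicit belief-propagation expression above, not from a convexity argument. So both of your nontrivial steps rest on lemmas that are either unproven (backup preserves submodularity) or false in the stated generality (value of information for a fixed policy), whereas the paper's route --- direct computation of the discrete derivative, exploiting that selecting $e$ now can influence the future only through relay $e$'s own belief and only matters at the first future epoch where $\pi$ re-selects $e$ --- avoids both.
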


\begin{proof}
Please refer to section \ref{proof.SubmodQ}.
\end{proof}

\subsubsection{\gls{GCPBVI} algorithm}
%\subsubsection{Complexity computation}
Since the $Q$-functions in terms of reward and cost are submodulars, then a greedy version of the \gls{CPBVI} algorithm is deduced (see algorithm \ref{Algo_GCPBVI}). The objective of this greedy algorithm is to avoid the iteration over all the possible actions $\mathcal{A}$  of size $2^K$. The iteration of the \gls{GCPBVI} algorithm shows that we limit the computation on considering each relay $i \in \mathcal{K}$ aside in such a way that the set $\mathcal{A}$ of all possible action is not introduced at any level of the algorithm (i.e. avoiding by that the exponential complexity in $K$).  Indeed, at each iteration, the \gls{GCPBVI} follows the steps below for computing the set of $\alpha-$vectors $\mathcal{V}^{G}_{t+1}$ at epoch $t+1$ given the previous set $\mathcal{V}^{G}_{t}$:
\begin{enumerate}
\item The first step is to generate the sets $\Gamma_r^{k,z_k}$ and $\Gamma_c^{k,z_k}$ for all the relays $k \in \mathcal{K}$, all possible observations of each relay $z_k \in \mathcal{S}$ and all pairs of $\alpha$-vectors $\left(\alpha_r^i,\alpha_c^i\right) \in \mathcal{V}^{G}_{t}$:
 \[
 \Gamma_r^{k,z_k} \leftarrow \alpha_r^{k,z_k}\left(\bm{s} \right) =\gamma \sum \limits _{\bm{s}' \in \bm{\mathcal{S}}}T \left( \bm{s},\bm{s}' \right)O\left(  z_k, \bm{s}',k \right)\alpha_r^i\left(\bm{s'}\right)
 \]
 \[
 \Gamma_c^{k,z_k} \leftarrow \alpha_c^{k,z_k}\left(\bm{s} \right) =\gamma \sum \limits _{\bm{s}' \in \bm{\mathcal{S}}}T \left( \bm{s},\bm{s}'\right)O\left( z_k, \bm{s}',k \right)\alpha_c^i\left(\bm{s'}\right)
 \]
 \item The next step is to generate the sets $\Gamma_r^{k}$ and $\Gamma_c^{k}$ for each relay $k \in \mathcal{K}$:
\[
\Gamma^{k}_r= R\left(.,\bm{a}\right)+\bigoplus \limits_{{z}_k \in\mathcal{S}} \Gamma_r^{k,z_k} \text{ and } \Gamma^{k}_c= C\left(.,\bm{a}\right)+\bigoplus \limits_{{z}_k\in\mathcal{S}} \Gamma_c^{k,z_k}
\]
 The computation of the sets $\Gamma_r^k$ and $\Gamma_c^k$ of each relay $k\in\mathcal{K}$  has a complexity of $\bigO{|\mathcal{S}||\mathcal{B}|^{|\mathcal{S}|}}$.
\item The final step aims to find the optimal action to take for each belief state $\bm{b} \in \mathcal{B}$. The $greedymax$ optimization is applied:
\[
\left(\alpha_r^{\bm{b}},\alpha_c^{\bm{b}}\right) = greedy\argmax \limits_{\alpha_r, \alpha_c } \lbrace \alpha_r.\bm{b} : \alpha_c.\bm{b} \leq C_{th} \rbrace
\]
The greedy optimization complexity is limited to $\bigO{K^2}$. 
\item Finally $\mathcal{V}^{G}_{t+1}=\bigcup \limits_{\bm{b} \in \mathcal{B}} \left(\alpha_r^{\bm{b}},\alpha_c^{\bm{b}}\right) $.
\end{enumerate}
Thus, the complexity of one \gls{GCPBVI} iteration $\bigO{K^2|\mathcal{S}||\mathcal{B}|^{|\mathcal{S}|}}$ enables much better scalability in the number of potential relays $K$. In the sequel, we denote by $V_t^{r,G}$ the reward value function generated by the \gls{GCPBVI} algorithm. 

\iffalse
The computational complexity of one iteration of PBVI is $O\left(|A||B|^{|Z|}\right)$ while this is only linear in $|A|$, in our setting $|A|=\sum\limits_{i=1}^KC_i^K=2^K$ and $|Z|$. Thus, PBVI's complexity is leading to poor scalability in $K$, the number of potential relays.In this section, we propose greedy PBVI whose complexity is $O\left(K^2|B|^{|Z|}\right)$ enabling much better scalability in $K$. (In the greedy maximization, in worst case we will search $K$ times all the $K$ elements).
\fi
\begin{algorithm}[ptb]
\caption{\gls{GCPBVI} Iteration}\label{Algo_GCPBVI}
\begin{algorithmic}
\State \textbf{Input:}  Pair of $\alpha$-vectors $ \mathcal{V}^{G}_{t}$, Relays $\mathcal{K}$, States $\mathcal{S}$, Observations $\mathcal{Z}$, Rewards $r_i\left(s\right)$, Costs $c_i\left(s\right)$ for relay $i$, Belief subset $\mathcal{B}^\epsilon$, Threshold $C_{th}$
%\State $t=h-1$, $\Gamma_h=\emptyset$
%\While {$t >1$}
%\State $\Gamma_t=\emptyset$ 
\For{$k=1:K$}
%\State $\Gamma_k=\emptyset$
\State $\left( \alpha_r^{k,*},\alpha_c^{k,*} \right)\leftarrow \left( r_i\left(.\right),c_i\left(.\right)\right)$
\For{$z_k \in \mathcal{S}$}
%\State $\Gamma_{k,z}=\emptyset$ 
\For{$\left( \alpha_r^i,\alpha_c^i \right) \in \mathcal{V}^{G}_{t}$}
\State  $\alpha_{r,i}^{k,z_k}\left( \bm{s} \right)=\sum \limits_{\bm{s}' \in \mathcal{\bm{S}}}T\left( \bm{s},\bm{s}' \right)O\left( z_k, \bm{s}',k \right)\alpha_r^i\left( \bm{s}'\right)$
\State $\alpha_{c,i}^{k,z_k}\left( \bm{s} \right)=\sum \limits_{\bm{s}' \in \mathcal{\bm{S}}}T\left( \bm{s},\bm{s}' \right)O\left(  z_k, \bm{s}',k \right)\alpha_c^i\left( \bm{s}'\right)$
\State $\Gamma_r^{k ,z_k }=\Gamma_r^{k,z_k} \cup  \alpha_{r,i}^{k,z_k}$  and $\Gamma_c^{k,z_k}=\Gamma_c^{k,z_k} \cup  \alpha_{c,i}^{k,z_k}$ 
\EndFor
\EndFor
\State $\Gamma_r^{k}=  \alpha_r^{k,*}+\bigoplus \limits_{z_k \in \mathcal{S}} \Gamma_r^{k,z_k}$ and $\Gamma_c^{k}=  \alpha_c^{k,*}+\bigoplus \limits_{z_k \in \mathcal{S}}  \Gamma_c^{k,z_k}$
\EndFor
\For{$\bm{b} \in \mathcal{B}^\epsilon$}
\State $v_{sum}=0$, $\alpha_r^{\bm{b}}=0$ , $\alpha_c^{\bm{b}}=0$, $K_{left}=\mathcal{K}$, $\Gamma^{\bm{a},\bm{b}}=\emptyset$
%\Switch {Problem}
%\Case{1}{}
\While {$|K_{left}|>0$}
\State $ \left( k^*,\alpha_r^*, \alpha_c^*\right)= \argmax \limits_{k \in K_{left}; \alpha_r \in \Gamma_r^k;  \alpha_c \in \Gamma_c^k} \frac{\sum \limits_{{\bm{s} \in \bm{\mathcal{S}}}} \alpha_r\left( \bm{s}\right){b}\left(\bm{s}\right)}{ \sum \limits_{\bm{s} \in \bm{\mathcal{S}}} \alpha_c\left( \bm{s}\right){b}\left(\bm{s}\right) } $  
\If {$v_{sum}+\alpha_c^* < C_{th}$}
\State  $\alpha_r^{\bm{b}}=\alpha_r^{\bm{b}}+\alpha_r^*$ ,  $\alpha_c^{\bm{b}}=\alpha_c^{\bm{b}}+\alpha_c^*$ and $v_{sum}=v_{sum}+\alpha_c^{*}\bm{b}$
\State $\Gamma^{\bm{a},\bm{b}}=\Gamma^{\bm{a},\bm{b}} \cup k^*$
\EndIf
\State $K_{left}=K_{left}|{ k^*}$
\EndWhile
%\EndCase
%\Case{2}{}
%\While {$|E_{left}|>0$}
%\State $ \left( \alpha_r^*, \alpha_c^*, k^*\right)= \argmin \limits_{\left( \alpha_r, \alpha_c, i=E_{left} \right) \in \Gamma_k} \frac{\sum \limits_{s \in S} \alpha_c\left( s\right)b\left(s\right)}{ \sum \limits_{s \in S} \alpha_r\left( s\right)b\left(s\right) } $  
%\If {$v_{sum}+\alpha_r^b\bm{b} \leq R_{th}$}
%\State  $\alpha_r^b=\alpha_r^b+\alpha_r^*$ ,  $\alpha_c^b=\alpha_c^b+\alpha_c^*$ and $v_{sum}=v_{sum}+\alpha_r^b\bm{b}$
%\State \State $\Gamma_a^b=\Gamma_a^b \cup k^*$
%\EndIf
%\State $E_{left}=E_{left}|{ k^*}$
%\EndWhile
%\State Add random $k \in E_{left}$ to achieve the constraint
%\EndCase
%\EndSwitch
\State $\mathcal{V}^{G}_{t+1}=\mathcal{V}^{G}_{t+1} \bigcup \left( \alpha_r^{\bm{b}},\alpha_c^{\bm{b}}\right)$ and $\Gamma_{t+1}^{\bm{a}}= \Gamma_{t+1}^{\bm{a}} \bigcup \Gamma^{\bm{a},\bm{b}}$
\EndFor
%\State $t=t-1$
%\EndWhile
\State \textbf{Output:} $\mathcal{V}^{G}_{t+1}$ and $\Gamma_{t+1}^a$
\end{algorithmic}
\end{algorithm}

\subsubsection{Performance Evaluation}
\begin{theorem}
\label{th.Greedy_Bound}
\iffalse Considering $\mathcal{B}^{\epsilon}$ as the belief set, then \fi At a given epoch $t$, the error in the reward value function due to greedy optimization is bounded by:
\begin{equation}
\label{eq.Greedy_Bound}
V^{r,G}_t\left( \bm{b}\right)\geq \left(1-\frac{1}{e}\right)^{2t}V_t^{r,B}\left( \bm{b}\right)
\end{equation}
\end{theorem}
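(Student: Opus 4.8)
The plan is to prove (\ref{eq.Greedy_Bound}) by induction on the epoch index $t$ (equivalently, on the number of times the point-based backup operator has been applied), writing $\beta := 1-1/e$. At each level I would combine two facts: (i) the belief-space $Q$-backup is monotone in whatever value function is substituted into it, so the error already present in $V^{r,G}_t$ is merely inherited, not amplified; and (ii) the relay-by-relay greedy maximization of \gls{GCPBVI} is near-optimal for maximizing a non-negative, monotone, submodular objective under a submodular knapsack constraint. The base case $t=0$ is immediate, since \gls{GCPBVI} and \gls{CPBVI} are initialised with the same set of $\alpha$-vector pairs, hence $V^{r,G}_0\equiv V^{r,B}_0 = \beta^{0}\,V^{r,B}_0$.

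For the inductive step, assume $V^{r,G}_t(\bm b)\ge \beta^{2t}V^{r,B}_t(\bm b)$ for every belief $\bm b$. Fix $\bm b$ and recall from (\ref{eq.Qfunction_r}) that the backup producing $V^{r,G}_{t+1}$ scores each action $\bm a$ by $Q^{r}_{t+1}(\bm b,\bm a)=\rho_r(\bm b,\bm a)+\gamma\sum_{\bm z\in Z}\Pr(\bm z\mid\bm a,\bm b)\,\widehat V_t(\bm b^{\bm a}_{\bm z})$, where $\widehat V_t$ is the value function carried from the previous iteration. Because $\rho_r\ge 0$, the coefficients $\gamma\Pr(\bm z\mid\bm a,\bm b)$ are non-negative, and $\beta^{2t}\le 1$, the inductive hypothesis yields, pointwise in $\bm a$, $Q^{r,G}_{t+1}(\bm b,\bm a)\ge \beta^{2t}\,Q^{r,B}_{t+1}(\bm b,\bm a)$ — substituting the greedy value $V^{r,G}_t$ for $V^{r,B}_t$ inside the backup degrades every action-value by at most the factor $\beta^{2t}$ already controlled by the hypothesis. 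The same monotonicity argument applied to (\ref{eq.Qfunction_c}), together with $\rho_c\ge 0$, keeps the approximate cost action-values no larger than the exact ones and shows that the action returned by \gls{GCPBVI} (feasible by construction via the test $v_{sum}+\alpha_c^{*}<C_{th}$) is also feasible for the \gls{CPBVI} problem, so the two value functions are compared on the same feasible set.

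It then remains to bound the loss from replacing the feasibility-constrained $\argmax$ over $\bm a$ of \gls{CPBVI} (line 14 of Algorithm~\ref{algo.CPBVI}) by the greedy while-loop of Algorithm~\ref{Algo_GCPBVI}. Here Theorem~\ref{th.SubmodQ} is the crux: $Q^{r}_{t+1}(\bm b,\cdot)$ and $Q^{c}_{t+1}(\bm b,\cdot)$ are non-negative, monotone and submodular in the selected-relay set, so the per-belief problem is exactly the maximization of a monotone submodular objective subject to a submodular knapsack $\rho_c(\cdot)\le C_{th}$. Invoking the greedy guarantee for this class of problems (in the spirit of \cite{Sviridenko2004ANO}), the cost–benefit greedy of \gls{GCPBVI} returns an action $\bm a^{G}$ with $Q^{r,G}_{t+1}(\bm b,\bm a^{G})\ge \beta^{2}\max_{\bm a\ \text{feasible}}Q^{r,G}_{t+1}(\bm b,\bm a)$ — one factor $\beta$ for approximating the submodular objective, a second factor $\beta$ for the submodularity (rather than modularity) of the budget. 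Chaining this with the inherited inequality and $\max_{\bm a\ \text{feasible}}Q^{r,B}_{t+1}(\bm b,\bm a)=V^{r,B}_{t+1}(\bm b)$ gives $V^{r,G}_{t+1}(\bm b)=Q^{r,G}_{t+1}(\bm b,\bm a^{G})\ge \beta^{2}\beta^{2t}V^{r,B}_{t+1}(\bm b)=\beta^{2(t+1)}V^{r,B}_{t+1}(\bm b)$, closing the induction.

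\textbf{Main obstacle.} The delicate point is the per-step factor $\beta^{2}$: one must verify that the \emph{specific} greedy rule of Algorithm~\ref{Algo_GCPBVI} — maximizing the ratio $(\alpha_r\!\cdot\!\bm b)/(\alpha_c\!\cdot\!\bm b)$ and accepting a relay only while the budget is not exceeded — attains the $(1-1/e)^2$ guarantee, since the textbook $(1-1/e)$ bound is for a modular knapsack and a bare cost–benefit greedy is in general only $\tfrac12(1-1/e)$-optimal; this will likely require the standard ``best of greedy and best feasible singleton'' refinement and a curvature/feasibility argument showing that the submodularity of $\rho_c$ costs no more than the extra factor $\beta$. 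A secondary technical point is to confirm that Theorem~\ref{th.SubmodQ}, stated for fixed-policy action-value functions, transfers verbatim to the action-values appearing in the value-iteration backup (they coincide with $Q^{\pi}_{t+1}$ for $\pi$ the greedy continuation), and to check that every inequality above survives the boundary case $\gamma=1$, where the backup coefficients remain non-negative so nothing changes.
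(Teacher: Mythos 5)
Your induction skeleton is the same as the paper's: a base case, the pointwise ``inherited'' inequality $Q^{r,G}_t(\bm{b},\bm{a})\ge\left(1-\frac{1}{e}\right)^{2t-2}Q^{r,B}_t(\bm{b},\bm{a})$ for all $\bm{a}$ (obtained by pushing the hypothesis through the non-negative backup coefficients $\gamma\Pr(\bm{z}\mid\bm{a},\bm{b})$), and a per-step loss of $\left(1-\frac{1}{e}\right)^{2}$. The gap is exactly where you flagged it: you charge the entire per-step factor to a \emph{single} greedy maximization, i.e.\ you need $Q^{r,G}_{t}(\bm{b},\bm{a}^{G})\ge\left(1-\frac{1}{e}\right)^{2}\max_{\bm{a}}Q^{r,G}_{t}(\bm{b},\bm{a})$, and you correctly observe that the cost--benefit greedy of Algorithm~\ref{Algo_GCPBVI} is not known to satisfy such a squared guarantee; the fix you sketch (best of greedy and best feasible singleton, plus a curvature argument) is not carried out, so the inductive step does not close as written.

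The paper never needs a $\left(1-\frac{1}{e}\right)^{2}$ single-run guarantee: it splits the two factors across \emph{two different} $Q$-functions. Writing $\bm{a}^{G}_{Q^G},\bm{a}^{*}_{Q^G}$ for the greedy and exact constrained maximizers of $Q^{r,G}_t(\bm{b},\cdot)$ and $\bm{a}^{G}_{Q^B},\bm{a}^{*}_{Q^B}$ for those of $Q^{r,B}_t(\bm{b},\cdot)$, it chains $V^{r,G}_t(\bm{b})=Q^{r,G}_t(\bm{b},\bm{a}^{G}_{Q^G})\ge\left(1-\frac{1}{e}\right)Q^{r,G}_t(\bm{b},\bm{a}^{*}_{Q^G})\ge\left(1-\frac{1}{e}\right)Q^{r,G}_t(\bm{b},\bm{a}^{G}_{Q^B})\ge\left(1-\frac{1}{e}\right)^{2t-1}Q^{r,B}_t(\bm{b},\bm{a}^{G}_{Q^B})\ge\left(1-\frac{1}{e}\right)^{2t}Q^{r,B}_t(\bm{b},\bm{a}^{*}_{Q^B})=\left(1-\frac{1}{e}\right)^{2t}V^{r,B}_t(\bm{b})$, invoking the plain $\left(1-\frac{1}{e}\right)$ guarantee of \cite{Sviridenko2004ANO} twice --- once on $Q^{r,G}_t$ and once on $Q^{r,B}_t$ --- with the greedy maximizer of the \emph{exact} $Q$-function serving purely as an intermediate comparison point that the algorithm never has to compute. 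If you replace your single squared bound by this detour, your argument goes through with only the standard one-factor guarantee per greedy run; the remaining elements of your write-up (monotone backup, feasibility of the returned action, the $\gamma=1$ remark) match the paper's treatment.
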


\begin{proof} 
Please refer to proofs section \ref{proof.Greedy_Bound}.
\end{proof}
%=================================================%
%==================E: Multiple User====================%
%=================================================%
\section{Extension to Multi player scenario \label{sec:RS_MultiPlayer}}
The system model presented in section \ref{sec:RS_SystemModel} is extended to the multi player scenario where $N$ \gls{MU} aim to use \gls{D2D} aided relaying in order to improve the performance of their cellular communications. For the multi-player scenario, we generalized the single player notation as follows:
\begin{itemize}
\item $\mathcal{K}_i$ of size $K_i$ elements denotes the set of candidate relays discovered by each \gls{MU} $i$ with $i \in \lbrace 1,...,N\rbrace$. Thus, the set of potential relays is given by $\mathcal{K}=\bigcap\limits_{i=1}^N \mathcal{K}_i$ with $K=|\mathcal{K}|$.
\item $\overrightarrow{\bm{a}}=\left( \bm{a}_1,\bm{a}_2,...,\bm{a}_N\right)$ denotes the matrix of all the actions $\bm{a}_i$ taken by each \gls{MU} $i$ with $i \in \lbrace 1,...,N\rbrace$. Thus, the size of the set of possible actions $|\mathcal{A}|=2^{K}$. The size of matrix $\overrightarrow{\bm{a}}$ is $K \times N$ with $a\left(i,j\right)=1$ if relay $i$ is selected by \gls{MU} $j$ and $0$ otherwise.
\item ${\bm{s}}=\left( \bm{s}_1,\bm{s}_2,...,\bm{s}_K\right)$ denotes the matrix of all the states $\bm{s}_i$ of each relay $i$ with $i \in \lbrace 1,...,K\rbrace$ . Thus, the size of the set of possible states $|\bm{\mathcal{S}}|=\mathcal{S}^{K}$. We denote by $\bm{s}_{\mathcal{K}_i}$ the state vector of the candidate relays of \gls{MU} $i$.
\item The reward and cost model for a given action $\overrightarrow{\bm{a}}=\left( \bm{a}_1,\bm{a}_2,...,\bm{a}_N\right)$ and a given state ${\bm{s}}=\left( \bm{s}_1,\bm{s}_2,...,\bm{s}_K\right)$ is the following:
\begin{equation}
\label{eq.TotalReward_Gen}
R\left({\bm{s}} ,\overrightarrow{\bm{a}}\right)=\sum\limits_{i=1}^N R_i \left(\bm{s}_{\mathcal{K}_i},\bm{a}_i \right)
\end{equation}
where $R_i \left(\bm{s}_{\mathcal{K}_i},\bm{a}_i \right)$ of each \gls{MU} $i$ is given by equation (\ref{eq.TotalReward}). 
\begin{equation}
\label{eq.TotalCost_Gen}
C\left({\bm{s}} ,\overrightarrow{\bm{a}}\right)=\sum\limits_{i=1}^N C_i \left(\bm{s}_{\mathcal{K}_i},\bm{a}_i \right)
\end{equation}
where $C_i \left(\bm{s}_{\mathcal{K}_i},\bm{a}_i \right)$ of each \gls{MU} $i$ is given by equation (\ref{eq.TotalCost}). 
\end{itemize}
 For this multi player scenario, the relay selection scheme aims to find the decision policy $\left(\overrightarrow{\bm{a}}_1,...,\overrightarrow{\bm{a}}_T\right)$ that optimizes the following problem:
\[
\max \mathbb{E}\left[ \sum \limits _ {t=1}^T \gamma^{t} R\left({\bm{s}}_t ,\overrightarrow{\bm{a}}_t\right) \right]
\]
\begin{equation}
\label{eq.Prob1_Gen}
\text{s.t.  } \mathbb{E}\left[ \sum \limits _ {t=1}^T \gamma^{t} C_i\left({\bm{s}}_t ,\overrightarrow{\bm{a}}_t\right) \right] \leq C_{th} \,\, \forall i \in \lbrace 1,...,N \rbrace
\end{equation}
\subsection{Centralized Relay Selection Policy \label{subsec.MP_Cent}}
The first strategy of relay selection is to make the decision by the \gls{BS} in a centralized manner. Thus, the problem \ref{eq.Prob1_Gen} can be modeled, based on the single player scenario given in subsection \ref{ssec:CPOMDP}, as a \gls{CPOMDP} with the following characteristics: ${\bm{s}}$ represents the state of all the potential relays; $\overrightarrow{\bm{a}}$ represents the matrix of the taken action (with $\overrightarrow{\bm{a}}\left(i,j\right)=1$ if relay $i$ is selected by \gls{MU} $j$); $\bm{T}$ represents the transition matrix with $\bm{T}\left({\bm{s}};{\bm{s}'}\right)$ the probability of passing from a state ${\bm{s}}$ to another ${\bm{s}'}$; reward $R \left(\bm{s},\overrightarrow{\bm{a}}\right)$ and cost $C \left(\bm{s},\overrightarrow{\bm{a}} \right)$ models are respectively given by equations (\ref{eq.TotalReward_Gen}) and (\ref{eq.TotalCost_Gen}); $\bm{b}_0$ represents the initial belief of being in each state $\bm{s} \in \bm{\mathcal{S}}$; $T$ is the horizon of the problem and $C_{th}$ the cost threshold that should not be exceed by any \gls{MU}. 
Considering a centralized approach leads to the assumption that the state of each relay $i$ is reported to the \gls{BS} each time this relay $i$ is selected by one of the \gls{MU}s. Therefore, the observation $\bm{z}=\left(z_1,z_2,...,z_K\right)$ corresponds to $z_i=s_i$ if relay $i$ is selected by any \gls{MU} ($\overrightarrow{\bm{a}}\left(i,j\right)=1$ for any $j\in\mathcal{K}$) and $z_i=\emptyset$ otherwise ($\overrightarrow{\bm{a}}\left(i,j\right)=0$ for all $j\in\mathcal{K}$). $O\left(\bm{z}',\bm{s}',\overrightarrow{\bm{a}}\right)$ represents the probability of receiving an observation $\bm{z}'$ knowing that action $\overrightarrow{\bm{a}}$ is taken and that the relays transit to state $\bm{s}'$. 

The motivation of such centralized approach lies on the observation process described above. Since the \gls{BS} is informed by the state of each selected relay, \gls{BS} will have a global knowledge of all the selected relays' state contrarily to the local observation of each \gls{MU} which is limited by its own decision. Therefore, in this centralized approach, the \gls{BS} profits from having a global state information for inducing a more efficient relay selection decision.

This \gls{CPOMDP} is a straight-forward generalization of the single player \gls{CPOMDP} formulation. Thus, the results provided in the previous sections remain intact. This includes the \gls{GCPBVI} algorithm that is still applicable for the case of multi-player scenario since the submodularity property of the $Q$-functions remains valid. However, this centralized approach suffers from two challenges:
\begin{enumerate}
\item \textbf{Overhead}: This centralized approach requires that each \gls{MU} reports the states of its selected relays. This procedure generates an overhead of signaling.
\item \textbf{Complexity}: The state space for multi-\gls{MU}s scenario will blowup because it exponentially increase with the total number of \gls{MU}s $N$ as well as the total number of candidates relays $K$. Indeed, the total number of states is $|\bm{\mathcal{S}}|={|{\mathcal{S}}|}^{|\bigcap\limits_{n=1}^N \mathcal{K}_n|}$.
\end{enumerate}
\subsection{Distributed Relay Selection Policy \label{subsec.MP_Dist}}
\iffalse Motivation of this strategy(OOC case)\\ \fi
To address the exponentially increasing in the size of the state space for multi-player scenario and to avoid the overhead of signaling, we propose a distributed variant for resolving the generalized \gls{CPOMDP}. We divide the multi-player problem into $N$ single-player problems (given by \ref{eq.Prob1}), one for each \gls{MU}, and solve them independently. For such distributed approach, each \gls{MU} will not take advantage of the observation of other \gls{MU}s because the states selected by each \gls{MU} are not shared between each others. However, distributed designs remain interesting for escaping the large amount of signaling that is required to be exchanged as well as reducing the computational complexity at the \gls{BS} level. 

The distributed approach is equivalent to considering $N$ parallel and independent single \gls{MU} problems. Each \gls{MU} $n$ (with $n=\lbrace 1,...,N\rbrace$) formulates its own relay selection problem as a \gls{CPOMDP} as shown in section  \ref{sec:RS_SimuResults}. Then, each \gls{MU} launches its relay selection strategy based on the \gls{GCPBVI} algorithm proposed in \ref{subsec:RS_GCPBVI}. Numerical results \ref{sec:RS_NumResults} show that this distributed approach reduces the complexity of the problem while achieving performance close to that of the centralized approach.

\section{Numerical Results \label{sec:RS_NumResults}}
 
\iffalse
{\color{orange}

\section*{Adaptive policies}
\begin{itemize}
\item consider both online and offline solutions (inspired by adaptive submodularity)
\item note that special case with $T=1$ correspond to myopic solution
\item consider also QMDP policy as a variant
\item figure (ecart type) to compare between online and offline solutions (prob 1 and prob 2) as function of time
\item compare between non myopic and myopic solutions (both online and offline)\\
\end{itemize}
}
\fi

\subsection{Numerical Settings}
We evaluate our claims by considering the case where the reward that we aim to maximize is the throughput of the \gls{MU} to \gls{BS} communications under energy consumption constraints. We consider that the relays are moving within different rectangular regions (i.e. we used to refer to these regions as states in the \gls{POMDP} formulation). An example of partitioning the horizontal plane as well as the vertical plane into $5$ parts is illustrated in figure \ref{fig.RS_NR_scenario} with $N=3$ \gls{MU}s and $K=6$ relays. This example of partitioning generates $25$ different locations for mobile relays. It is obvious that the preciseness and the performance of the results are improved by considering smaller partitioning granularity. However, this may lead to increase the complexity of the solution. 
%\begin{centering}
\begin{figure} %[H]
\hspace{-20pt}
\captionsetup{justification=centering}
\includegraphics[scale=0.6]{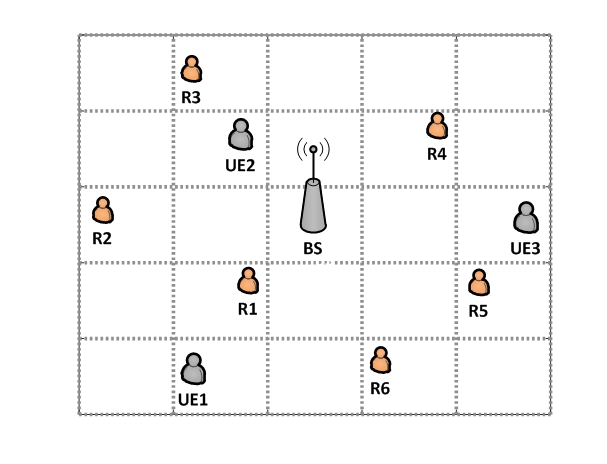}
\caption{\gls{D2D} relaying scenario for numerical results}
\label{fig.RS_NR_scenario} 
\end{figure}
%\end{centering}
We call $S_x$ (resp. $S_y$) the number of divisions that is applied at the horizontal (resp. vertical) axis. Each state $s \in \mathcal{S}$ is defined by a $x$ and $y$ coordinates. We denote by $\epsilon_{\text{fix}}$ as the probability that the relay stays in the same region. This parameter is essential for defining the mobility pattern that is assumed in this numerical section. The transition matrix in the horizontal plane $\bm{P}_x$ is constructed in such a way that each relay move left or right with an equal probability of $\frac{1}{2}\left(1-\sqrt{\epsilon_{\text{fix}}}\right)$. The transition matrix in the vertical plane $\bm{P}_y$ is constructed in such a way that each relay move up or down with an equal probability of $\frac{1}{2}\left(1-\sqrt{\epsilon_{\text{fix}}}\right)$. Considering that the horizontal and vertical travels are independent, the transition matrix of each relay $i$ $\bm{P}_i$ is deduced as follows: $\bm{P}_i\left[\left(x,y\right),\left(x',y'\right)\right]=\bm{P}_x\left(x,x'\right)\bm{P}_y\left(y,y'\right)$. One can remark that this structure of mobility matrix leads to a probability $\epsilon_{\text{fix}}$ of staying in the same location. The value of $\epsilon_{\text{fix}}$  is given in table \ref{table.RS_NumSettings}.

The cost and reward of a relay depends on its position $s=\left(x,y\right)$ in the network. Closer the source and the destination nodes are higher is the reward and  smaller is the cost of the corresponding source to destination link. In this section, we assume that the throughput is the reward criteria  and the energy consumption is the cost metric. The maximal achieved throughput is $R_{max}=500$ kbps/RB and the maximum transmitted power is $C_{max}=250$ mW. The reward (resp. cost) value of a given link between two nodes is inversely proportional (resp. proportional) to the values of the horizontal and vertical divisions $d_x$ and $d_y$. For the numerical results, we consider the following reward and cost models:
\begin{equation}
\label{eq.NR_Reward}
r\left(d_x,d_y\right)=\frac{R_{max}}{d_x \times d_y}
\end{equation}
\begin{equation}
\label{eq.NR_Cost}
c\left(d_x,d_y\right)=\frac{C_{max}}{\left(S_x-d_x+1\right)+\left(S_y-d_y+1\right)}
\end{equation}

The cost threshold $C_{th}$ that the average cumulative average cost should not exceed is given by table \ref{table.RS_NumSettings}. Since throughput is the performance criteria, the reward of the \gls{MU}-\gls{BS} link passing through relay $i$ is equal to the half of the $\min$ of the throughput of both links \gls{MU}-relay $i$ and relay $i$-\gls{BS}. However, the cost of such link is equal to the transmission power of relay $i$. Beside computing the cumulative average reward and cost, we evaluate the average cumulative \acrfull{EE} of the proposed algorithms. The average cumulative \gls{EE} metric of \gls{MU} $n$ is given by:
\begin{equation}
\label{eq.RS_EE}
EE_n=\mathbb{E}\left[ \sum\limits_{t=1}^T\gamma^{T-t} \frac{R_n\left(\bm{s}_t,\bm{a}_t\right)}{C_n\left(\bm{s}_t,\bm{a}_t\right)} \right]
\end{equation} 
The numerical settings that will commonly be used in the sequel are summarized in the table \ref{table.RS_NumSettings}.
\begin{table}
\centering
  \begin{tabular}{|c|c|}
\hline 
Settings Parameter & Value \tabularnewline
\hline
\hline 
Immobile Probability $\epsilon_{\text{fix}}$ & $0.7$ \tabularnewline
\hline 
Max. Reward $R_{max}$ & $500$  Kbps/RB \tabularnewline
\hline 
Max. Cost $C_{max}$ & $250$ mW \tabularnewline
\hline 
Cost Threshold $C_{th}$ & $1000$ mW \tabularnewline
\hline 
Reward and Cost Model & Given by \ref{eq.NR_Reward} and  \ref{eq.NR_Cost}\tabularnewline
\hline 
Discovery period $T$ & $5$ \tabularnewline
\hline 
Discount Factor $\gamma$ & $1$ \tabularnewline
\hline 
Number of realizations & $100$ \tabularnewline
\hline 
$\epsilon^B$ of \gls{CPBVI} algorithm & $0.01$ \tabularnewline
\hline 
\end{tabular}
\caption{Numerical Settings}
\label{table.RS_NumSettings}
\end{table}

%Numerical settings (nb states, relays, cost function, reward function,mobility pattern etc.)figure that illustrates the considered configurationitemize\\

\subsection{Single \gls{MU} scenario}
\subsubsection{\gls{GCPBVI} performance}
Before evaluating the performance of our relay selection policies, we show the motivation behind these approximations. In fact, we illustrate in figure \ref{fig.CompPWLCvsCPBVI} and \ref{fig.CompCPBVIvsGCPBVI} how the suggested algorithms highly reduce the computation complexity of the exact solution of the \gls{CPOMDP} problem. For a single \gls{MU} scenario and $K=5$ relays, we plot in figure \ref{fig.CompPWLCvsCPBVI} the $\log10$ of the complexity ratio between the exact solution and the proposed \gls{CPBVI} solution (i.e. given in  \ref{subsec:RS_CPBVI}) as function of the number of possible regions (i.e. called states). This shows that exact solution for of realistic sizes \gls{CPOMDP} is unfeasible . 

In addition, the exponential number of actions $2^K$ motivated us to propose a greedy design of the \gls{CPBVI} algorithm that avoids the study of all the possible actions. We consider $\mathcal{S}=25$ possible states and we plot in figure \ref{fig.CompCPBVIvsGCPBVI} the complexity ratio between the \gls{CPBVI} solution and its greedy alternative (i.e. given in  \ref{subsec:RS_GCPBVI}) as function of the number of the potential relays $K$. This figure shows that, even for a small number of potential relays $K=10$, we can reduce the complexity of a factor of $12$.

\begin{figure}%[H]
\centering
%\hspace{-20pt}
        \begin{minipage}[b]{0.475\textwidth}
            \centering
            \includegraphics[scale=0.35]{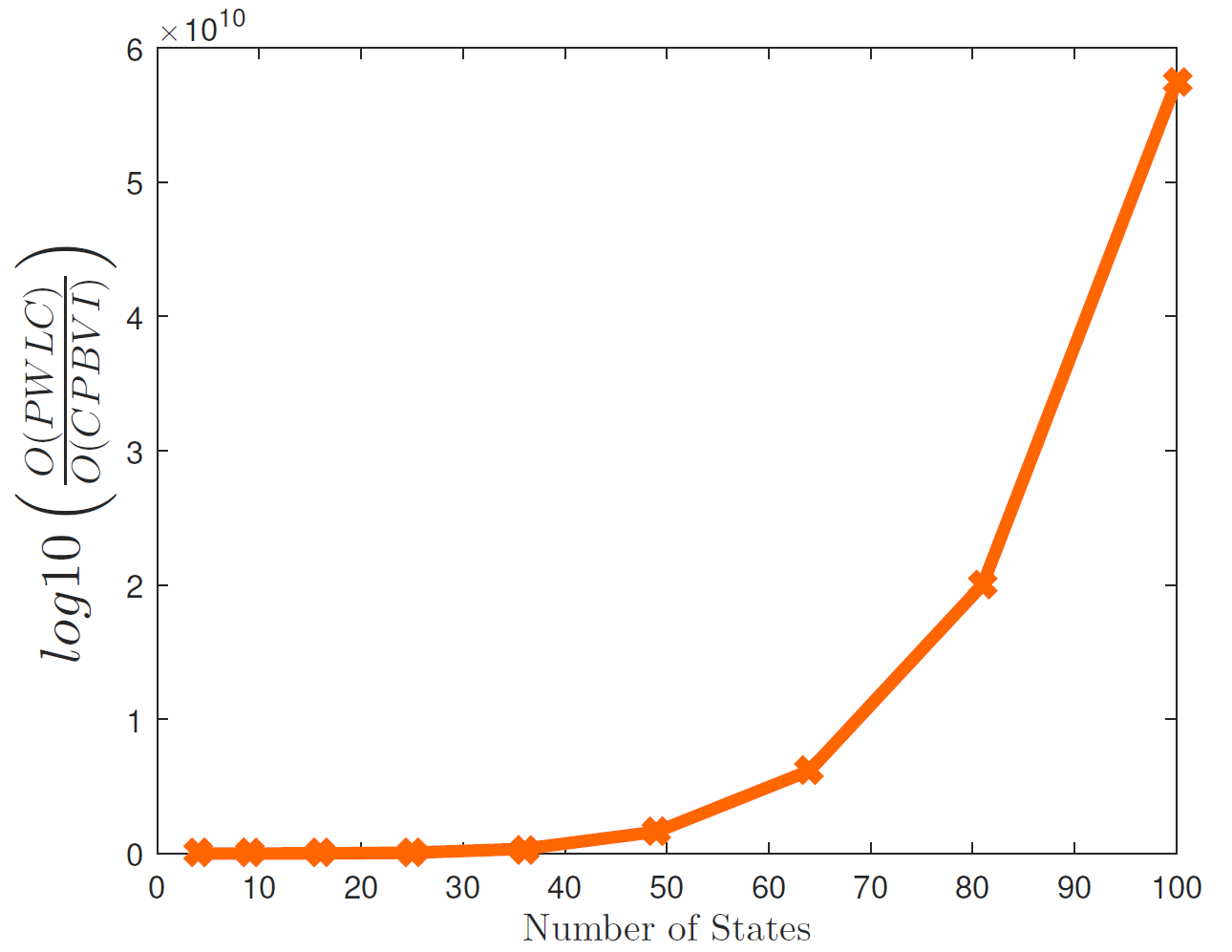}
            \caption[Network2]%
            {{\small Complexity of \gls{CPBVI} vs exact solution}}    
            \label{fig.CompPWLCvsCPBVI}
        \end{minipage}
        \hfill
        \begin{minipage}[b]{0.475\textwidth}  
            \centering 
            \includegraphics[scale=0.4]{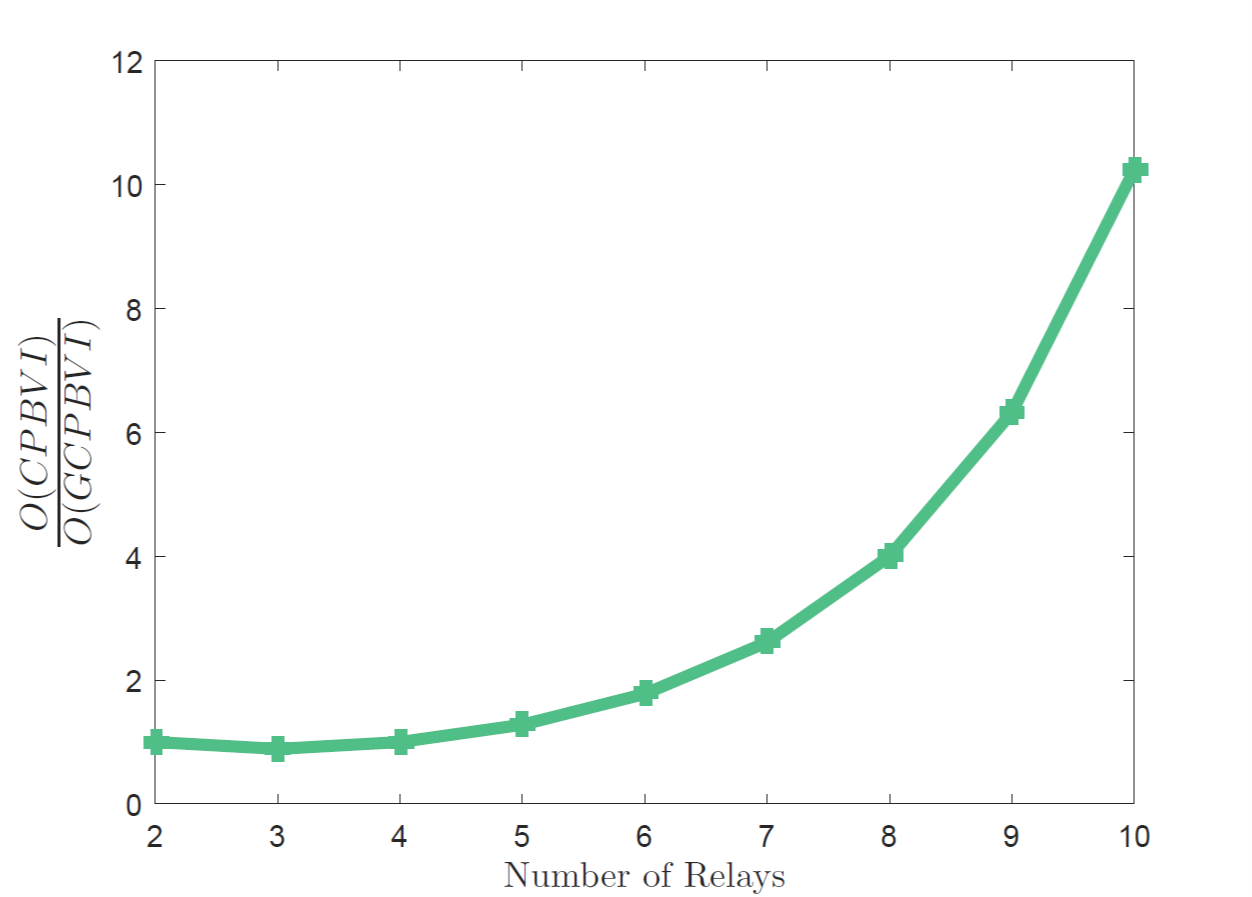}
            \caption[]%
            {{\small Complexity of \gls{GCPBVI} vs \gls{CPBVI}}}
            \label{fig.CompCPBVIvsGCPBVI}
\end{minipage}
\caption{Complexity study of the proposed relay selection policies}
      %  {\small The average and standard deviation of critical parameters: Region R4} 
       \label{fig.PWLCvsCPBVIvsGCPBVI} 
    \end{figure}
    
Moreover, we evaluate the performance of the greedy \gls{GCPBVI} algorithm (given in \ref{subsec:RS_GCPBVI}) compared to the \gls{CPBVI} scheme (given by \ref{subsec:RS_CPBVI}). \iffalse These algorithms are not compared to the exact solution because, as we shown before, it is intractable to find fo Due to the intractability of the exact solution, \fi Indeed, the comparison is done in terms of the average cumulative reward, cost and \gls{EE} of both \gls{CPBVI} and \gls{GCPBVI} relay selection solutions. For this comparison, we consider the following simple scenario: single \gls{MU} with $K=2$ potential relays and $S_x=3$ times $S_y=3$ rectangular regions (i.e. $9$ states). Figure \ref{fig.RS_NS_GCPBVI} verifies that the \gls{GCPBVI} almost return the same average cumulative reward as the \gls{CPBVI} algorithm. Both solutions have an average cumulative cost that is lower than the cost threshold $C_{th}$ given in \ref{table.RS_NumSettings}. Therefore, similar average cumulative energy efficiency is deduced for both algorithms.

%\hspace{-20pt}     
\iffalse
eps fix=0.7 and 10 horizon and 100 runs 
cost thresh 200 mW*5
E=2;%10;%nb of discovered relays and one direct link
E_vect=E;%[2:4];%E;
S_x=3;%nb of regions at the x axis3 
S_y=S_x;%nb of regions at the y axis
S=S_x*S_y;%10;% nb of regions for mobility 
simu_time=5; %duration of the simulation ))> how much greedy algo is iterated
nb_simu=100; %averaging over different realizations
compare_to_cell=0;
reward_type=2;%1: reliability;%2: throughput;%3:coverage
cost_type=1;%1: energy;%2: budget; %3:energy + budget
algo=1;%Prob1: max reward under cost constraint of value algo_param
       %Prob2: min cost under reward constraint of value algo_param
algo_param=200*simu_time;%1.5;%0.3;%Value of the constraint threshold
epsilon_fix_vect=[1:5];%0.7;%;%[0:0.2: 1];%probability of staying 0.7 good result
epsilon_mobility_vect=1-epsilon_fix_vect;
wrap_around=0;%0: false and 1:true
eps_comp_PBVI=0.01;
\fi
   \begin{figure}[ptb]
   %\hspace{-20pt}
        \centering
        \begin{minipage}[b]{0.475\textwidth}
            \centering
            \includegraphics[scale=0.4]{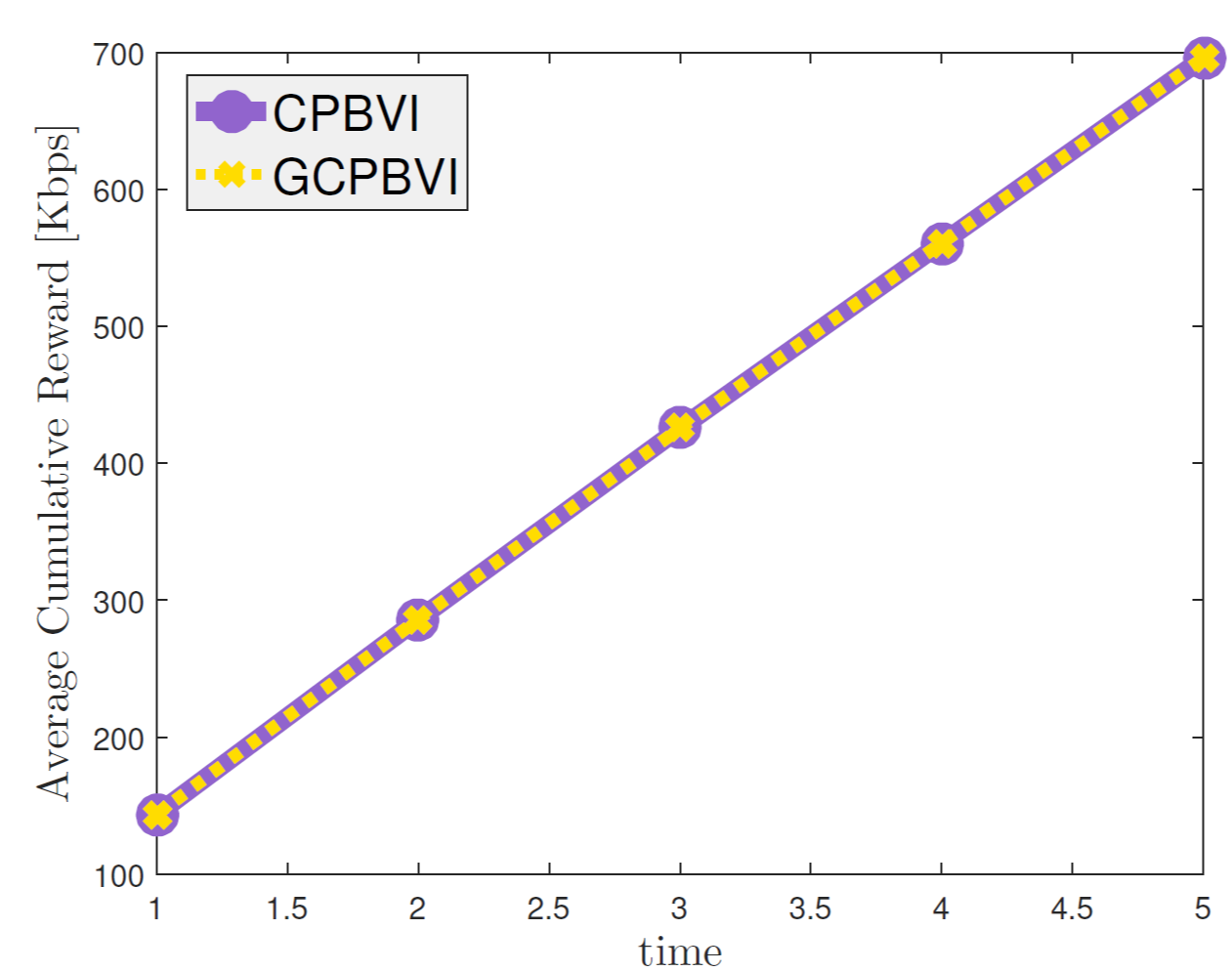}
            \caption[Network2]%
            {{\small Average Cumulative Reward}}    
            \label{fig.SP_CR}
        \end{minipage}
        \hfill
        \begin{minipage}[b]{0.475\textwidth}  
            \centering 
            \includegraphics[scale=0.4]{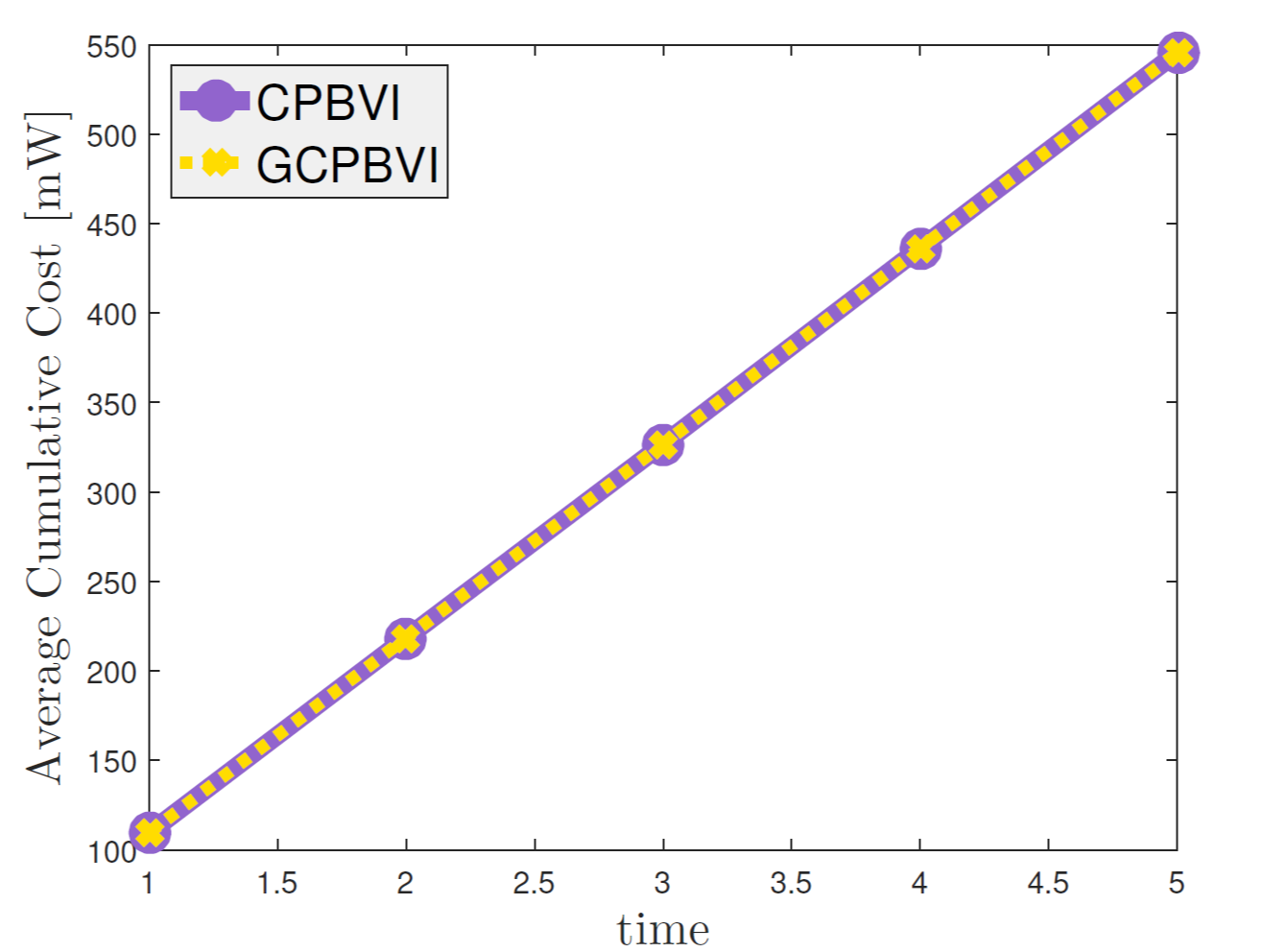}
            \caption[]%
            {{\small Average Cumulative Cost}}    
            \label{fig.SP_CC}
        \end{minipage}
        \vskip\baselineskip
         \centering
        \begin{minipage}[b]{0.475\textwidth}   
            \centering 
            \includegraphics[scale=0.4]{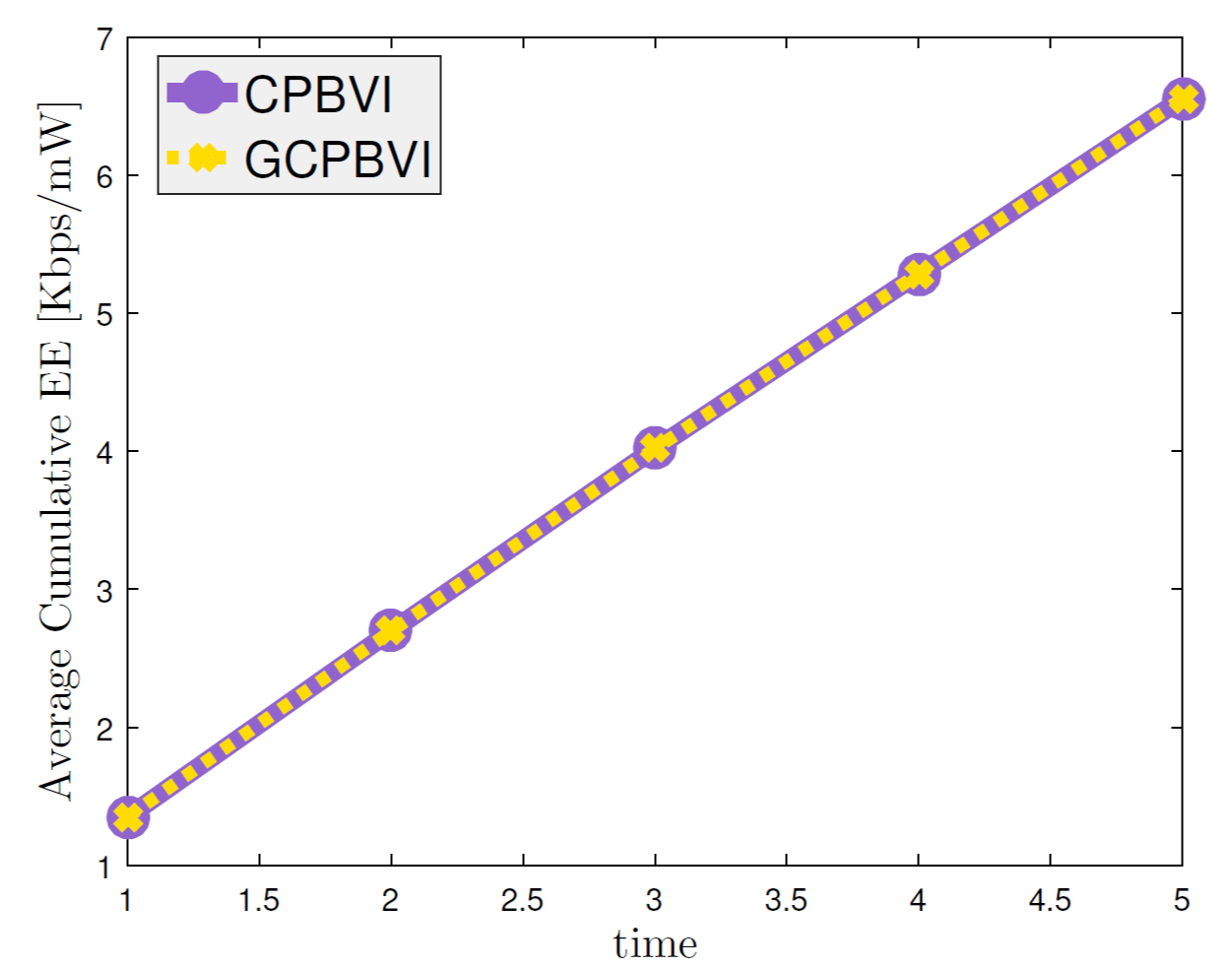}
            \caption[]%width=\textwidth
            {{\small Average Cumulative \gls{EE}}}    
            \label{fig.SP_CEE}
        \end{minipage}
\caption{Comparison between the performance of following relay selection policies: (i) the \gls{CPBVI} policy (see \ref{subsec:RS_CPBVI}) and (ii) the \gls{GCPBVI} policy (see \ref{subsec:RS_GCPBVI})}
       \label{fig.RS_NS_GCPBVI} 
    \end{figure}
 \iffalse
\begin{itemize}
\item put a figure that compares the theoretical complexity of PWLC, PBVI and the greedy PBVI
\item compare reward/cost and EE between greedy and exact policies (prob 1 and prob 2) as function of time
\item compare reward/cost and EE between greedy and exact policies (prob 1 and prob 2) as function of the relay velocity
\end{itemize}
\fi
\subsubsection{D2D relaying performance}
Recall that the aim of this work is to enhance the performance of cellular networks. Thus, we consider the throughput as the performance metric to study and show how implementing the proposed relay selection policy can improve the throughput of cellular networks. We consider the single \gls{MU} scenario with $K=3$ relays and $|\mathcal{S}|=16$ regions and we show the results of both scenarios: with and without \gls{D2D} relaying. For different speeds of state changing $v$, we plot in figure \ref{fig.SP_D2DvsCell} the histogram of the average cumulative reward of both scenarios with and without \gls{D2D} relaying. In the scenario where \gls{D2D} is enabled we apply the proposed \gls{GCPBVI} based relay selection algorithm. The speed $v$ in this figure illustrates the velocity of the relay in moving between the regions. Hence, a speed $v$ is modeled by considering a transition matrix of $\bm{P}^v$. Figure \ref{fig.SP_D2DvsCell} shows that we can gain up to $55\%$ percent in terms of throughput by deploying our policy of relay selection in a cellular network. We note a slow decreasing in the throughput when the speed of the relays increases.

\iffalse
E=2;%10;%nb of discovered relays and one direct link
E_vect=E;%[2:4];%E;
S_x=4;%nb of regions at the x axis3 
S_y=S_x;%nb of regions at the y axis
S=S_x*S_y;%10;% nb of regions for mobility 
simu_time=5; %duration of the simulation ))> how much greedy algo is iterated
nb_simu=50; %averaging over different realizations
compare_to_cell=1;
reward_type=2;%1: reliability;%2: throughput;%3:coverage
cost_type=1;%1: energy;%2: budget; %3:energy + budget
algo=1;%Prob1: max reward under cost constraint of value algo_param
       %Prob2: min cost under reward constraint of value algo_param
algo_param=100*simu_time;%1.5;%0.3;%Value of the constraint threshold
epsilon_fix_vect=[1:5];%0.7;%;%[0:0.2: 1];%probability of staying 0.7 good result
epsilon_mobility_vect=1-epsilon_fix_vect;
wrap_around=0;%0: false and 1:true
eps_comp_PBVI=0.01

        x_ref=S_x;
        y_ref=S_y;
\fi
\begin{centering}
\begin{figure} %[H]
\centering
\captionsetup{justification=centering}
\includegraphics[scale=0.4]{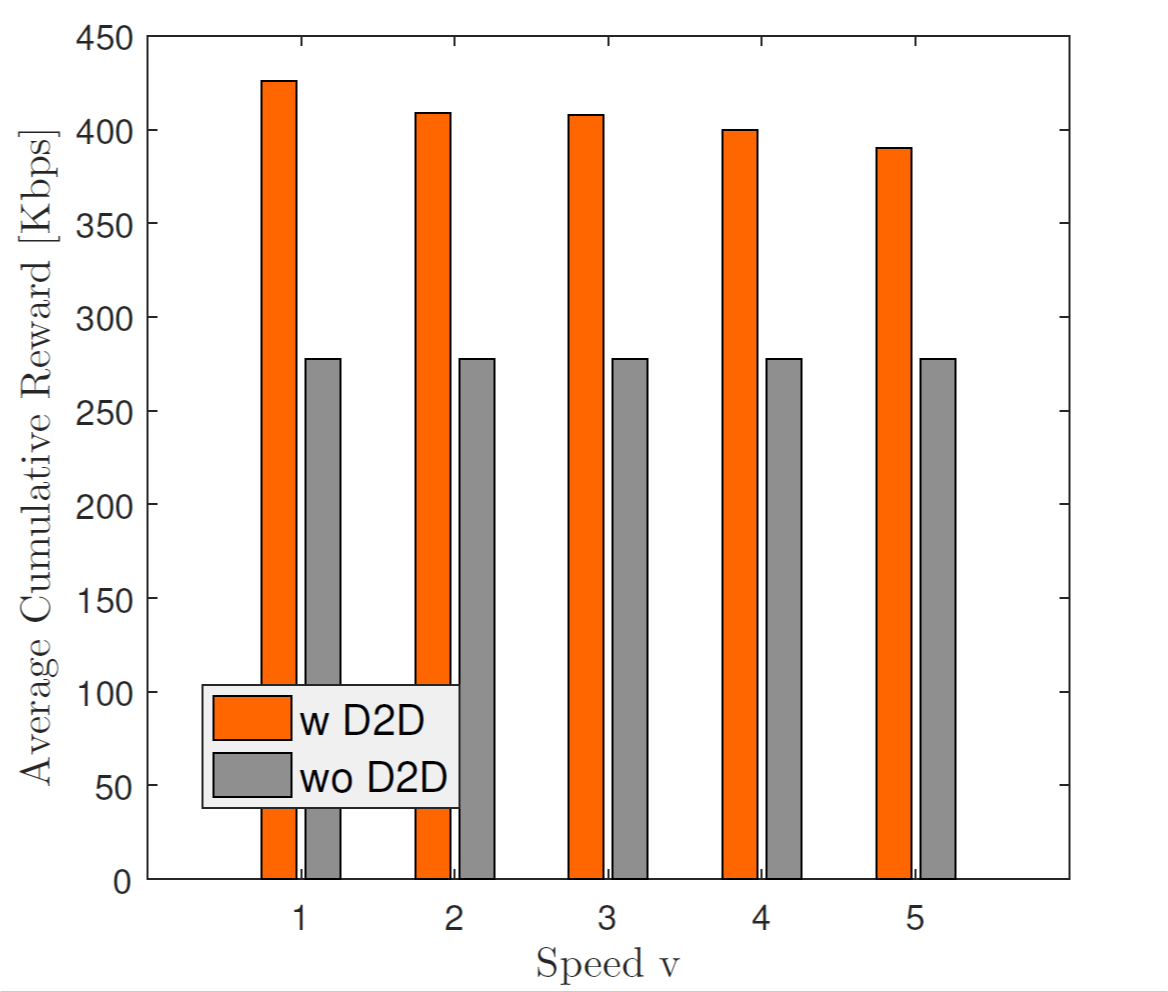}
\caption{For single \gls{MU}, performance comparison between both scenarios: with and without \gls{D2D} relaying}
\label{fig.SP_D2DvsCell} 
\end{figure}
\end{centering}

\iffalse
\begin{itemize}
\item histogram to compare between cellular scenario w/o \gls{D2D}, w perfect state knowledge \gls{D2D}, w proposed relay selection \gls{D2D} (prob 1 and prob 2) for different scenarios as function of number of relays or as function of the relays speed (for edge, center and close to \gls{BS} users)
\item comparison to the optimal solutions (known state) and to the random policy
\item compare to the case with no relays and to the perfect case (of global knowledge of all the locations)
\end{itemize}
 \fi
\subsection{Multi-\gls{MU} scenario}
We consider the case of multiple \gls{MU}s in the network. Both centralized and distributed relay selection policies were proposed. For evaluating these algorithms we consider $|\mathcal{S}|=16$ locations in the network and we study the complexity as well as the performance of both distributed and centralized relay selection solutions. 

\subsubsection{The Performance of the Distributed Approach}
In figure \ref{fig.CompMultiPlayer}, the complexity of the centralized solution for multi-\gls{MU} scenario (given in \ref{subsec.MP_Cent}) is compared to that of the distributed relay selection solution (given in \ref{subsec.MP_Dist}). The choice of developing a low complexity distributed approach for the multi-\gls{MU} scenario is validated in figure \ref{fig.CompMultiPlayer}. Indeed, the complexity reduction that the distributed approach offers is proportional to both the number of \gls{MU} $N$ and the number of relays $K$. As an example, the distributed approach reduces the complexity up to $150$ times compared to the centralized one in a realistic scenarios of $K=10$ relays and $N=10$ users.

\begin{figure}%[H]
%\hspace{-20pt}
\centering
        \begin{minipage}[b]{0.475\textwidth}
            \centering
            \includegraphics[scale=0.3]{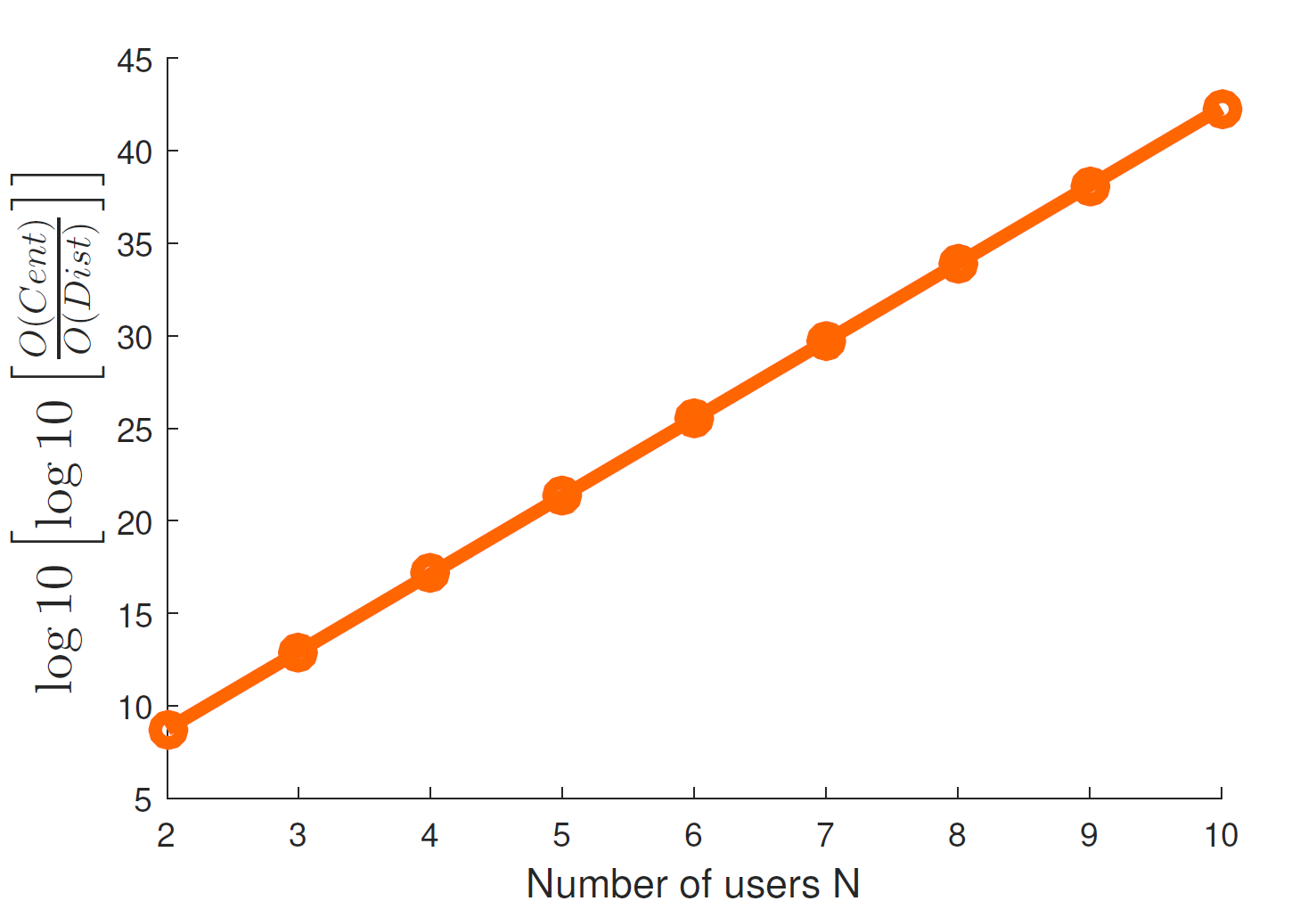}
            \caption[Network2]%
            {{\small Complexity of centralized vs distributed solutions as function of N}}    
            \label{fig.CompCentvsDist}
        \end{minipage}
        \hfill
        \begin{minipage}[b]{0.475\textwidth}  
            \centering 
            \includegraphics[scale=0.3]{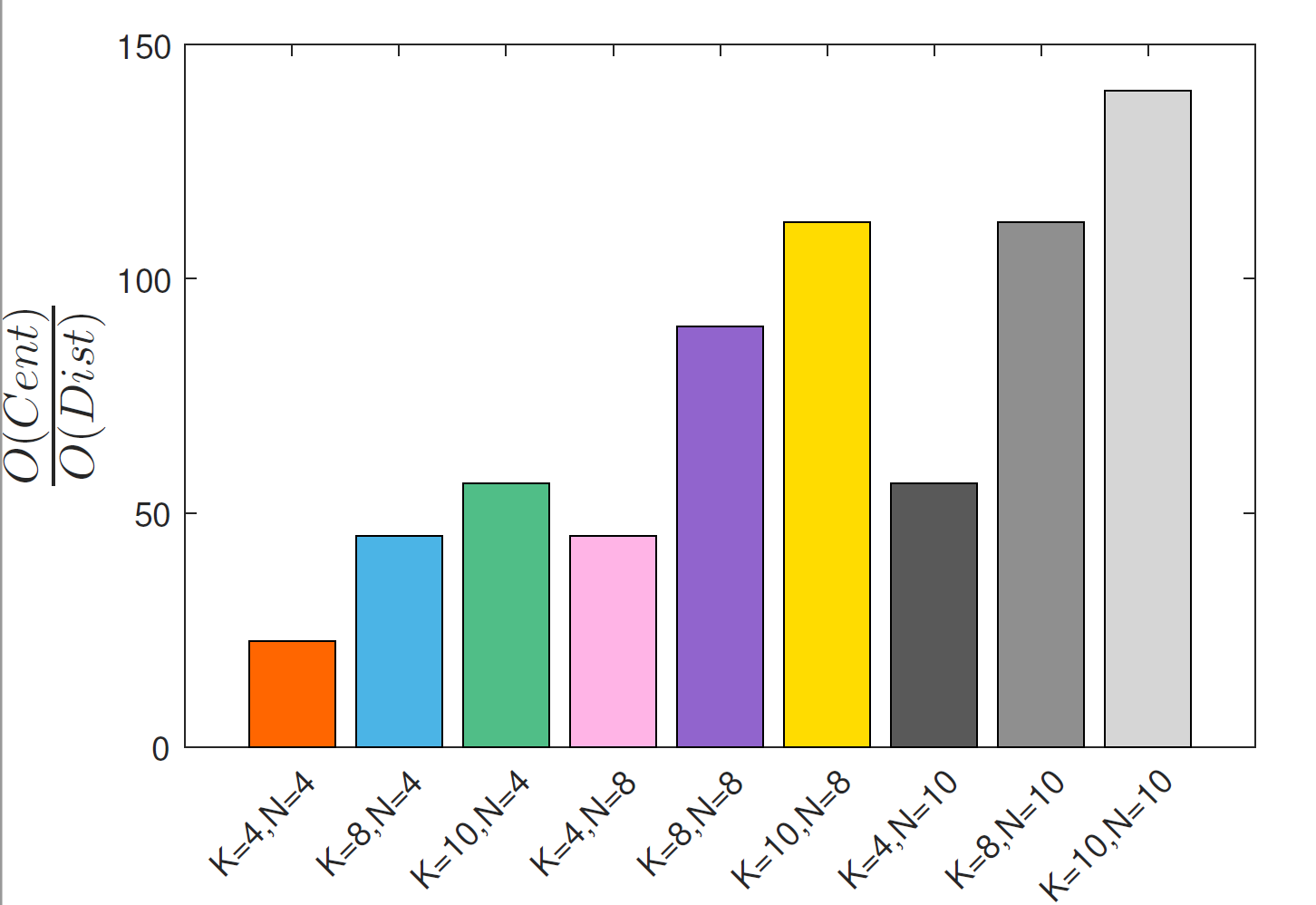}
            \caption[]%
            {{\small Complexity of centralized vs distributed solutions for different configurations}}
            \label{fig.CompCentvsDist_Config}
\end{minipage}
\caption{Complexity study of the proposed relay selection policies}
      %  {\small The average and standard deviation of critical parameters: Region R4} 
       \label{fig.CompMultiPlayer} 
    \end{figure}
    
The distributed approach reduces the computational complexity while satisfying a performance close to the centralized one. From figure \ref{fig.MP_CR} to figure \ref{fig.MP_D2DvsCell}, the performance of the distributed approach is compared to the centralized one. Similarly to the single \gls{MU} case, the comparison is done in terms of the average cumulative reward, cost and \gls{EE} of both distributed and centralized solutions. For this comparison, a simple scenario of $N=5$ and $K=4$ is considered. Figure \ref{fig.MP_CR} verifies that the low complexity distributed approach almost return the same average cumulative reward as the centralized one. Moreover, both centralized and distributed solutions verify the required cost constraints for each \gls{MU} (i.e. average cumulative cost lower than the cost threshold $C_{th}$ given in \ref{table.RS_NumSettings}). In result, similar average cumulative energy efficiency is deduced for both algorithms. We deduce, that applying distributed relay selection induces interesting performance enhancement of the network with a low computational complexity compared to the centralized approach.

     \begin{figure}[ptb]
        \centering
        \begin{minipage}[b]{0.475\textwidth}
            \centering
            \includegraphics[height=0.6\textwidth]{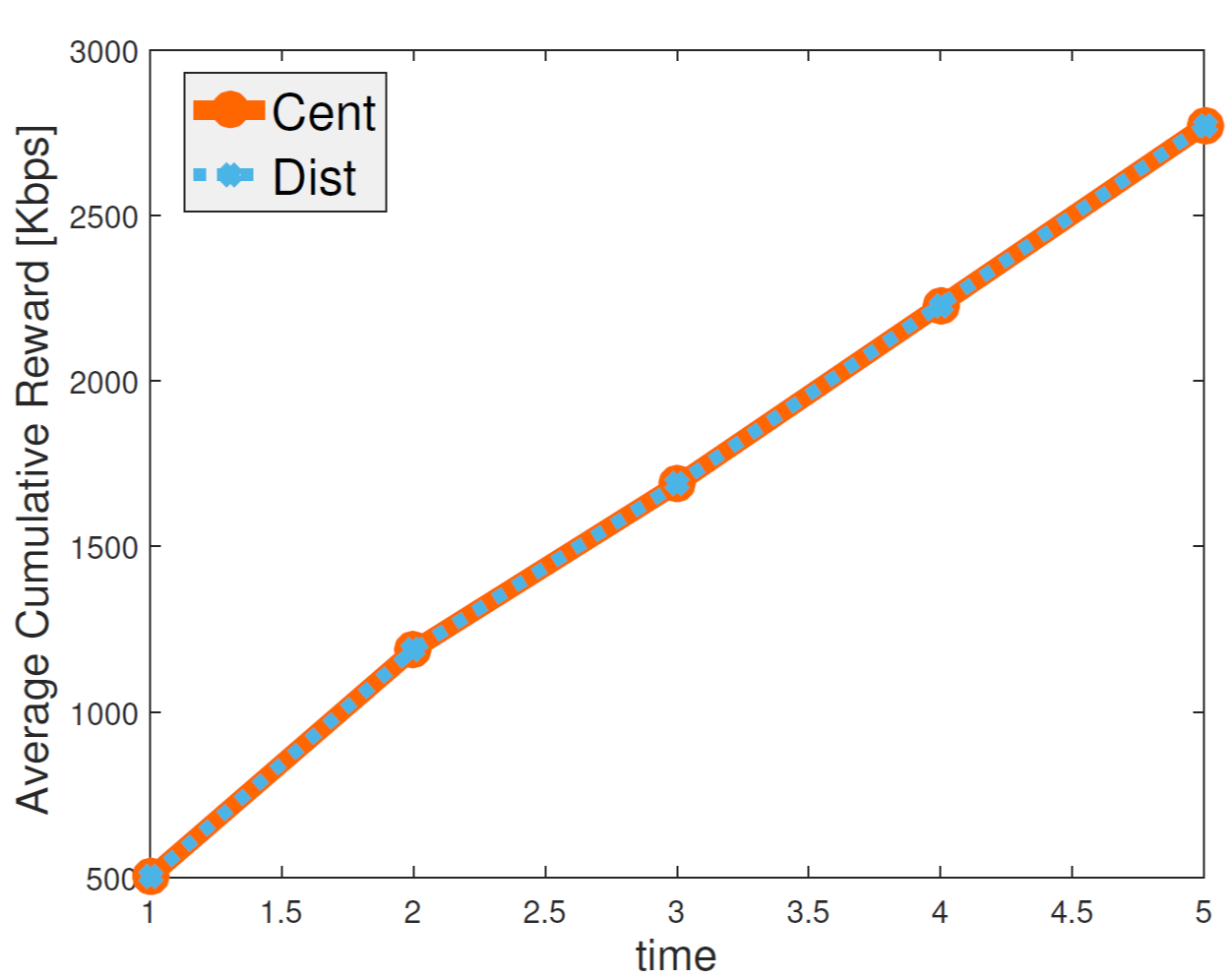}
            \caption[Network2]%
            {{\small Average Cumulative Reward}}    
            \label{fig.MP_CR}
        \end{minipage}
        \hfill
        \begin{minipage}[b]{0.475\textwidth}  
            \centering 
            \includegraphics[height=0.6\textwidth]{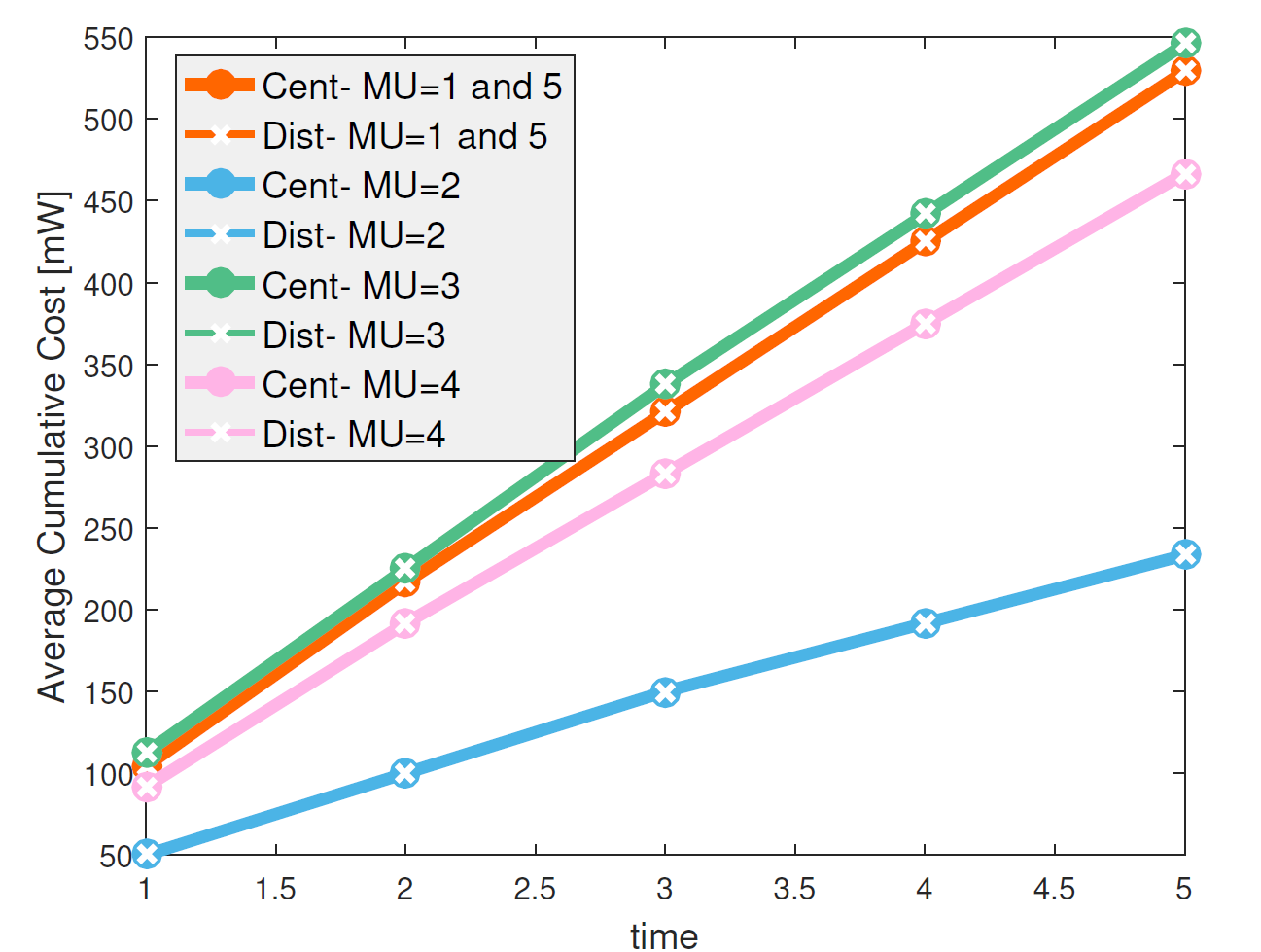}
            \caption[]%
            {{\small Average Cumulative Cost}}    
            \label{fig.MP_CC}
        \end{minipage}
        \vskip\baselineskip
        \begin{minipage}[b]{0.475\textwidth}   
            \centering 
            \includegraphics[height=0.6\textwidth]{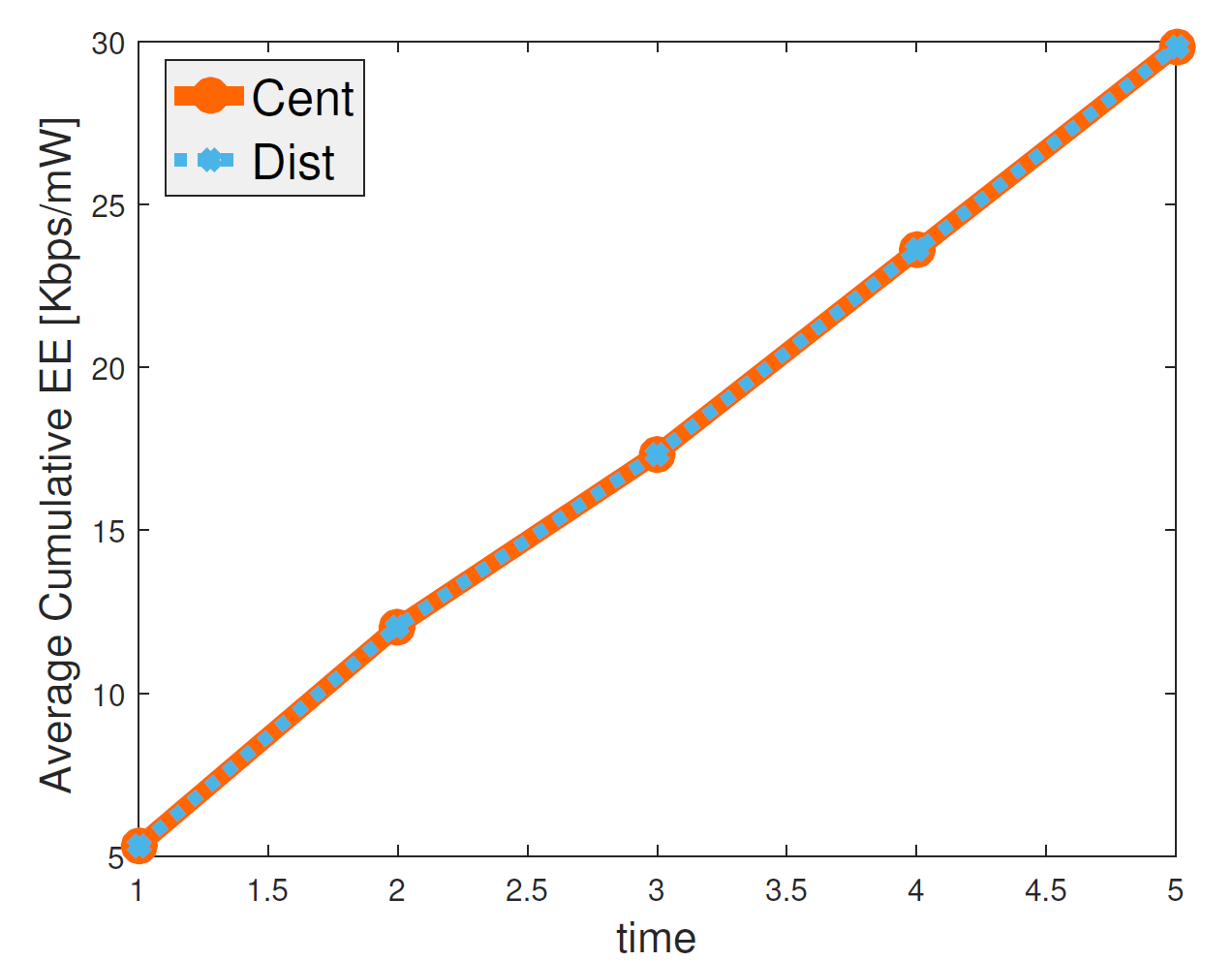}
            \caption[]%
            {{\small Average Cumulative \gls{EE}}}    
            \label{fig.MP_CEE}
        \end{minipage}
        \hfill
        \begin{minipage}[b]{0.475\textwidth}  
            \centering 
            \includegraphics[height=0.6\textwidth]{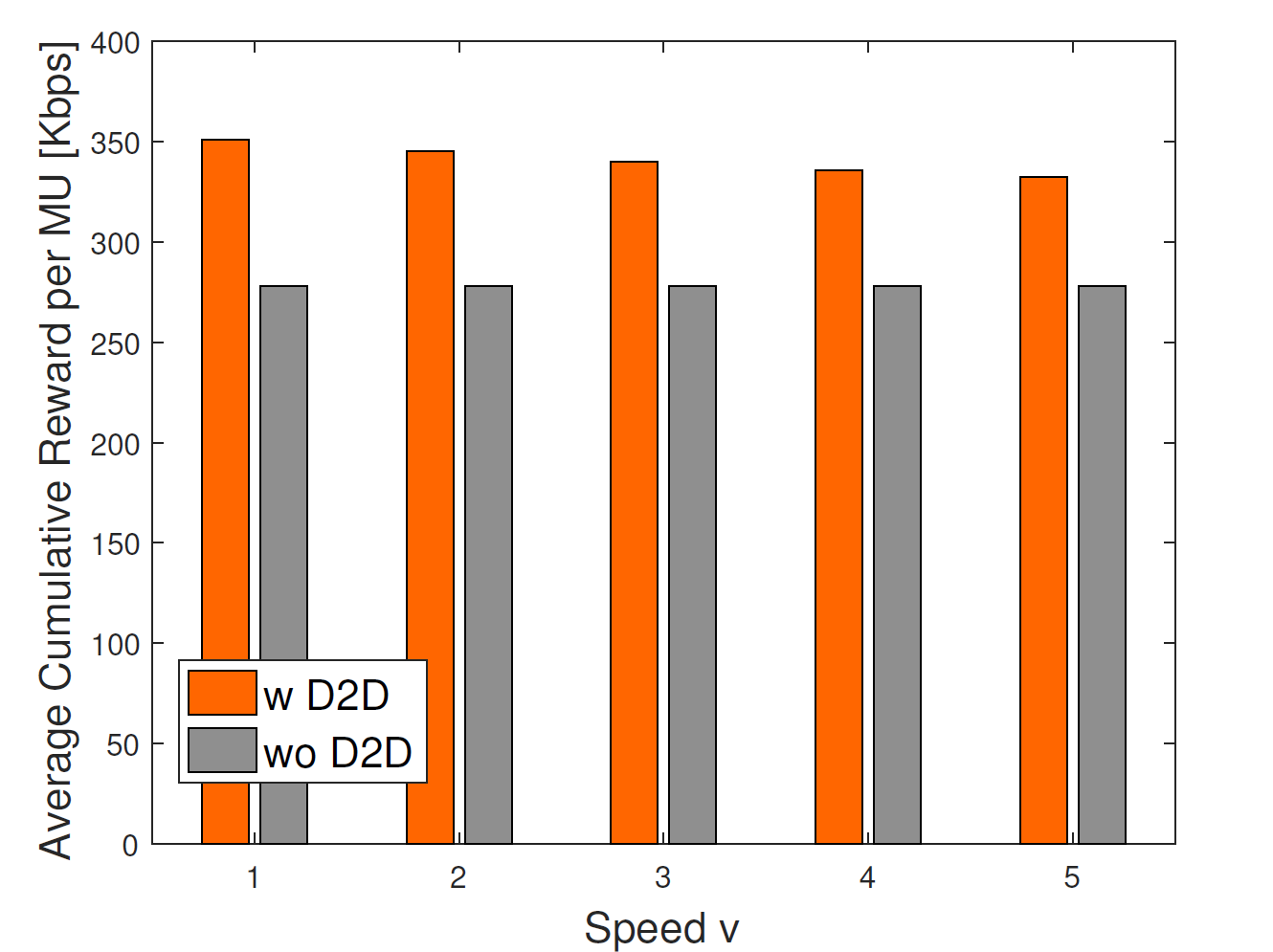}
            \caption[]%
            {{\small For multi \gls{MU} scenario, performance comparison between both scenarios: with and without \gls{D2D} relaying}}    
            \label{fig.MP_D2DvsCell}
        \end{minipage}
       % \caption{Comparison between the performance of the centralized and the distributed approaches of a multi-\gls{MU} scenario}
       \label{fig.RS_MP_CentDistPerf} 
    \end{figure}
\subsubsection{\gls{D2D} Relaying Performance}
We show how implementing the proposed relay selection policy can improve the throughput of cellular networks. We consider the multiple \gls{MU}s scenario with $N=5$, $K=4$ discovered relays and $|\mathcal{S}|=16$ regions. For different speeds of state changing $v$, we plot in figure \ref{fig.MP_D2DvsCell} the histogram of the average cumulative reward per \gls{MU} for both scenarios with and without \gls{D2D} relaying. In the scenario where \gls{D2D} is enabled we apply the proposed \gls{GCPBVI} based relay selection algorithm. The speed $v$ in this figure illustrates the velocity of the relays in moving between the regions. Hence, a speed $v$ is modeled by considering a transition matrix of $\bm{P}^v$. Each \gls{MU} applies, in a distributed manner, the relay selection policy proposed in \ref{subsec:RS_GCPBVI}. Figure \ref{fig.MP_D2DvsCell} shows that, in average, a \gls{MU} can gain up to $30\%$ percent of throughput by deploying our policy of relay selection in a cellular network. We note a slow decreasing in the throughput when the speed of the relays increases.

\section{Simulation Results \label{sec:RS_SimuResults}}
We have developed a \gls{3GPP} compliant system level simulator that supports \gls{DL}, \gls{UL} and \gls{D2D} communications. We describe the details of the simulations' settings in appendix \ref{sec:SimuSetting}. In this section we describe the \gls{D2D} relaying scenario that has been implemented in the simulator and we evaluate the gain that our proposed relay selection policy achieves in a simulated cellular network. 

\subsection{Scenario}
To complete the results of this paper, we simulate the scenario presented in the flow chart \ref{fig.FlowChart}. The performance of \gls{D2D} relaying is evaluated separately for \gls{UL} and \gls{DL} scenarios. In this report, we will only present the results corresponding to the \gls{DL} case for lack of space; however, we would like to note that similar results were evaluated for the \gls{UL} scenario. The initialization phase of the simulations consists of  filling the simulation's parameters described in \ref{sec:SimuSetting} and then generating the cellular network (\gls{BS},  \gls{UE}s, received power map, traffic generation etc.).  We denote by $N$ the number of users and $K$ the number of relays generated in each macro cell of the network. The mobility model of the relays is the one considered in the numerical section \ref{sec:RS_NumResults}. We use $v$ to denote the speed of the relays. (i.e. transition matrix $\bm{P}^v$).

The three main blocks of the simulated scenario are the following: (i) \textbf{relays discovery} where each user discovers nearby relays, (ii)\textbf{ relays selection} where each user selects the relays that will ensure the transmission of its traffic (none of the relays is selected if the user found it more beneficial to have traditional cellular communications) and (iii) \textbf{transmission} where a user is scheduled in order to transmit its data (i.e. either in a direct way or through relays). With a periodicity of $10$ s, each user runs its discovery process from which it deduces nearby devices. The relays move from a region to another (among the $16$ possible regions of each cell) with a time-scale of $T_r=2$ s. The relay selection decision is taken with the same periodicity $T_r$. Each \gls{TTI}, the fast fading is updated for all cellular and \gls{D2D} communications. The allocation of cellular and \gls{D2D} resources is done at the \gls{TTI} scale. After data transmission, if the whole user's file is transmitted then a new file is generated according to the FTP2 model described in \ref{sec:SimuSetting}.

\iffalse
\begin{table}
\centering
  \begin{tabular}{|c|c|}
\hline 
Parameter & Value \tabularnewline
\hline
\hline 
Time Scales & $T_{simu}=????$, $T_r=???$, $T_{disc}=????$ \tabularnewline
\hline 
Reward and cost model & Throughput and energy power control ???? \tabularnewline
\hline 
Mobility Pattern & Number of states and probabilities ????? \tabularnewline
\hline 
\end{tabular}
\caption{Simulator Settings}
\label{table.Simulations_Settings}
\end{table}
\fi
We denote by $T_{simu}$ the simulation duration of $30 000$ \gls{TTI}s and $W=10$ MHz the bandwidth of the cellular network. The three metrics that are considered for studying the performance of the proposed relay selection policy are the following:
\begin{itemize}
\item \textbf{\gls{CDF} of the \gls{SINR}} which reflexes the coverage area enhancement due to \gls{D2D} relaying in cellular network.
\item If $V_{data}$ [bits] is the amount of data transmitted by all the users then the \textbf{first throughput} criteria  (i.e. from the network point of view) is computed as follows:
\begin{equation}
\label{eq.Throughput1}
Th_1=\frac{V_{data}}{\bar{N}_{waiting}\times T_{simu}\times W}
\end{equation}
where $\bar{N}_{waiting}$ is the average number of users that are waiting to be served at a given \gls{TTI}.
\item If $F$ is the number of files that have been transmitted with $R_i$ the throughput at which the $i^{th}$ was transmitted, then \textbf{the second throughput} criteria  (i.e. from the user's point of view) is computed as follows:
\begin{equation}
\label{eq.Throughput2}
Th_2=\frac{\sum \limits_{i=1}^F R_i}{F}
\end{equation}
\end{itemize}

\begin{centering}
\begin{figure} %[H]
\centering
\captionsetup{justification=centering}
\includegraphics[scale=0.8]{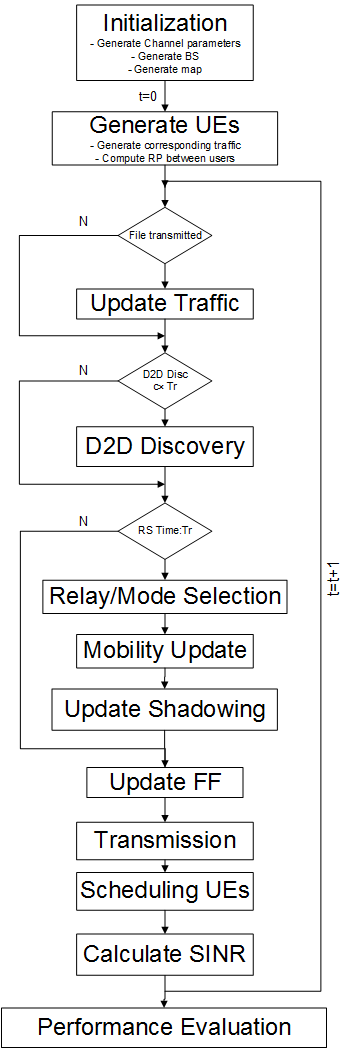}
\caption{Flow Chart of \gls{D2D} relaying procedures in the system level simulator}
\label{fig.FlowChart} 
\end{figure}
\end{centering}

In the following results, the considered reward criteria is the bit-rate which is deduced from the \gls{SINR} based on the \gls{SINR}-throughput mapping described in \ref{sec:SimuSetting}. In addition, the cost criteria is given by the relays' transmission power. Since a power control mechanism is applied to \gls{UL} and \gls{D2D} communications, the relays' transmission power will depend on their channel conditions (i.e. \gls{SINR}). The cost threshold $C_{th}$ that each user should not exceed is equal to $1$ W. 

\vspace{-20pt}
\subsection{Results}
We consider the case of \gls{DL} communications aided by \gls{D2D} relaying. The relay selection algorithm implemented is the one described in \ref{subsec:RS_GCPBVI}. The performance of this scenario is given as function of the number of relays $K$ and the number of users $N$ existing in each macro cell. We show in figure \ref{fig.Simu_SINR_DL} that enabling \gls{D2D} relaying improves the \gls{SINR} of \gls{DL} communications. Moreover, we study the effect of the number of relays $K$ on the performance of \gls{DL} communications. For this aim, we consider $N=5$ users per cell and we plot both throughput metrics (i.e. $Th_1$ and $Th_2$ from equations \ref{eq.Throughput1} and \ref{eq.Throughput2}) as function of the number of relays $K$ (see figure \ref{fig.Simu_Th_f_Relays_DL}). This result illustrates that more the number of relays $K$ increases more the gain achieved by \gls{D2D} relaying is higher. However, one can see that without considering large number of relays, the \gls{D2D} relaying can attain important gain in terms of spectral efficiency of \gls{DL} communications.

\begin{figure}[ptb]
%\hspace{-20pt}
\centering
        \begin{minipage}[b]{0.475\textwidth}
            \centering
            \includegraphics[width=\textwidth]{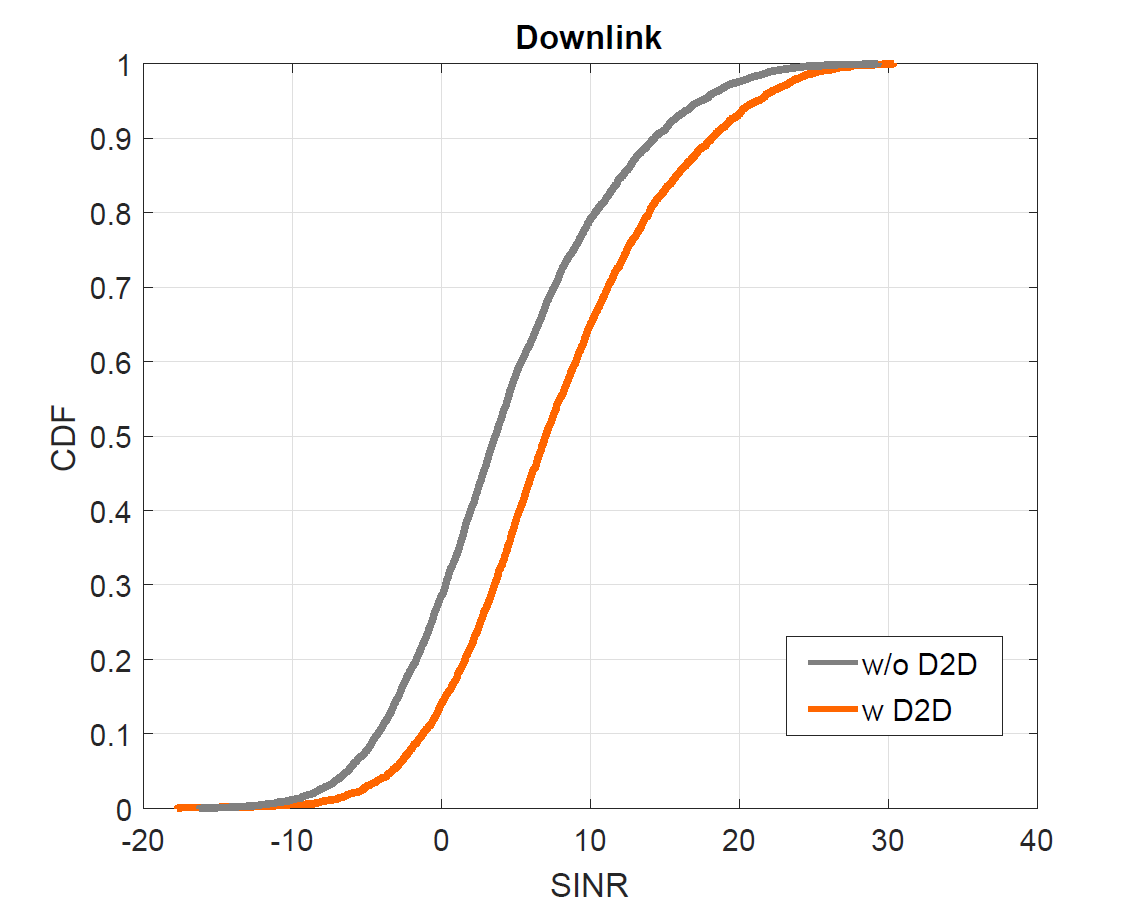}
            \caption[Network2]%
            {{\small \gls{CDF} of the \gls{DL} \gls{SINR} for both scenarios (with and without \gls{D2D} relaying)}}    
            \label{fig.Simu_SINR_DL}
        \end{minipage}
        \hfill
        \begin{minipage}[b]{0.475\textwidth}  
            \centering 
            \includegraphics[width=\textwidth]{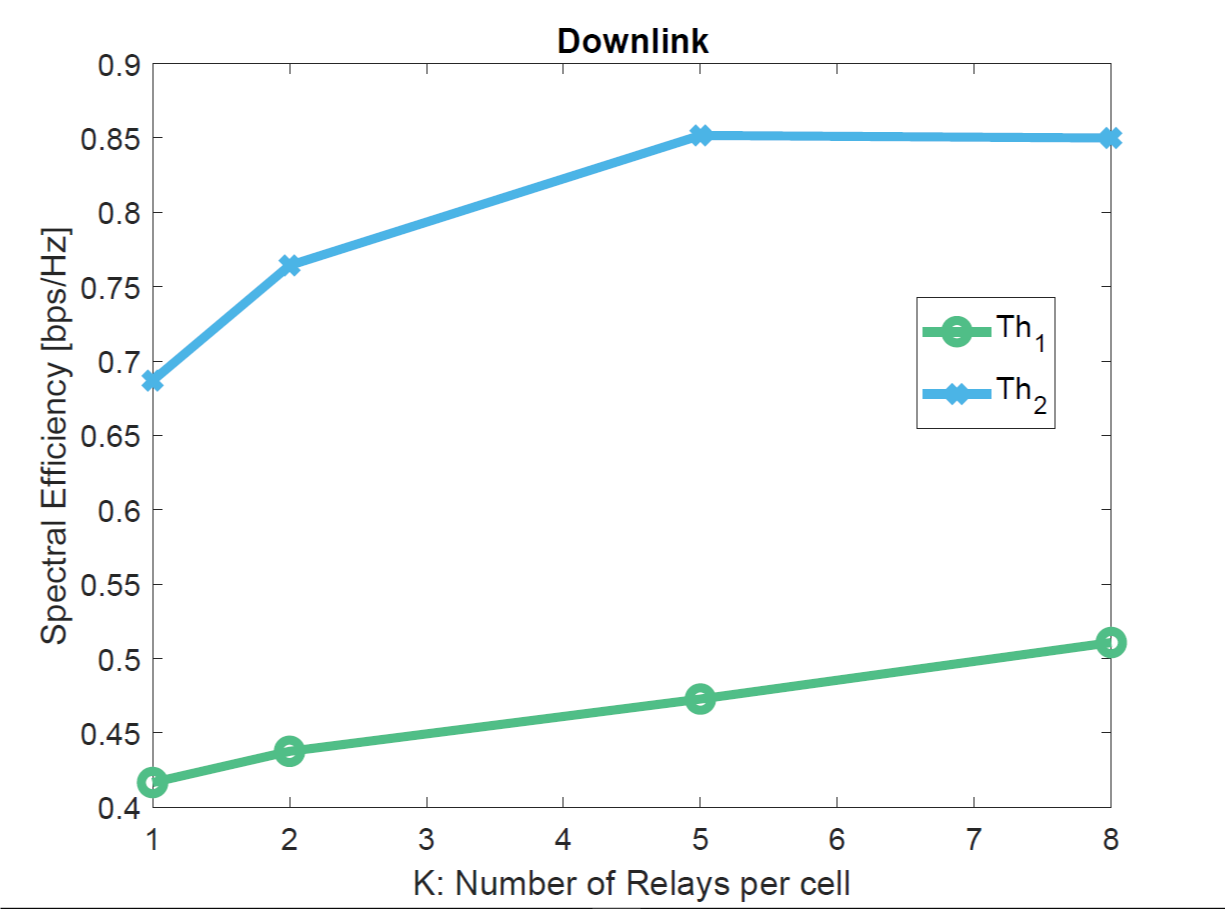}
            \caption[]%
            {{\small The spectral efficiency of \gls{DL} communications as function of $K$}}
            \label{fig.Simu_Th_f_Relays_DL}
\end{minipage}
%\caption{The effect of the}
      %  {\small The average and standard deviation of critical parameters: Region R4} 
     %  \label{fig.CompMultiPlayer} 
    \end{figure}    
In addition, we compare between both scenarios with and without \gls{D2D} relaying as function of the number $N$ of users dropped in each cell. We consider a scenario with $K=4$ relays per cell. We show in \ref{fig.Simu_Th_f_UE_DL} that for different values of $N\in\lbrace 2, 4, 8, 10, 12, 14 \rbrace$, the spectral efficiency of \gls{DL} communications increases due to the deployment of our proposed algorithm of relay selection policy. However, in this figure it is not visible for the reader how much the spectral efficiency is increased. Therefore, for different values of $N$, we present in \ref{fig.Simu_relativeTh_f_UE_DL} the relative gain achieved by both $Th_1$ and $Th_2$ (i.e. given in equations \ref{eq.Throughput1} and \ref{eq.Throughput2}) due to the implementation of our proposed algorithm of relay selection. A spectral efficiency enhancement of more than $50\%$ is recorded.

\begin{figure}%[H]
%\hspace{-20pt}
\centering
            \begin{centering}
            \includegraphics[width=0.5\textwidth]{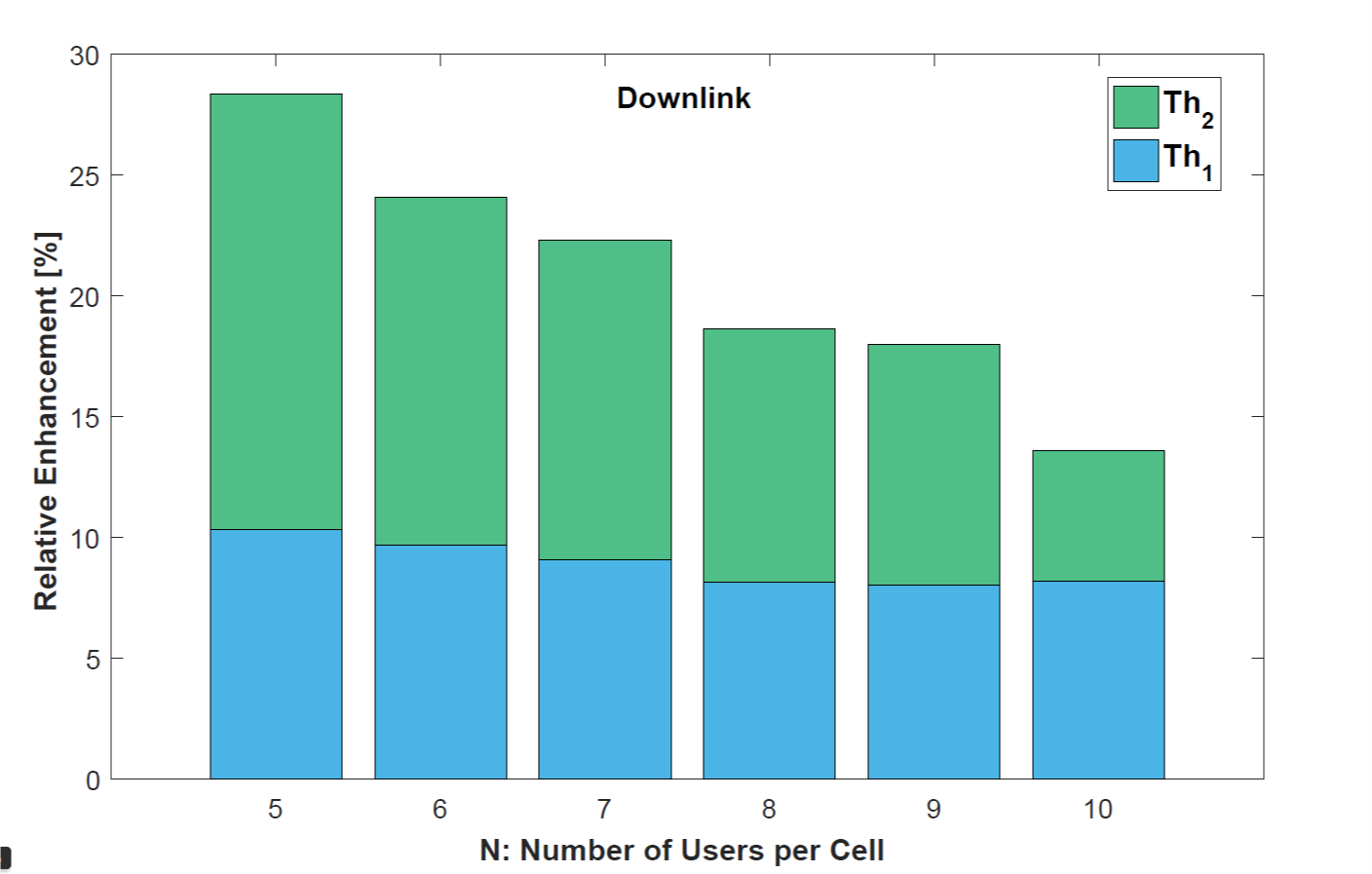}
            \caption{\small The spectral efficiency of \gls{DL} communications as function of the number of users $N$ for both scenarios}    
            \label{fig.Simu_Th_f_UE_DL}
            \end{centering}
        \end{figure}
        
\begin{figure}[H]
\hspace{-20pt}
\centering
            \includegraphics[width=0.5\textwidth]{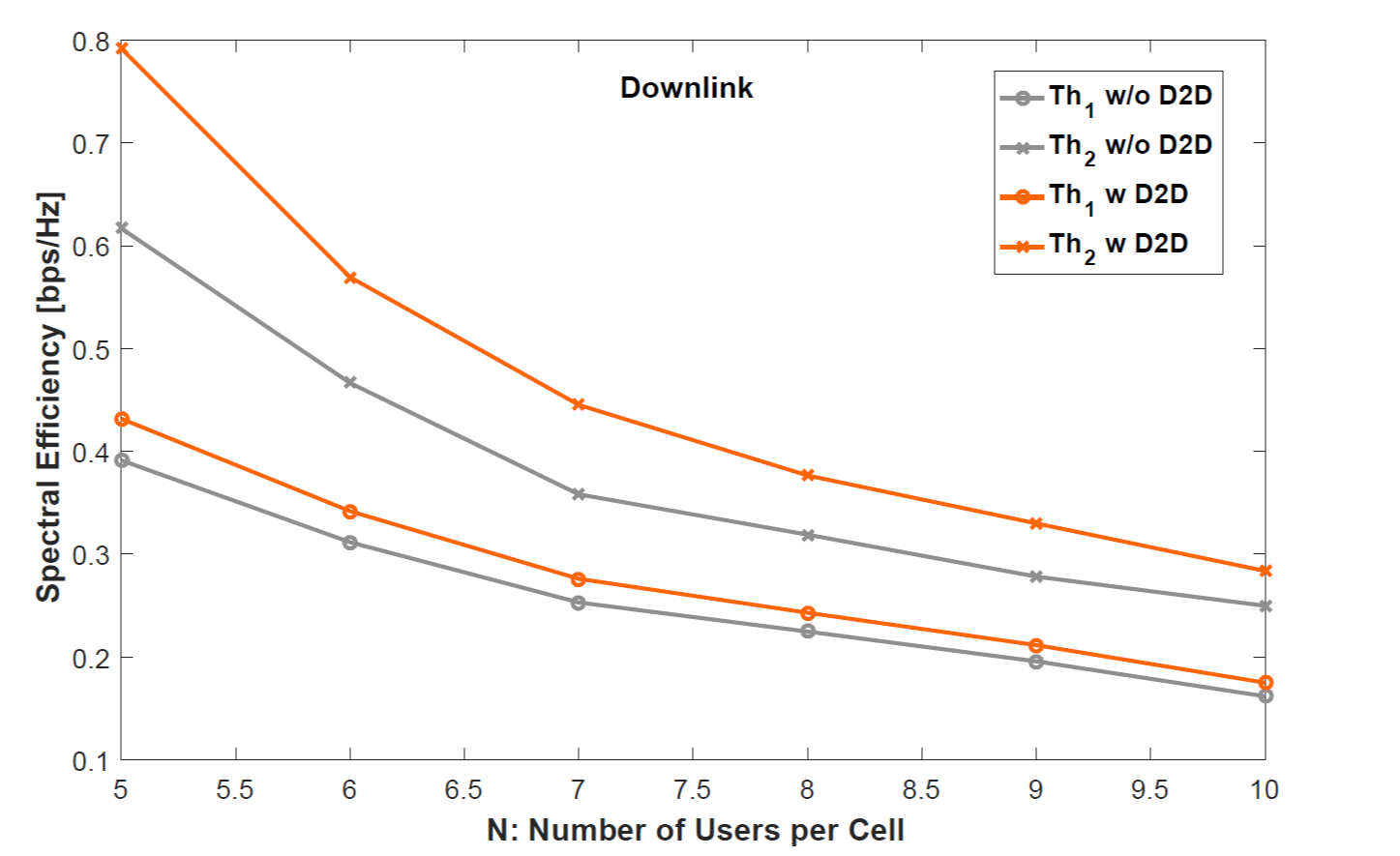}
            \caption{ The relative enhancement of the spectral efficiency of \gls{DL} communications as function of the number of users $N$}
            \label{fig.Simu_relativeTh_f_UE_DL}
        \end{figure}

\section{Conclusion \label{sec:RS_Conclusion}}
Finding the optimal relay selection policy is a challenging problem especially when the mobility of the relays is considered. In this chapter, we have developed a dynamic relays selection strategy that maximizes a certain performance metric of the cellular networks while guaranteeing some cost constraints. A \gls{CPOMDP} problem has been formulated and its complexity is discussed. Thus, a greedy \gls{GCPBVI} algorithm is addressed for achieving a low-complexity and close to optimal solution for the problem. Numerical results show the advantage of such approximation in reducing the problem complexity and prove the performance gain that such solution provides to mobile networks. Our claims are verified by implementing the suggested relay selection scheme in a system-level simulator of cellular networks. 

\appendices
%%%%%%%%%%%%%%%%%%%%%%%%%%%%%%%%%%%%%%%%%
%%%%%%%%%%%%%%%%%%%Appendices%%%%%%%%%%%%
%%%%%%%%%%%%%%%%%%%%%%%%%%%%%%%%%%%%%%%%%

\section{Proof of theorem \ref{th.DensityBound}  \label{proof.DensityBound}}

The following definition and lemma are used for proving theorem \ref{th.DensityBound}.
\begin{definition}
The distance function of relay $i$ with transition probability $\bm{P}_i$ and stationary distribution $\pi_i$ is defined as follows:
\begin{equation}
\label{eq.distFunction}
d_i\left(t\right)= \max \limits_{s\in \mathcal{S}}|| \bm{P}_i^t\left(s,:\right)-\bm{\pi}_i||
\end{equation}
\end{definition}

\begin{lemma}
\label{lem.DistFunc}
From \cite{levin2017markov}, the distance function of relay $i$ verifies the following properties:
\begin{itemize}
\item $d_i\left(t\right)\leq \frac{\lambda_i^{*t}}{\pi_{i,min}}$ with $\lambda_i^*$ the highest eigenvalue of the matrix $\bm{P}_i$ and $\pi_{i,min}$ the lowest component of the stationary distribution corresponding to $\bm{P}_i$.\\
\item $d_i\left(t+a\right)\leq d_i\left(t\right) \forall a \in \mathbb{N}^+$ 
\end{itemize}
\end{lemma}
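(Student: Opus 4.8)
The plan is to prove the two assertions of Lemma \ref{lem.DistFunc} separately, following \cite{levin2017markov}: the monotonicity $d_i(t+a)\le d_i(t)$ by a one-line contraction argument, and the geometric decay $d_i(t)\le\lambda_i^{*t}/\pi_{i,\min}$ by the spectral decomposition of the (reversible) transition matrix $\bm{P}_i$. Both are then available for the proof of Theorem \ref{th.DensityBound}.

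For the monotonicity I would use that multiplying a row vector by a stochastic matrix never increases its $\ell_1$ norm: for any real row vector $\bm{v}$,
\[
\|\bm{v}\bm{P}_i\|_1=\sum_{y}\Bigl|\sum_{x}v_x P_i(x,y)\Bigr|\le\sum_{x}|v_x|\sum_{y}P_i(x,y)=\|\bm{v}\|_1 .
\]
Since $\bm{\pi}_i\bm{P}_i^{a}=\bm{\pi}_i$, for every initial state $s$ one has $\bm{P}_i^{t+a}(s,:)-\bm{\pi}_i=\bigl(\bm{P}_i^{t}(s,:)-\bm{\pi}_i\bigr)\bm{P}_i^{a}$, hence $\|\bm{P}_i^{t+a}(s,:)-\bm{\pi}_i\|_1\le\|\bm{P}_i^{t}(s,:)-\bm{\pi}_i\|_1$; taking the maximum over $s$ gives $d_i(t+a)\le d_i(t)$ for all $a\in\mathbb{N}^{+}$. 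The step is norm-agnostic, so it also covers the total-variation normalisation (which is half the $\ell_1$ one).

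For the decay bound I would invoke reversibility of $\bm{P}_i$ with respect to $\bm{\pi}_i$, so that $\bm{P}_i$ is self-adjoint on $\ell^2(\bm{\pi}_i)$ with real eigenvalues $1=\lambda_{i,1}>\lambda_{i,2}\ge\cdots\ge\lambda_{i,|\mathcal{S}|}>-1$ and a $\bm{\pi}_i$-orthonormal eigenbasis $\{f_{i,j}\}$ with $f_{i,1}\equiv 1$. Starting from the spectral representation $P_i^{t}(x,y)/\pi_i(y)=\sum_{j}\lambda_{i,j}^{t}f_{i,j}(x)f_{i,j}(y)$, separating the $j=1$ term and then using the triangle inequality, Cauchy--Schwarz, and the completeness identity $\sum_j f_{i,j}(x)^2=1/\pi_i(x)$, one gets
\[
\Bigl|\frac{P_i^{t}(x,y)}{\pi_i(y)}-1\Bigr|\le\lambda_i^{*t}\sum_{j\ge 2}|f_{i,j}(x)f_{i,j}(y)|\le\frac{\lambda_i^{*t}}{\sqrt{\pi_i(x)\pi_i(y)}}\le\frac{\lambda_i^{*t}}{\pi_{i,\min}},
\]
where $\lambda_i^{*}:=\max_{j\ge 2}|\lambda_{i,j}|$ is the subdominant eigenvalue modulus. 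Multiplying by $\pi_i(y)$ and summing over $y$ yields $\|\bm{P}_i^{t}(x,:)-\bm{\pi}_i\|_1\le\lambda_i^{*t}/\pi_{i,\min}$, and maximising over $x$ finishes the claim.

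The point that needs care is the structural hypothesis behind the spectral step: the clean eigenvalue bound presupposes that $\bm{P}_i$ is diagonalisable, for which reversibility is sufficient, and one should state this explicitly. It is harmless for the mobility model of the paper, where $\bm{P}_i$ is a Kronecker product of two symmetric nearest-neighbour random walks (on the horizontal and vertical axes) and therefore reversible; for a generic irreducible aperiodic $\bm{P}_i$ one would instead run the same computation with the absolute spectral gap of $\bm{P}_i$ and replace $\lambda_i^{*}$ by the corresponding pseudo-spectral quantity. I would also recall the identity $\sum_j f_{i,j}(x)^2=1/\pi_i(x)$ used above, which follows by expanding the point mass $\mathbbm{1}_{\{x\}}$ in the orthonormal eigenbasis of $\ell^2(\bm{\pi}_i)$ and evaluating at $x$, since that is the only ingredient that is not purely mechanical.
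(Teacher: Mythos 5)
The paper does not prove this lemma at all: it is stated as a fact imported from \cite{levin2017markov}, so there is no in-paper argument to compare against. Your proof is correct and is essentially the standard argument from that reference (the $\ell_1$-contraction under a stochastic matrix for monotonicity, and the spectral decomposition of a reversible chain together with $\sum_j f_{i,j}(x)^2 = 1/\pi_i(x)$ for the geometric decay), transcribed from total variation to the $\ell_1$ normalisation the paper uses. Two of your side remarks are genuinely valuable and worth keeping: first, the lemma's phrase ``highest eigenvalue of $\bm{P}_i$'' must be read as the second-largest eigenvalue modulus $\lambda_i^{*}=\max_{j\ge 2}|\lambda_{i,j}|$, since the literal largest eigenvalue of a stochastic matrix is $1$ and would render the bound vacuous; second, the clean bound does require reversibility (or at least diagonalisability) of $\bm{P}_i$, a hypothesis the lemma omits but which you correctly verify holds for the paper's Kronecker-product nearest-neighbour mobility model.
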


The density $\epsilon^{\mathcal{B}}$ of a belief set $\mathcal{B}\left(\bm{s},h\right)$ is given by:
\[
 \epsilon^{\mathcal{B}}\left( \mathcal{B}\left(\bm{s},h\right) \right)=\max \limits_{\bm{\tilde{b}}\in \Delta}\min \limits_{\bm{b}\in  \mathcal{B}\left(\bm{s},h\right)} {||\bm{b}-\bm{b}'||}_1
 \]
 \[
 \leq \max \limits_{\bm{\tilde{b}} \in \Delta}\min \limits_{\bm{b}\in  \mathcal{B}\left(\bm{s},h\right)} \sum \limits_{i \in \mathcal{K}} \left[ \sum_{s \in \mathcal{S}} |b^i\left(s \right)- \tilde{b}^{i}\left(s \right)| \right]
 \]
For any reachable belief point $\tilde{\bm{b}}^i$ of relay $i$, there exists an integer $n$ and a state $S_i \in \mathcal{S}$ such that $\tilde{\bm{b}}^i =\bm{P}_i^n\left( S_i,: \right)$. The construction of the belief set $\mathcal{B}\left(\bm{s},h\right)$ induces that ${\bm{b}}^i $ in the equation above can be written as ${\bm{b}}^i =\bm{P}_i^m\left( s_i\left(0\right),: \right)$ with the integer $m \leq h$ and $s_i\left(0\right)$ the initial state of relay $i$. 

We will consider both cases:
\begin{itemize}
\item if $n \leq h$, then the corresponding reachable point $\tilde{\bm{b}}^i$ has been taken into account in the belief set $\mathcal{B}\left(\bm{s},h\right)$, thus:
\begin{equation}
\label{eq.sum_b1}
\sum_{s \in \mathcal{S}} |b^i\left(s \right)- \tilde{b}^{i}\left(s \right) |=0
\end{equation}
\item if $n>h$, the expression $\ \sum\limits_{s \in \mathcal{S}} |b^i\left(s \right)- \tilde{b}^{i}\left(s \right)|$ can be replaced by:
\[
\sum_{s \in \mathcal{S}} |b^i\left(s \right)- \tilde{b}^{i}\left(s \right) |=  \sum_{s \in \mathcal{S}} |P_i^m\left(s_i\left(0\right),s \right)- P_i^n\left(S_i,s \right) |
\]
\[
\resizebox{0.9\hsize}{!}{$\leq  \sum\limits_{s \in \mathcal{S}} |P_i^m\left(s_i\left(0\right),s \right)- \pi_i\left(s \right) |  + \sum\limits_{s \in \mathcal{S}} |P_i^n\left(s_i\left(0\right),s \right)- \pi_i\left(s \right)| $}
\]
\[\leq d_i\left(m\right) + d_i\left(n\right) 
\]
Based on lemma \ref{lem.DistFunc} and since $n>h$, then:
\begin{equation}
\label{eq.sum_b2}
\sum_{s \in \mathcal{S}} |b^i\left(s \right)- \tilde{b}^{i}\left(s \right) | \leq  2d_i\left(h\right) \leq 2\frac{\lambda_i^{*h}}{\pi_{i,min}}
\end{equation}
\end{itemize}
From equations (\ref{eq.sum_b1}) and (\ref{eq.sum_b2}) and summing over all the relays $i \in \mathcal{K}$ we prove the theorem \ref{th.DensityBound}.

\section{Proof of theorem \ref{th.Belief_set_eps} \label{proof.Belief_set_eps}}

The studied \gls{CPOMDP} of horizon $T$ and an initial state $\bm{s}_0$ achieves an error bound of $\epsilon$ when $\eta_T^r$ and $\eta_T^c$ (see equation (\ref{eq.PBVI_Error})) are lower than $\epsilon$. Limiting the belief set to $\mathcal{B}\left(\bm{s}_0,h\right)$ generates the following errors on the value functions:
\begin{equation}
\label{eq.eta_r_bound}
\eta_T^r \leq \frac{\left (R_{max}-R_{min}\right)}{{\left( 1 - \gamma \right)}^2}\sum\limits_{i \in \mathcal{K}}2\frac{\lambda_i^{*h}}{\pi_{i,min}} \leq 2K\frac{\left (R_{max}-R_{min}\right)\lambda^{*h}}{{\pi_{min}\left( 1 - \gamma \right)}^2}
\end{equation}
and
\begin{equation}
\label{eq.eta_c_bound}
\eta_T^c \leq 2K\frac{\left (C_{max}-C_{min}\right)\lambda^{*h}}{{\pi_{min}\left( 1 - \gamma \right)}^2}
\end{equation}
with $\lambda^*=\max\limits_{i \in \mathcal{K}}\lambda_i^*$ ; $\pi_{min}^*=\min\limits_{i \in \mathcal{K}}\pi_{i,min}^*$.\\

Therefore, limiting $\eta_T^r$ and $\eta_T^c$ to $\epsilon$ corresponds to choosing the parameter $h$ of $\mathcal{B}\left(\bm{s}_0,h\right)$ in such a way that the expressions in equations (\ref{eq.eta_r_bound}) and (\ref{eq.eta_c_bound}) are bounded by $\epsilon$. Hence, theorem \ref{th.Belief_set_eps} is deduced.

Note that for a discount factor $\gamma=1$, the sum over the horizon $T$ gives the following upper bounds of the errors $\eta_T^r$ and $\eta_T^c$:
\[
\eta_T^r \leq 2KT\frac{\left( R_{max}-R_{min}\right)\lambda^{*h}}{\pi_{min}}
\]
and
\[
 \eta_T^c \leq 2KT\frac{\left( C_{max}-C_{min}\right)\lambda^{*h}}{\pi_{min}}
\]
Thus, for $\gamma=1$, the expressions of $f_r\left(\epsilon\right)$ and $f_c\left(\epsilon\right)$ in equation (\ref{eq.Belief_set_eps}) of theorem \ref{th.Belief_set_eps} are given by:
\[
f_r\left(\epsilon \right)=\log\left( \frac{\epsilon \pi_{min}}{2KT\left(R_{max}-R_{min}\right)} \right)
\]
and
\[ f_c\left(\epsilon \right)=\log\left( \frac{\epsilon \pi_{min}}{2KT\left(C_{max}-C_{min}\right)} \right)
\]

\section{Proof of theorem \ref{th.SubmodQ} \label{proof.SubmodQ}}

Please note that we consider the reward $Q$-function as an example. By analogy, we can deduce the same property for the cost $Q$-function. In the following, we define the discrete derivative\iffalse given in section  \ref{sec:TB_SubMod} (equation (\ref{eq.DiscDeriv_gen})\fi of a relay $e \in\mathcal{K}$ knowing that an action set $\hat{a}_M$ (corresponding to action $\bm{a}_M$) is taken:
\begin{equation}
\label{eq.DiscDeriv}
\Delta_{Q^{\pi}_t}\left( e| \hat{a}_M\right):=Q_t^{\pi}\left(\bm{b}^t,\hat{a}_M \cup e \right)-Q_t^{\pi}\left(\bm{b}^t,\hat{a}_M\right)
\end{equation}
\[
=\rho\left(\bm{b}^t, \hat{a}_M \cup e \right)-\rho\left(\bm{b}^t, \hat{a}_M \right)
\]
\[+\sum \limits _ {k=t+1}^{T}\left[G_k^{\pi}\left(\bm{b}^t,\hat{a}_M\cup e\right) -G_k^{\pi}\left(\bm{b}^t,\hat{a}_M\right) \right]
\]
\iffalse As shown in section \ref{sec:TB_SubMod},\fi$Q_t^{\pi}\left( \bm{b},\bm{a} \right)$ is non-negative, monotone and submodular in $\bm{a}$ if the discrete derivative of the $Q$-function $\Delta_{Q^{\pi}_t}$ verifies the following:
\begin{equation}
\label{eq.properties1}
\Delta_{Q^{\pi}_t}\left( e| \hat{a}_M\right) \geq 0
\end{equation}
and
\begin{equation}
\label{eq.properties2}
\Delta_{Q^{\pi}_t}\left( e| \hat{a}_M\right) \geq \Delta_{Q^{\pi}_t}\left( e| \hat{a}_M\cup \hat{a}_N\right)
\end{equation}
Considering the reward model (\ref{eq.TotalReward}), the discrete derivative $\Delta_{Q^{\pi}_t}\left( e| \hat{a}_M\right)$ is computed as follows:
\[
\Delta_{Q^{\pi}_t}\left( e| \hat{a}_M\right)=\rho\left(\bm{b}^t, \hat{a}_M \cup e \right)-\rho\left(\bm{b}, \hat{a}_M \right)
\]
\[+\sum \limits _ {k=t+1}^{T}\left[G_k^{\pi}\left(\bm{b}^t,\hat{a}_M\cup e\right) -G_k^{\pi}\left(\bm{b}^t,\hat{a}_M\right) \right]
\]
\[
=\sum\limits_{s\in\mathcal{S}}b_t^e\left(s\right)r_e\left(s\right)+\sum \limits _ {k=t+1}^{T}\left[G_k^{\pi}\left(\bm{b}^t,\hat{a}_M\cup e\right) -G_k^{\pi}\left(\bm{b}^t,\hat{a}_M\right) \right]
\]
\[
=\sum\limits_{s\in\mathcal{S}}b_t^e\left(s\right)r_e\left(s\right)
\]
\[
+\sum \limits _ {k=t+1}^{T}\gamma^{k}\sum \limits _{\bm{z}^{t:k}}\left[ P\left(\bm{z}^{t:k}|\bm{b}^t,\hat{a}_M\cup e, \pi\right)\rho\left(\bm{b}_{\bm{z}^{t:k}}^{\bm{a}^{\pi}},\bm{a}^{\pi}\right)\right]
\]
\[-\sum \limits _ {k=t+1}^{T}\gamma^{k}\sum \limits _{\bm{z}^{t:k}}\left[- P\left(\bm{z}^{t:k}|\bm{b}^t,\hat{a}_M, \pi\right)\rho\left(\bm{b}_{\bm{z}^{t:k}}^{\bm{a}^{\pi}},\bm{a}^{\pi} \right)\right]
\]
Since the total reward model, given by equation (\ref{eq.TotalReward}), is equal to the sum of the reward of each selected relay, then the difference $G_k^{\pi}\left(\bm{b},\hat{a}_M\cup e\right) -G_k^{\pi}\left(\bm{b},\hat{a}_M\right)$ will be limited to the reward of relay $e$ as follows:
\[
\Delta_{Q^{\pi}_t}\left( e| \hat{a}_M\right)=\sum\limits_{s\in\mathcal{S}}b_t^e\left(s\right)r_e\left(s\right)
\]
\[
+\sum \limits _ {k=t+1}^{T}\gamma^{k}\sum \limits _{{z}_e^{t:k}}\left[ P\left({z}_e^{t:k}|\bm{b}_t^e,\hat{a}_M\cup e, \pi\right)\rho\left(\bm{b}_{{z}_e^{t:k}}^{e},\bm{a}^{\pi}\right)\right]
\]
\[
-\sum \limits _ {k=t+1}^{T}\gamma^{k}\sum \limits _{{z}_e^{t:k}}\left[ P\left({z}_e^{t:k}|\bm{b}_t^e,\hat{a}_M, \pi\right)\rho\left(\bm{b}_{{z}_e^{t:k}}^{e},\bm{a}^{\pi} \right)\right]
\]
where $z_e^{{t:k}}$ is the observation of the state of relay $e$ between $t$ and $k$ epochs, $\bm{b}_t^e$ the belief vector of relay $e$ at epoch $t$, $\bm{b}_{{z}_e^{t:k}}^{e}$ the belief vector of relay $e$ after an observation ${z}_e^{t:k}$ between $t$ and $k$ epochs.
\iffalse
\begin{lemma}
\label{lemma.delta}
The discrete derivative of the $Q$-function is given by:
\begin{equation}
\Delta_{Q^{\pi}_t}\left( e| \hat{a}_M\right)=\sum\limits_{s\in\mathcal{S}}b_t^e\left(s\right)r_e\left(s\right)
\end{equation}
\end{lemma}

\begin{proof}
\fi
In order to prove the properties (\ref{eq.properties1}) and  (\ref{eq.properties2}) and deduce by that the submodularity of the $Q$-function, we compute $\Delta_{Q^{\pi}_t}\left( e| \hat{a}_M\right)$ for the following two cases:

\begin{itemize}
\item {{Relay $e$ has not been chosen by the policy $\pi$ between the epochs $t+1$ and $T$:}} in this case, the belief vector $\bm{b}_n^{e}$ of relay $e$ (for $n=t+1,...,T$) is the same in both expressions $G_k^{\pi}\left(\bm{b},\hat{a}_M\right)$ and $G_k^{\pi}\left(\bm{b},\hat{a}_M\cup e\right)$. Hence,
\[
G_k^{\pi}\left(\bm{b},\hat{a}_M\cup e\right) -G_k^{\pi}\left(\bm{b},\hat{a}_M\right) =0 \text{ for all $k =t+1,...,T$}
\]
\[
\Rightarrow \Delta_{Q^{\pi}_t}\left( e| \hat{a}_M\right)=\sum\limits_{s\in\mathcal{S}}b_t^e\left(s\right)r_e\left(s\right)
\]
Thus, in this case the discrete derivative of the $Q$-function verifies both properties (\ref{eq.properties1}) and (\ref{eq.properties2}). 

\item{Relay $e$ has been chosen by the policy $\pi$ at epoch $t^*$:} in this case, the decision maker observes the state of relay $e$: $z_e^{t^*}$. Thus,
\[
\Delta_{Q^{\pi}_t}\left( e| \hat{a}_M\right)= \sum\limits_{s\in\mathcal{S}}b_t^e\left(s\right)r_e\left(s\right) 
\]
\[+ \sum \limits_ {k= t+1}^{t^*}\gamma^{k}\left[G_k^{\pi}\left(\bm{b},\hat{a}_M\cup e\right) -G_k^{\pi}\left(\bm{b},\hat{a}_M\right) \right]
\]
\[ +\gamma^{t^*+1}\left[G_{t^*+1}^{\pi}\left(\bm{b},\hat{a}_M\cup e\right) -G_{t^*+1}k^{\pi}\left(\bm{b},\hat{a}_M\right) \right]
\]
\[+ \sum \limits_ {k=t^*+2}^T\gamma^{k}\left[G_k^{\pi}\left(\bm{b},\hat{a}_M\cup e\right) -G_k^{\pi}\left(\bm{b},\hat{a}_M\right) \right]
\]
The second term $\sum \limits_ {k= t+1}^{t^*}\gamma^{k}\left[G_k^{\pi}\left(\bm{b},\hat{a}_M\cup e\right) -G_k^{\pi}\left(\bm{b},\hat{a}_M\right) \right]=0$ since relay $e$ has not been chosen between $t+1$ and $t^*$ epochs. In addition, the fourth term $\sum \limits_ {k= t^*+2}^T\gamma^{k}\left[G_k^{\pi}\left(\bm{b},\hat{a}_M\cup e\right) -G_k^{\pi}\left(\bm{b},\hat{a}_M\right) \right]=0$ since relay $e$ has been chosen at epoch $t^*$. Therefore, the policy $\pi$ will have the same value for both $G_k^{\pi}\left(\bm{b},\hat{a}_M\cup e\right)$ and $G_k^{\pi}\left(\bm{b},\hat{a}_M\right)$. Thus,
\[
\Delta_{Q^{\pi}_t}\left( e| \hat{a}_M\right)- \sum\limits_{s\in\mathcal{S}}b_t^e\left(s\right)r_e\left(s\right) 
\]
\[= \gamma^{t^*+1}\left[G_{t^*+1}^{\pi}\left(\bm{b},\hat{a}_M\cup e\right) -G_{t^*+1}k^{\pi}\left(\bm{b},\hat{a}_M\right) \right]
\]
\[\resizebox{0.9\hsize}{!}{$= \gamma^{t^*+1}\sum \limits_{{z}_e^{t:k}}\left[ P\left({z}_e^{t:k}|\bm{b}_t^e,\hat{a}_M\cup e, \pi\right)\rho\left(\bm{b}_{{z}_e^{t:k}}^{e},\bm{a}^{\pi}\right)- P\left({z}_e^{t:k}|\bm{b}_t^e,\hat{a}_M, \pi\right)\rho\left(\bm{b}_{{z}_e^{t:k}}^{e},\bm{a}^{\pi} \right)\right]$}
\]
\[\resizebox{0.9\hsize}{!}{$= \gamma^{t^*+1}\sum \limits_{{z}_e^{t^*}}\left[ \sum \limits_{{z}_e^{t}}P\left({z}_e^{t^*}|{z}_e^{t}\right) P\left({z}_e^{t}|\bm{b}_t^e,\hat{a}_M\cup e, \pi\right)- P\left({z}_e^{t^*}|\bm{b}_t^e,\hat{a}_M, \pi\right)\right]\sum\limits_{s'\in\mathcal{S}}\bm{P}_e\left(z_e^{t^*},s'\right)r_e\left(s'\right)$}
\]
\[\resizebox{0.9\hsize}{!}{$= \gamma^{t^*+1}\sum \limits_{{z}_e^{t^*}}\left[\sum \limits_{{z}_e^{t}}P\left({z}_e^{t^*}|{z}_e^{t}\right){\left( \bm{P}_e.\bm{b}_t^e\right)}\left( {{z}_e^{t}}\right) - {\left(\bm{P}_e^{t^*-t+1}.\bm{b}_t^e\right)}\left({z_e^{t^*}}\right)\right]\sum\limits_{s'\in\mathcal{S}}\bm{P}_e\left(z_e^{t^*},s'\right)r_e\left(s'\right)$}
\]
\[\resizebox{0.9\hsize}{!}{$= \gamma^{t^*+1}\sum \limits_{{z}_e^{t^*}}\left[{\left(\bm{P}_e\left(z_e^t,:\right).\bm{P}^{t^*-t}_e\right)}\left({z_e^{t^*}}\right).{\left(\bm{P}_e.\bm{b}_t^e\right)}\left( {z}_e^{t}\right)- {\left(\bm{P}_e^{t^*-t+1}.\bm{b}_t^e\right)}\left({z_e^{t^*}}\right)\right]\sum\limits_{s'\in\mathcal{S}}\bm{P}_e\left(z_e^{t^*},s'\right)r_e\left(s'\right)$}
\]
\[\resizebox{0.9\hsize}{!}{$= \gamma^{t^*+1}\sum \limits_{{z}_e^{t^*}}\left[{\left(\bm{P}^{t^*-t+2}_e.\bm{b}_t^e\right)}\left( z_e^{t^*}\right)- {\left(\bm{P}_e^{t^*-t+1}.\bm{b}_t^e\right)}\left(z_e^{t^*}\right)\right]\sum\limits_{s'\in\mathcal{S}}\bm{P}_e\left(z_e^{t^*},s'\right)r_e\left(s'\right)$}
\]
\[\resizebox{0.9\hsize}{!}{$= \gamma^{t^*+1}\sum\limits_{s'\in\mathcal{S}}\left[{\left(\bm{P}^{t^*-t+3}_e.\bm{b}_t^e\right)}\left( s'\right)- {\left(\bm{P}_e^{t^*-t+2}.\bm{b}_t^e\right)}\left({s'}\right)\right]r_e\left(s'\right)$}
\]
Thus,
\[
\Delta_{Q^{\pi}_t}\left( e| \hat{a}_M\right)=
\]
\[\resizebox{1\hsize}{!}{$=\sum\limits_{s\in\mathcal{S}}r_e\left(s\right)\left[ b_t^e\left(s\right)+ \gamma^{t^*+1}{\left(\bm{P}^{t^*-t+3}_e.\bm{b}_t^e\right)}\left( s\right)-  \gamma^{t^*+1}{\left(\bm{P}_e^{t^*-t+2}.\bm{b}_t^e\right)}\left( s\right)\right]$}
\]
%\end{proof}
\begin{lemma}
\label{lemma.BeliefInequality}
For a relay $e$, the two following statements are verified by induction:
\begin{itemize}
\item if $\left(\bm{P}_e.\bm{b}_t^e\right)\left(s \right)\geq \bm{b}_t^e \left(s \right) $ then $\left(\bm{P}^n_e.\bm{b}_t^e\right)\left(s \right)\geq \left(\bm{P}^{n-1}_e.\bm{b}_t^e\right)\left(s \right) \forall n \in \mathbb{Z} \geq 2$ 
\item if $\left(\bm{P}_e.\bm{b}_t^e\right)\left(s \right)\leq \bm{b}_t^e \left(s \right) $ then $\left(\bm{P}^n_e.\bm{b}_t^e\right)\left(s \right)\leq \left(\bm{P}^{n-1}_e.\bm{b}_t^e\right)\left(s \right) \forall n \in \mathbb{Z} \geq 2$ 
\end{itemize}
\end{lemma}

Based on the lemma \ref{lemma.BeliefInequality}, we deduce that, in this case, the discrete derivative $\Delta_{Q^{\pi}_t}\left( e| \hat{a}_M\right)$ verifies the properties (\ref{eq.properties1}) and (\ref{eq.properties2}). 
\end{itemize}
Since the discrete derivative $\Delta_{Q^{\pi}_t}\left( e| \hat{a}_M\right)$ verifies the properties  (\ref{eq.properties1}) and (\ref{eq.properties2}) for the two possible cases studied above; theorem \ref{th.SubmodQ} is deduced.

\section{Proof of theorem \ref {th.Greedy_Bound} \label{proof.Greedy_Bound}}

The proof is done by induction. We start by verifying equation (\ref{eq.Greedy_Bound}) for $t=1$. We recall that:
\[
V^{r,G}_1\left(\bm{b}\right)=greedy-\argmax \rho_r\left(\bm{b},\bm{a}\right) \text{ s.t. } \rho_c\left(\bm{b},\bm{a}\right) \leq C_{th}
\]
\begin{center}
and
\end{center}
\[
V^{r,B}_1\left(\bm{b}\right)=\argmax \rho_r\left(\bm{b},\bm{a}\right) \text{ s.t. } \rho_c\left(\bm{b},\bm{a}\right) \leq C_{th}
\] 
Based on the results given in \cite{Sviridenko2004ANO}, the greedy maximization algorithm achieves a $1-\frac{1}{e}$ approximation of the exact solution. Thus, $V^{r,G}_1\geq \left(1-\frac{1}{e} \right)V^{r,B}_1$. Now, we assume that equation (\ref{eq.Greedy_Bound}) is verified for $t-1$ and prove that it remains verified for $t$. 
\[
V^{r,G}_{t-1}\left( \bm{b}\right)\geq \left(1-\frac{1}{e}\right)^{2t-2}V_{t-1}^{r,B}\left( \bm{b}\right)
\]
\[
\rho_r\left( \bm{b},\bm{a}\right)+\gamma \sum_{\bm{z}\in\mathcal{Z}}P\left(\bm{z}|\bm{a},\bm{b}\right)V_{t-1}^{r,G}\left({\bm{b}}_{\bm{z}}^{\bm{a}}\right)
\]
\[\geq \left(1-\frac{1}{e}\right)^{2t-2}\left[ \rho_r\left( \bm{b},\bm{a}\right)+\gamma \sum_{\bm{z}\in\mathcal{Z}}P\left(\bm{z}|\bm{a},\bm{b}\right)V_{t-1}^{r,B}\left({\bm{b}}_{\bm{z}}^{\bm{a}}\right)\right]
\]
Based on the definition of $Q_t^r$ function (i.e. equation (\ref{eq.Qfunction_r})), we deduce:
\begin{equation}
\label{eq.QBefore}
Q_{t}^{r,G}\left( \bm{b},\bm{a}\right) \geq \left(1-\frac{1}{e}\right)^{2t-2} Q_{t}^{r,B}\left( \bm{b},\bm{a}\right) \forall \bm{a} \in \mathcal{A}
\end{equation}
We use the following notations where all the maximization are done under cost constraint:
\begin{itemize}
\item $\bm{a}_{Q^G}^G=greedy-\argmax Q_{t}^{r,G}\left( \bm{b},\bm{a}\right)$
\item $\bm{a}_{Q^G}^*=\argmax Q_{t}^{r,G}\left( \bm{b},\bm{a}\right)$
\item $\bm{a}_{Q^B}^G=greedy-\argmax Q_{t}^{r,B}\left( \bm{b},\bm{a}\right)$ 
\item $\bm{a}_{Q^B}^*=\argmax Q_{t}^{r,B}\left( \bm{b},\bm{a}\right)$
\end{itemize}
Thus,
\[
V_t^{r,G}\left( \bm{b}\right)=greedy-\argmax Q_{t}^{r,G}\left( \bm{b},\bm{a}\right)=Q_{t}^{r,G}\left( \bm{b},\bm{a}_{Q^G}^G\right)
\]
Given that $Q_t^{r,\pi}$-function is submodular and constrained to a modular $Q_t^{c,\pi}$, then results shown in \cite{Sviridenko2004ANO} gives:
\[
V_t^{r,G}\left( \bm{b}\right)=Q_{t}^{r,G}\left( \bm{b},\bm{a}_{Q^G}^G\right) \geq \left(1-\frac{1}{e}\right)Q_{t}^{r,G}\left( \bm{b},\bm{a}_{Q^G}^*\right) 
\]
From the definition of $\bm{a}_{Q^G}^*$ we know that $Q_{t}^{r,G}\left( \bm{b},\bm{a}_{Q^G}^*\right) \geq Q_{t}^{r,G}\left( \bm{b},\bm{a}\right) $ for all $\bm{a} \in \mathcal{A}$ including $\bm{a}_{Q^B}^G$. Therefore,
\[
V_t^{r,G}\left( \bm{b}\right) \geq \left(1-\frac{1}{e}\right)Q_{t}^{r,G}\left( \bm{b},\bm{a}_{Q^B}^G\right) 
\]
For equation (\ref{eq.QBefore}), $Q_{t}^{r,G}\left( \bm{b},\bm{a}_{Q^B}^G\right) \geq \left(1-\frac{1}{e}\right)^{2t-2}Q_{t}^{r,B}\left( \bm{b},\bm{a}_{Q^B}^G\right) $. Thus:
\[
V_t^{r,G}\left( \bm{b}\right) \geq \left(1-\frac{1}{e}\right)^{2t-1}Q_{t}^{r,B}\left( \bm{b},\bm{a}_{Q^B}^G\right) 
\]
Given that $Q_t^{r,\pi}$ function is submodular and constrained to a modular $Q_t^{c,\pi}$, then as shown in \cite{Sviridenko2004ANO}:
\[
Q_{t}^{r,B}\left( \bm{b},\bm{a}_{Q^B}^G\right)  \geq \left(1-\frac{1}{e} \right)Q_{t}^{r,B}\left( \bm{b},\bm{a}_{Q^B}^*\right)\]
We deduce that:
\[
V_t^{r,G}\left( \bm{b}\right) \geq  \left(1-\frac{1}{e}\right)^{2t}Q_{t}^{r,B}\left( \bm{b},\bm{a}_{Q^B}^*\right) =V_t^{r,B}\left( \bm{b} \right)
\]
Therefore, theorem \ref{th.Greedy_Bound} is deduced by induction.

\ifCLASSOPTIONcaptionsoff
  \newpage
\fi

% Can use something like this to put references on a page
% by themselves when using endfloat and the captionsoff option.
\ifCLASSOPTIONcaptionsoff
  \newpage
\fi

%%%%%%%%%%%%%%%%%%%%%%%%%%%%%%%%%%%%%%%%%
%%%%%%%%%%%%%%%%%%%RELAY SELECTION%%%%%%%%%%%%
%%%%%%%%%%%%%%%%%%%%%%%%%%%%%%%%%%%%%%%%%

\section{Simulation Settings \label{sec:SimuSetting}}
We have developed a \gls{LTE} system level simulator that considers \gls{DL}, \gls{UL} and \gls{D2D} features. This simulator is mainly used for evaluating the performance of \gls{D2D} communications. In particular, we expose the results of  implementing the proposed relay selection scheme into the developed system level simulator. In this section, we describe the different characteristics of this $C++$ system level simulator.

\subsection*{Layout}
In the \gls{LTE} system level simulator, we assume a \textbf{carrier frequency} of $2$ GHz and a \textbf{bandwidth} of $10$ MHz. We consider an \textbf{hexagonal} grid with $7$ tri-sector macro sites. $21$ hexagonal cells are formed as illustrated in figure \ref{fig.layout}. \textbf{Wrap-around} is deployed to avoid border effects and all the cells are supposed \textbf{synchronized}.

Two \textbf{urban macro} scenarios are considered with an \gls{ISD} of $500$ m: (i) \textbf{ option $1$}: all the users are outdoor and (ii)\textbf{ option $2$}: with one RRH/Indoor Hotspot per cell (see section A.2.1.1.5 in \cite{3GPP_TR36814} for more details about RRH/Indoor Hotzone ). Unless specified otherwise, we implement the layout parameters specified for \gls{3GPP} case $1$ in table A.2.1.1-1 of \cite{3GPP_TR36814}.
\begin{figure}% [H]
\centering
\captionsetup{justification=centering}
\includegraphics[scale=0.5]{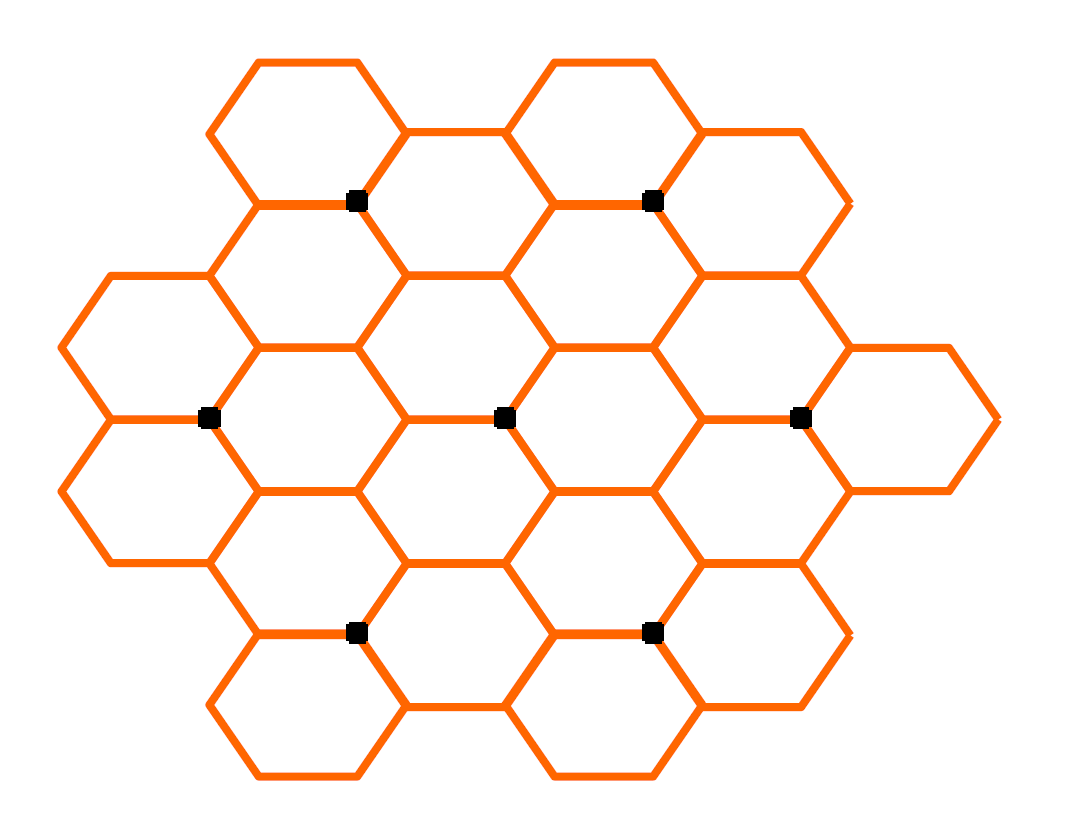}
\caption{The hexagonal grid with 7 tri-sector macro sites}
\label{fig.layout} 
\end{figure}

\subsection*{Users drop \label{subsec:UEDrop}}
The details concerning the users' drop considered in the simulator are given in table A.2.1.1-1 of \cite{3GPP_ProSe}. We give a brief description in the sequel. For layout option $1$, all the users are outdoor and they are randomly and uniformly dropped throughout the macro geographical area. For layout option $2$, $\sfrac{2}{3}$ of the users are randomly and uniformly dropped within the indoor building of each cell and the remaining fraction $\sfrac{1}{3}$ of users are randomly and uniformly dropped throughout the macro geographical area. We consider $20\%$ of the users outdoor and $80\%$ indoor. This can be guaranteed by considering some of the users that are dropped outdoor as virtual indoor users. We assume the following constraints on the building and users drops: $3$ m as the minimum distance between two users, $35$m as the minimum distance between a user and the \gls{BS}, $100$ m as the minimum distance between the building center and the \gls{BS}. In order to illustrate these two layouts, an example for $150$ \gls{UE}s is given in figure \ref{fig.layout_L1_L2}.

\begin{figure}[H]
%\centering
\hspace{-20pt}
        \begin{minipage}[b]{0.475\textwidth}
            \centering
            \includegraphics[scale=0.4]{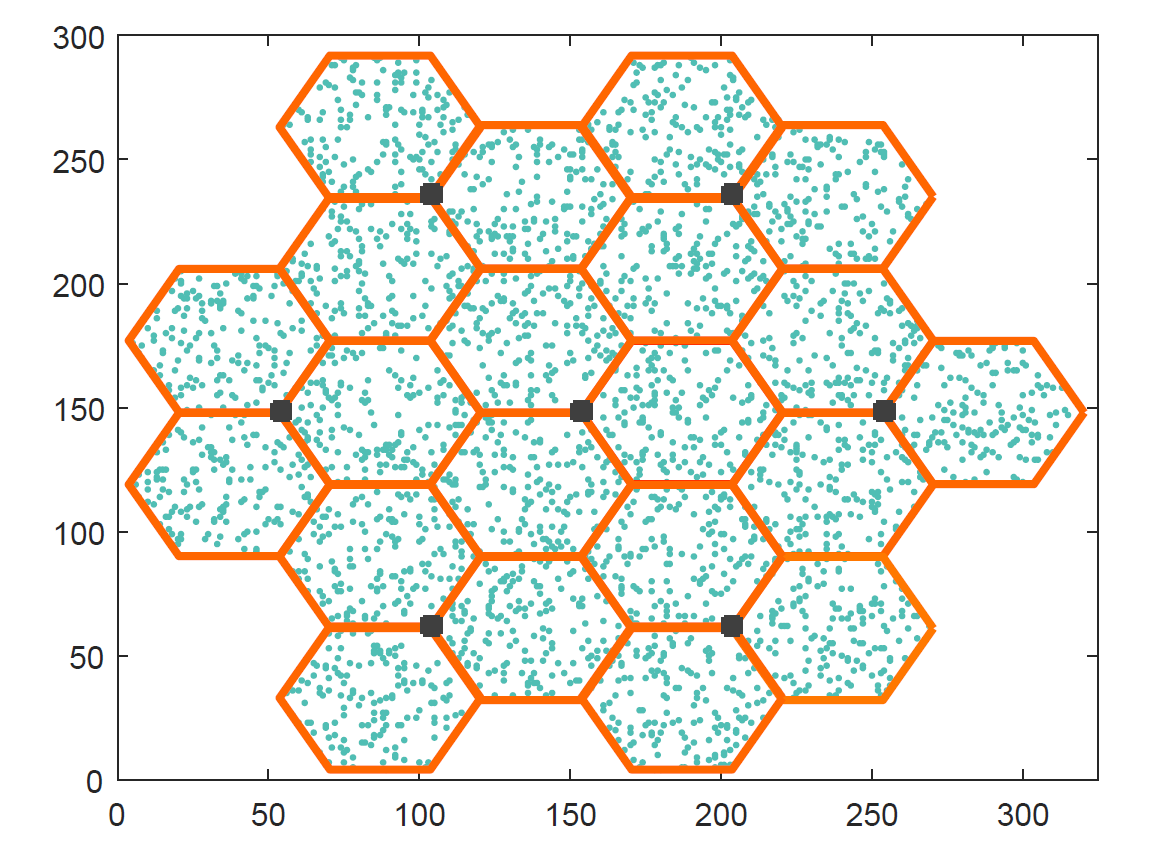}
            \caption[Network2]%
            {{\small Layout option 1: all users outdoor and randomly and uniformly distributed}}    
            \label{fig.layout_L1}
        \end{minipage}
        \hfill
        \begin{minipage}[b]{0.475\textwidth}  
            \centering 
            \includegraphics[scale=0.4]{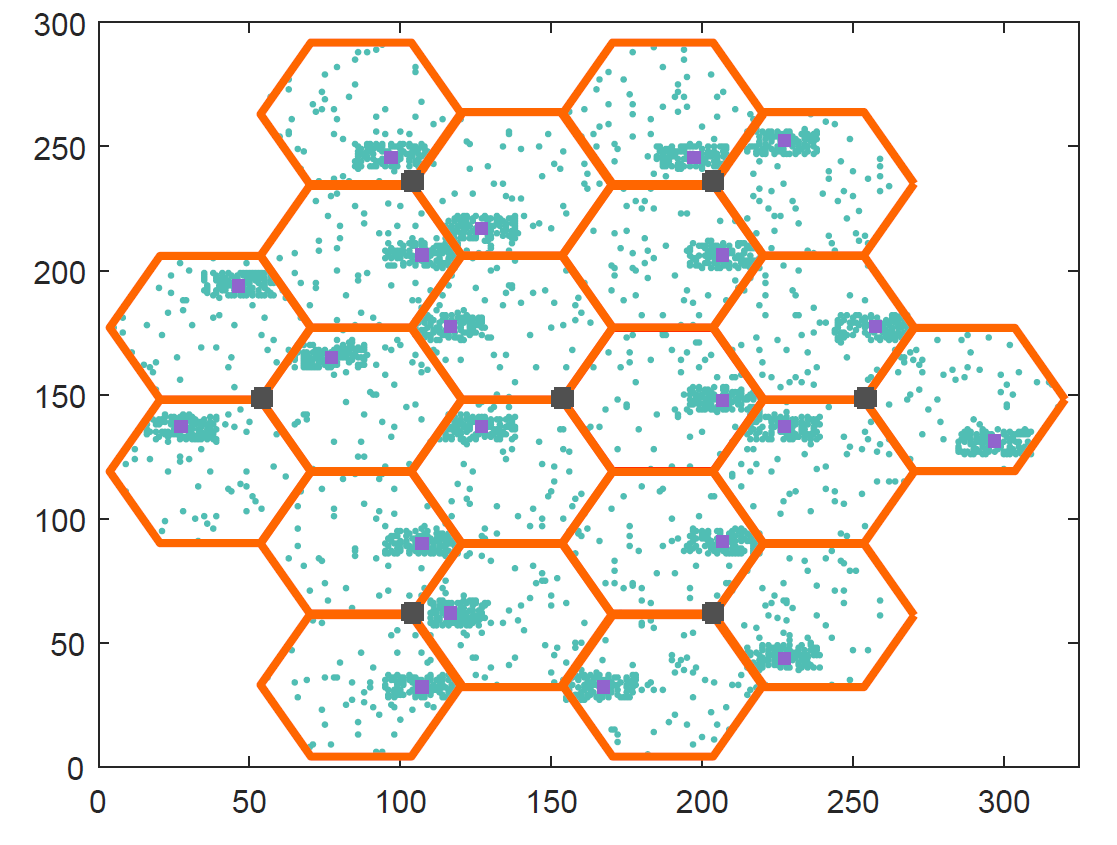}
            \caption[]%
            {{\small Layout option 2: an indoor rectangle building in each cell with $\sfrac{2}{3}$ of the users inside}}
            \label{fig.layout_L2}
\end{minipage}
\caption{The layouts of users drop that are implemented in the system-level simulator}
      %  {\small The average and standard deviation of critical parameters: Region R4} 
       \label{fig.layout_L1_L2} 
    \end{figure}
\subsection*{Channel Model}
For cellular communications, we consider the \textbf{antenna pattern} (horizontal and vertical) specified  for \gls{3GPP} case 1 and detailed in table A.2.1.1-2 of  \cite{3GPP_TR36814}.  An \gls{BS} \textbf{antenna gain} of $14$ dBi and an antenna tilt of $15 ^{\circ}$ characterize the \gls{BS}. The user is assumed to be equipped with omni-directional antennas. The \gls{BS} \textbf{transmission power} is $46$ dBm (almost $40$W) and the users' maximum transmission power, i.e. on both \gls{UL} and \gls{D2D}, is $23$ dBm (almost $0.2$W). 
\iffalse
\begin{table}[H]
\centering
 \begin{tabular}{|c|c|c|c|}
\hline 
Parameter & Downlink & Uplink & \gls{D2D} \tabularnewline
\hline 
Pathloss & Macro to \gls{UE} in table A.2.1.1.2-3 of  \cite{3GPP_TR36814}  &  & \tabularnewline
\hline 
LOS Probability &  &  & \tabularnewline
\hline 
Shadowing &  &  & \tabularnewline
\hline 
Fast Fading &  &  & \tabularnewline
\hhline{~~}
\end{tabular}
\caption{Channel model settings}
\label{table.ChannelModel}
\end{table}
\fi

The \textbf{distance dependent pathloss} for cellular communications (\gls{UL} and \gls{DL}) corresponds to the macro-to-\gls{UE} model  for \gls{3GPP} case 1 given in table A.1.1.2-3 of \cite{3GPP_TR36814}. Hence, for a distance $d$ Km between the \gls{BS} and the \gls{UE}, the cellular pathloss is given in dB as follows:
\[
PL_{cell}=128.1+37.6\log10\left (d\right)
\]

In addition, the \textbf{LOS probability} of both \gls{UL} and \gls{DL} communications is the one specified in table  A.1.1.2-3 of \cite{3GPP_TR36814} for the \gls{3GPP} case 1. Thus, for a distance $d$ Km between the \gls{BS} and the \gls{UE}, the LOS probability is given by:
\[
\mathbb{P}_{LOS}=\min\lbrace \frac{0.018}{d},1 \rbrace
\]
\[
 \times \left(1-\exp{\left(-\frac{d}{0.063}\right)} \right)+\exp{\left(-\frac{d}{0.063}\right)}
\]

When the user is indoor, then its \gls{UL} and \gls{DL} communications suffer from a penetration loss of mean $10$dB.

For cellular links, both \gls{UL} and \gls{DL}, we consider large scale \textbf{shadowing} caused by the geographical irregularities of the environment due to the presence of obstacles between the \gls{BS} and the \gls{UE} (e.g. buildings).  In this simulator, shadowing is modeled as a zero-mean Gaussian random variable, with standard deviation $8$ dB.  A shadowing map is constructed for each site in the network with a precision of $5$ Pixels. Both spatial and time correlation exist between the shadowing maps of different sites with a correlation distance of $50$ m and a time correlation of $1$s. The shadowing maps are updated frequently with a periodicity of $1$ s. The details of the shadowing model is described in the appendix A of \cite{abbas01618909}.

For the case of \gls{D2D} links, the distance dependent pathloss, the LOS probability, the penetration loss and the shadowing settings are given in table A.2.1.2 of \cite{3GPP_ProSe} (i.e. for different scenarios outdoor-to-outdoor, outdoor-to-indoor and indoor-to-indoore). The antenna gain for \gls{D2D} communications is considered $0$ dBi (i.e. given in table A.2.1.1-1 of \cite{3GPP_ProSe}).

For cellular communications, the \textbf{fast fading} has a Rayleigh distribution, i.e. square root of the sum of the squares of two normal distribution random variables $F_1\left(t\right)$ and $F_2\left(t\right)$. The fast fading is assumed to vary according to a Gauss-Markov model. If $v$ is the speed traveled by a user then the correlation coefficient $\rho_{f}$ is given by:
\begin{equation}
\label{eq.doppler}
\rho_{f}=\exp{\left( -\frac{2vf_c}{c}\right)}
\end{equation}
where $f_c$ the carrier frequency and $c$ the speed of light. We apply the correlation on the two component of the fast fading as follows:
\[
F_1\left(t\right)=\rho_{f}F_1\left(t-1\right)+\sqrt{\frac{2}{\pi}\left( 1-\rho_{f}^2 \right)}Y_1\left(t\right)
\]
\[
F_2\left(t\right)=\rho_{f}F_2\left(t-1\right)+\sqrt{\frac{2}{\pi}\left( 1-\rho_{f}^2 \right)}Y_2\left(t\right)
\]
where $Y_1\left(t\right)$ and $Y_2\left(t\right)$ are two random variables with normal distribution and the $\sqrt{\frac{2}{\pi}}$ as a correction that is added to guarantee a fast fading of average one.

In addition, $Y_1\left(t\right)$ and $Y_2\left(t\right)$ are generated based on the existing spatial correlation between the fast fading coefficients. We consier the correlation matrix $R_{spat}$ defined in table B.2.2.1 of \cite{3GPP_TR36803} with $\alpha=\beta=0$ (i.e. low spatial correlation). In order to introduce spatial correlation into the channels coefficients according to a specific correlation matrix $R_{spat}$, we consider a vector of \gls{MIMO} uncorrelated channel matrix (i.e.  random variables with normal distribution) and multiply it by the lower triangular matrix corresponding to the Cholesky factorization of the correlation matrix $R_{spat}$.

The same model of fast fading is applied to \gls{D2D} communications. The only difference is that the velocity $v$ in equation \ref{eq.doppler} is modified as specified in section A.2.1.2.1 of \cite{3GPP_ProSe}.

The \textbf{noise} parameters are the following: thermal noise $=-174$ dbm/Hz, the \gls{BS} noise figure $=5$ dB, \gls{UE} noise figure $=9$ dB and \gls{D2D} noise figure $=9$ dB.

On the \gls{UL}, a \textbf{power control} is installed in order to reduce the power emitted by the users and limit by that the interference between neighbor cells. For a pathloss $PL$ between the \gls{UE} and the \gls{BS} (i.e. taking into account the distance dependent pathloss, penetration loss, antenna pattern and shadowing loss), the user transmits at the following power:
\[
P =\min \lbrace P_{max}, P_0+\alpha PL \rbrace \text{ in dBm}
\]
where $P_{max}$ is the maximum user's transmission power, $P_0=-106$ dBm is the target received power at the \gls{BS} and $\alpha=1.0$ is the compensation factor. The same mechanism of Power control is applied for \gls{D2D} communications.

We consider \gls{LTE} codebook-based \textbf{precoding} for enabling transmit diversity mode. We assume that a single layer data stream is transmitted via two antennas. The following transmission schemes are simulated: $2\times 2$ \gls{MIMO} for \gls{DL} and $1\times 2$ \gls{MIMO} for both \gls{UL} and \gls{D2D} communications. The precoding at the transmitter is done by choosing the precoding vector among \ref{table.PrecondingVectors} that maximizes the received power.
\begin{table}[H]
\centering
 \begin{tabular}{|c|c|c|}
\hline 
Codebook index & Rank $1$ & Rank $2$ \tabularnewline
\hline 
0 & $\frac{1}{\sqrt{2}}\begin{bmatrix}
    1       \\
   1
\end{bmatrix}$  & 
$\frac{1}{\sqrt{2}}\begin{bmatrix}
    1  & 0    \\
   0 & 1
\end{bmatrix}$  \tabularnewline
\hline 1 & $\frac{1}{\sqrt{2}}\begin{bmatrix}
    1       \\
   -1
\end{bmatrix}$  & 
$\frac{1}{\sqrt{2}}\begin{bmatrix}
    1  & 1    \\
   1 & -1
\end{bmatrix}$  \tabularnewline
\hline 2 & $\frac{1}{\sqrt{2}}\begin{bmatrix}
    1       \\
   j
\end{bmatrix}$  & 
$\frac{1}{\sqrt{2}}\begin{bmatrix}
    1  & 1    \\
   j & -j
\end{bmatrix}$  \tabularnewline
\hline 3 & $\frac{1}{\sqrt{2}}\begin{bmatrix}
    1       \\
   -j
\end{bmatrix}$  & $-$ \tabularnewline
\hline 
\end{tabular}
\caption{Precoding vectors}
\label{table.PrecondingVectors}
\end{table} 

\iffalse
\subsection{details}

 3GPP TR 25.814 / Table A.2.1.1-3(Pathloss), Table A.2.1.8(AntennaGain,feederloss),Table A.2.2-1.(feederloss)
Loss === 3GPP TR 36.814 / Table A.2.1.1-1

Shadowing : We consider 7 shadowing maps corresponding to the 7 sites with a precision of 5 Pixels. The cross, spatial and time correlations are considered between the 7 shadowing maps which are updated frequently (each 5 TTI for instance).

FF
The fading elements are complex, generated following a normal distribution N (0,1) (multiplied by as a correction factor (ray_mean_correction =0.7)). Spatial (Cholesky) and time correlations are considered.
(see Nivine Thesis)

Penetration loss: === 3GPP TR 36.814 / Table A.2.1.1-1 // === 3GPP TR 36.814 / Table A.2.1.1-1 =>  3GPP TR 25.814 / Table A.2.1.1-3

noise  3GPP TR 36.814 / Table A.2.1.1-1 =>  3GPP TR 25.814 / Table A.2.1.6-1

power control : 	double Pzero=-106.0; //dBm
	double alpha=1.0;

The \gls{UL}, \gls{DL} and 
For UL, DL, D2D: path-loss, LOS probability, shadowing, FF

%\subsection*{\gls{RF} parameters}
The \gls{RF} parameters of the \gls{BS} are the one specified in\gls{3GPP} case $1$ (see table Table A.2.1.1-2 of  \cite{3GPP_TR36814}).

noise, antenna pattern, Intraband emissions, power control, 
Note 1: this path loss model assumes in-band relay. Simulations for out-of-band relay should re-examine this assumption.

For D2D  Noise figure 9 dB
eNodeB RF parameters	
\fi

\subsection*{Link curves}
The \gls{SINR} maps of the \gls{DL} and \gls{UL} channel models considered in the simulator are respectively given in figures \ref{fig.SNR_map_DL} and \ref{fig.SNR_map_UL}. Based on a link-level simulator, we deduce the mapping between the \gls{SINR} and the rate (Kbits/s/Hz) for both \gls{DL} (see figure \ref{fig.SINR_Rate_Mapping_DL}) and \gls{UL} (see figure \ref{fig.SINR_Rate_Mapping_UL}) communications. We deduce the amount of data transmitted at each allocated \gls{RB}. The \gls{UL} link adaptation is applied to \gls{D2D} communications; thus the \gls{UL} \gls{SINR}-rate mapping is considered for \gls{D2D} links.

\begin{figure}[ptb]
%\centering
\hspace{-20pt}
        \begin{minipage}[b]{0.475\textwidth}
            \centering
            \includegraphics[scale=0.6]{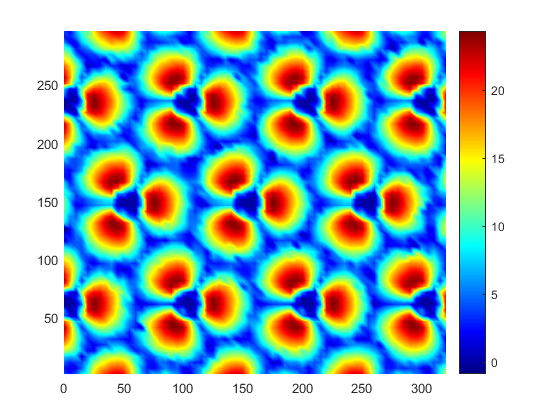}
            \caption[Network2]%
            {{\small \gls{SINR} map for \gls{DL}}}    
            \label{fig.SNR_map_DL}
        \end{minipage}
        \hfill
        \begin{minipage}[b]{0.475\textwidth}  
            \centering 
            \includegraphics[scale=0.45]{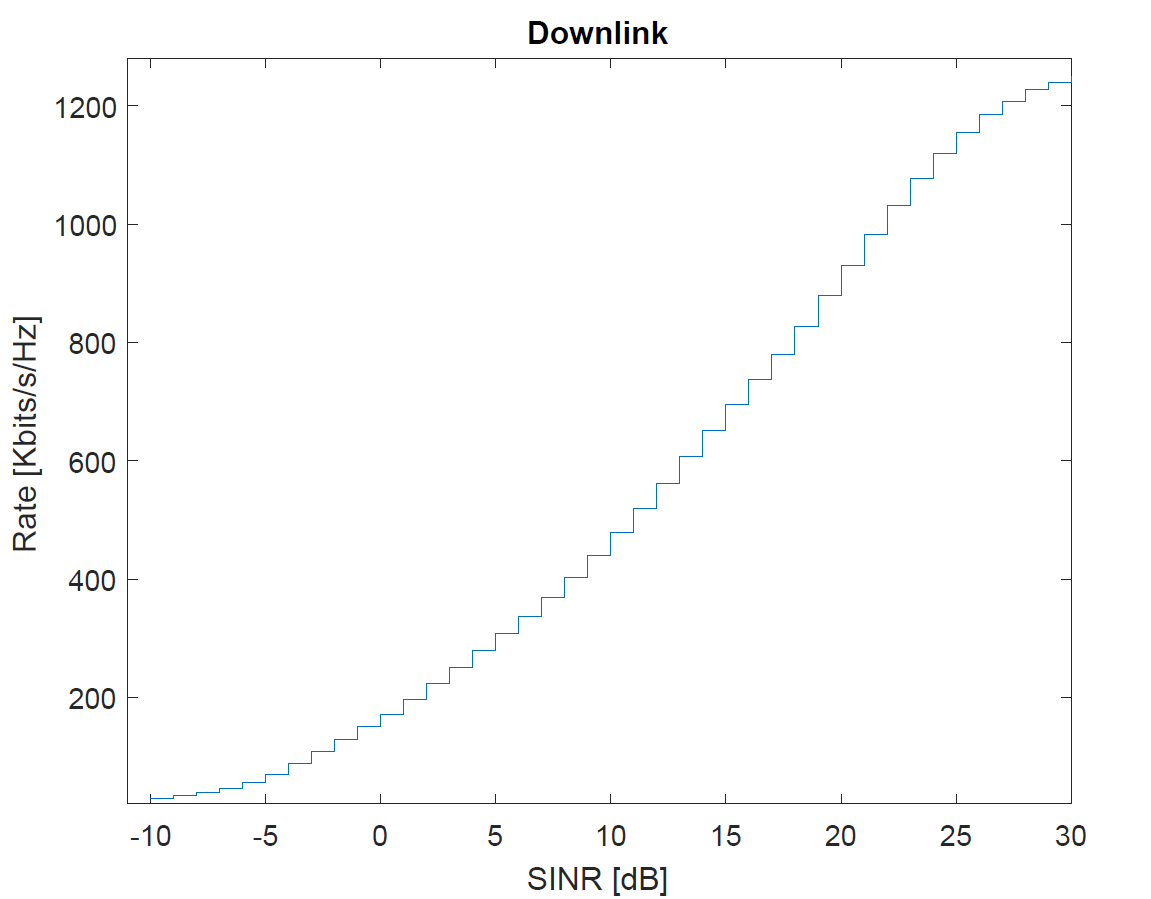}
            \caption[]%
            {{\small \gls{SINR}-rate mapping according to link-level simulations of \gls{DL} communications}}
            \label{fig.SINR_Rate_Mapping_DL}
\end{minipage}
%\caption{Complexity study of the proposed relay selection policies}
      %  {\small The average and standard deviation of critical parameters: Region R4} 
       \label{fig.link_curve_DL} 
    \end{figure}

\begin{figure}%[H]
%\centering
\hspace{-20pt}
        \begin{minipage}[b]{0.475\textwidth}
            \centering
            \includegraphics[scale=0.6]{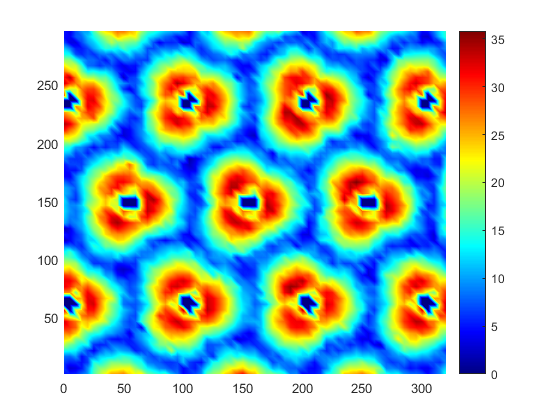}
            \caption[Network2]%
            {{\small \gls{SINR} map for \gls{UL}}}    
            \label{fig.SNR_map_UL}
        \end{minipage}
        \hfill
        \begin{minipage}[b]{0.475\textwidth}  
            \centering 
            \includegraphics[scale=0.45]{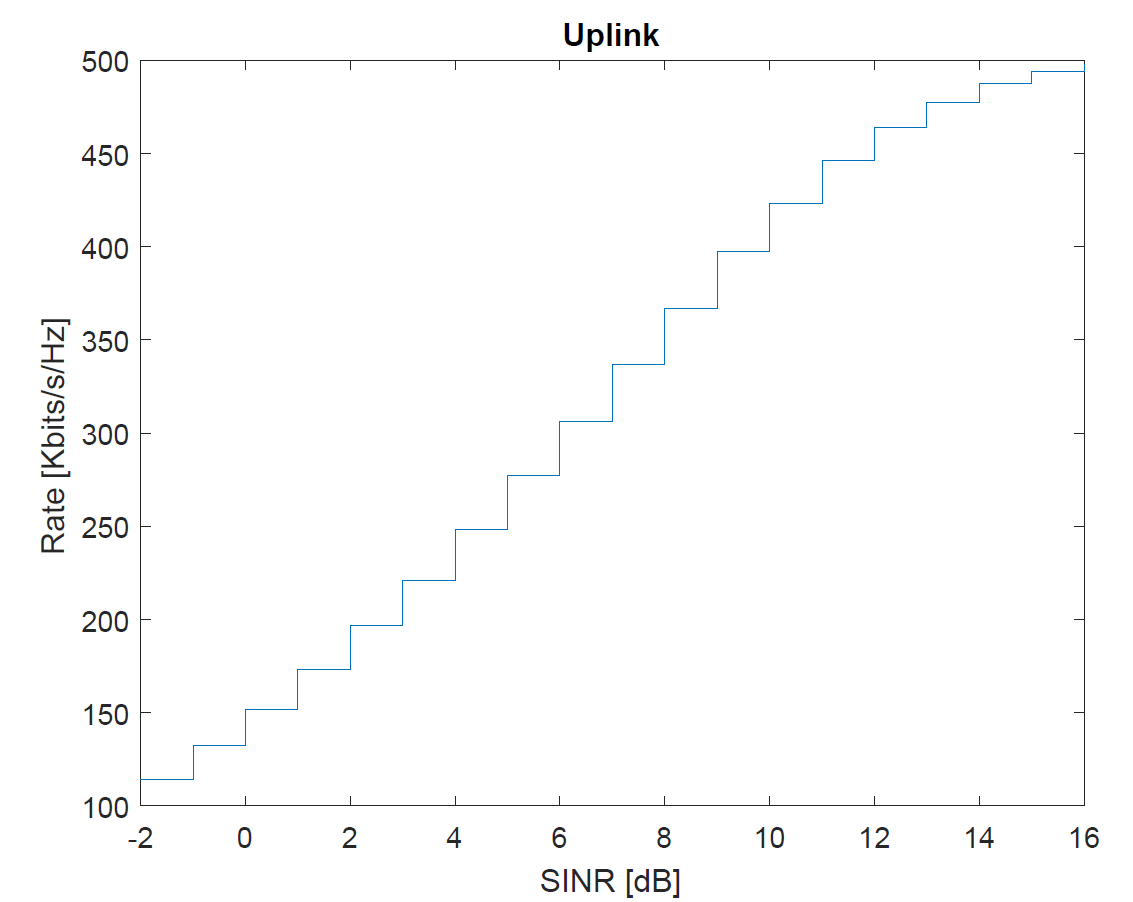}
            \caption[]%
            {{\small \gls{SINR}-rate mapping according to link-level simulations of \gls{UL} communications}}
            \label{fig.SINR_Rate_Mapping_UL}
\end{minipage}
%\caption{Complexity study of the proposed relay selection policies}
      %  {\small The average and standard deviation of critical parameters: Region R4} 
     %  \label{fig.link_curve_UL} 
    \end{figure}
    
\subsection*{Traffic Model}
Considering a fix number of users, FTP2 model of traffic given in section A.2.1.3 of \cite{3GPP_TR36814} is assumed (see figure \ref{fig.FTP2}). A \gls{DL} file size of $0.5$ MBytes and an \gls{UL} file size of $0.25$ Mbytes are considered. An exponential reading time $D$ of parameter $0.2$ separates the end of a file download and the arrival of the next file (i.e. the mean of $D$ is $5$ s). 

\begin{figure} [H]
\centering
\captionsetup{justification=centering}
\includegraphics[scale=0.12]{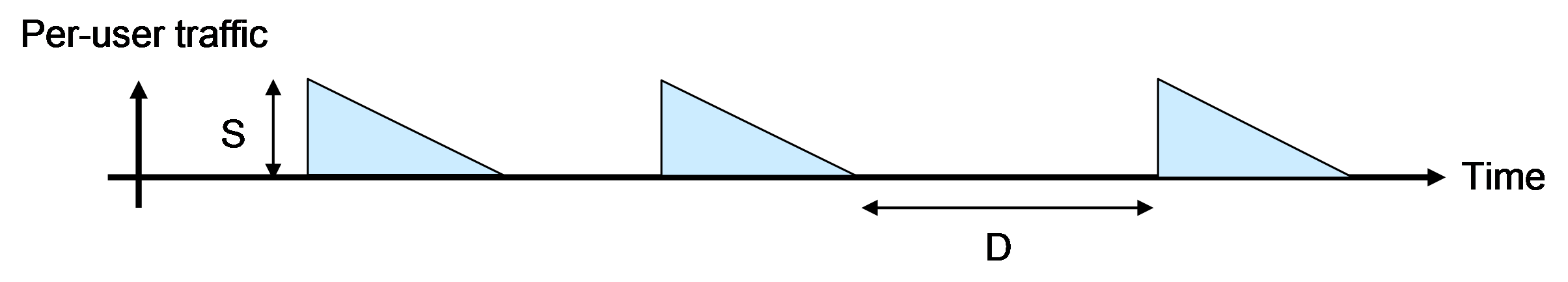}
\caption{Traffic generation of FTP Model 2}
\label{fig.FTP2} 
\end{figure}

\subsection*{Scheduling}
We consider a round-robin scheduler that serves the users in a cycle way and independently of their radio conditions. At a given time-slot, only one \gls{UE} is scheduled (i.e. all the available \gls{RB}s are allocated to the selected \gls{UE}).  Since overlay \gls{D2D} is assumed and the \gls{UL} resources are often less utilized than the \gls{DL} resources, we dedicate $20\%$ of the \gls{UL} resources for \gls{D2D} communications. 

%\newpage
\subsection*{D2D discovery}
The discovery process allow a device to discover nearby users with whom \gls{D2D} communication can be established. We implement the discovery process described in \cite{3GPP_ProSe} and validate it by comparing our results to the one given in \cite{3GPP_ProSe}. The following scenario is considered for evaluating the implementation of the discovery procedure. In each cell, among the $21$ created macro cells, $150$ \gls{UE}s are dropped using one of the two dropping procedures (i.e. options 1 and 2) described in \ref{subsec:UEDrop}.  The \gls{SINR} of \gls{D2D} links is illustrated in figure \ref{fig.SINR_D2D} and is compared to \gls{UL} and \gls{DL} \gls{SINR}.

\begin{figure} %[H]
\centering
\captionsetup{justification=centering}
\includegraphics[scale=0.4]{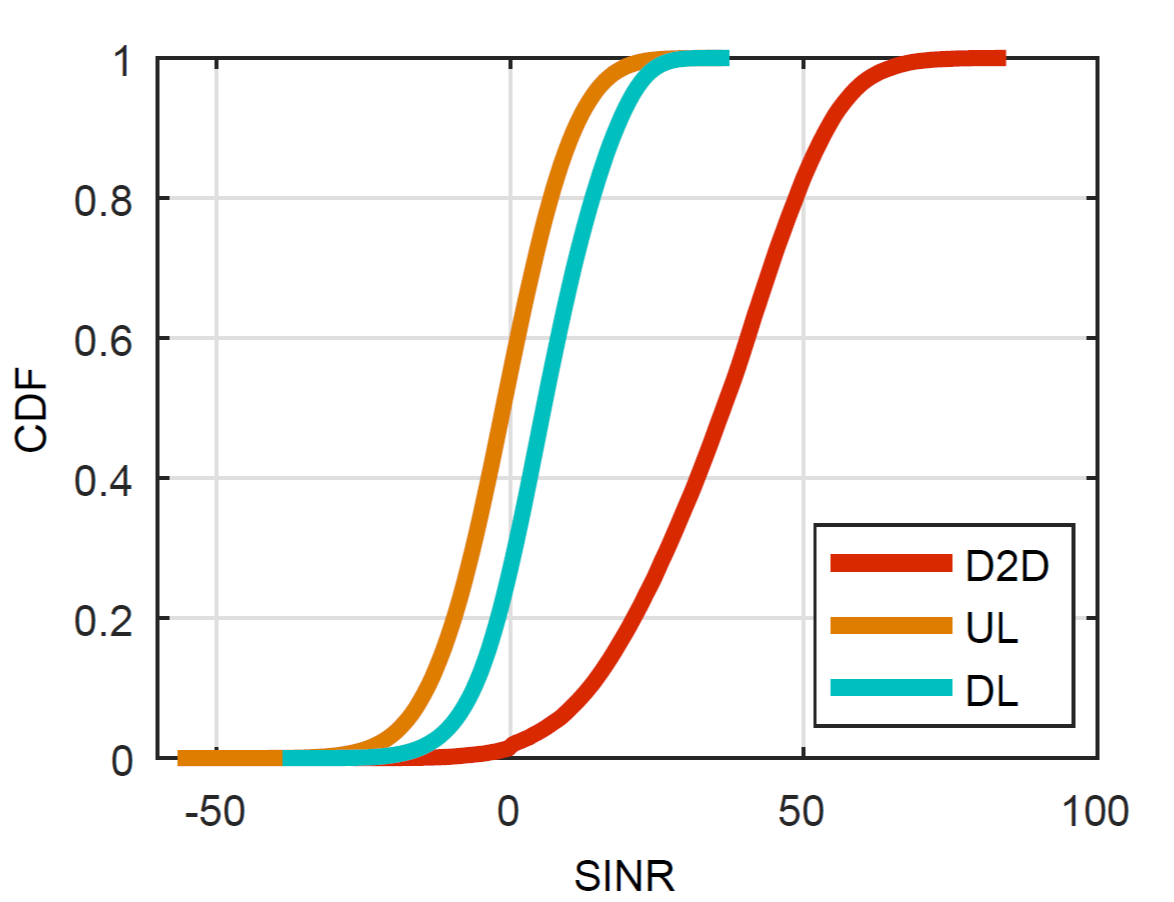}
\caption{\gls{CDF} of the \gls{D2D} \gls{SINR}}
\label{fig.SINR_D2D} 
\end{figure}

For each discovery period of $10$ s, a limited number of subframes (i.e. $30$ \gls{TTI}s) are dedicated for the discovery process. Indeed, one \gls{RB} pair is randomly allocated to each device among all the resources available in the $30$ discovery subframes. Therefore, each user will transmit its discovery signal on its allocated physical \gls{RB}s. When this discovery signal is decoded by a nearby user then the latter considers the device generating the discovery signal as a discovered user (i.e. minimum association received power is $-107$ dBm). Considering half duplex devices, then users transiting at the same \gls{TTI} cannot discover each other. An in-band emissions described in table A.2.1.5-1 of \cite{3GPP_ProSe} is evaluated in each non-allocated \gls{RB} with $ \lbrace 3,6,3,3 \rbrace$ as W,X,Y,Z parameters .

For evaluating the discovery scheme described above, the two following metrics are considered and compared to those given in \cite{3GPP_ProSe}.
\begin{itemize}
\item \gls{CDF} of  the number of discovered \gls{UE}s for different discovery periods (see figure \ref{fig.CDF_Disovery_L1} for layout option 1 and figure \ref{fig.CDF_Disovery_L2} for layout option 2).
\item Number of discovered \gls{UE}s as function of the number of discovery periods (see figure \ref{fig.Mean_Disc_L1}  for layout option 1 and figure \ref{fig.Mean_Disc_L2} for layout option 2).
\end{itemize}

\begin{figure}[ptb]
%\centering
%\hspace{-20pt}
        \begin{minipage}[b]{0.475\textwidth}
            \centering
            \includegraphics[scale=0.6]{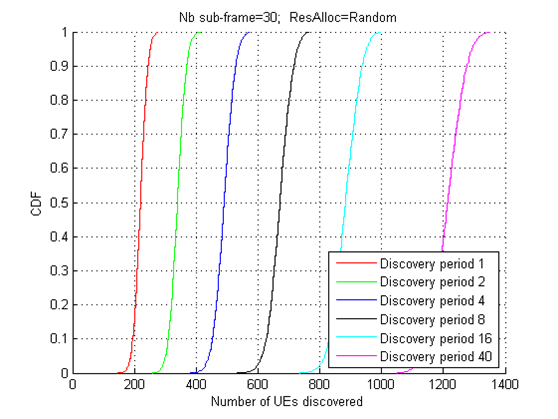}
            \caption[Network2]%
            {{\small \gls{CDF} of  the number of discovered \gls{UE}s for different discovery periods for layout option 1}}    
            \label{fig.CDF_Disovery_L1}
        \end{minipage}
        \hfill
        \begin{minipage}[b]{0.475\textwidth}  
            \centering 
            \includegraphics[scale=0.6]{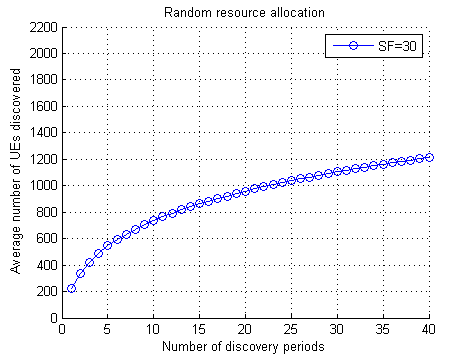}
            \caption[]%
            {{\small Number of discovered \gls{UE}s as function of the number of discovery periods for layout option 1}}
            \label{fig.Mean_Disc_L1}
\end{minipage}
    \end{figure}

\begin{figure}[ptb]
%\centering
\hspace{-20pt}
        \begin{minipage}[b]{0.475\textwidth}
            \centering
            \includegraphics[scale=0.6]{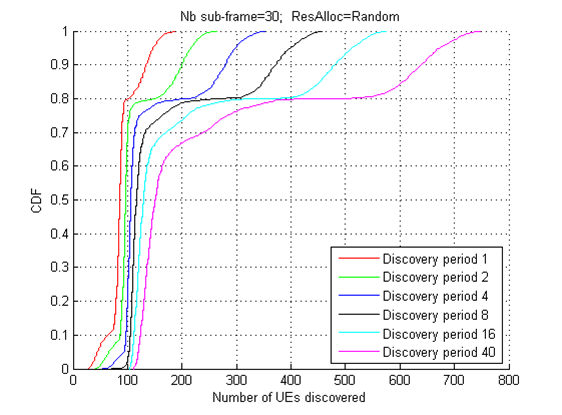}
            \caption[Network2]%
            {{\small \gls{CDF} of  the number of discovered \gls{UE}s for different discovery periods for layout option 2}}    
            \label{fig.CDF_Disovery_L2}
        \end{minipage}
        \hfill
        \begin{minipage}[b]{0.475\textwidth}  
            \centering 
            \includegraphics[scale=0.6]{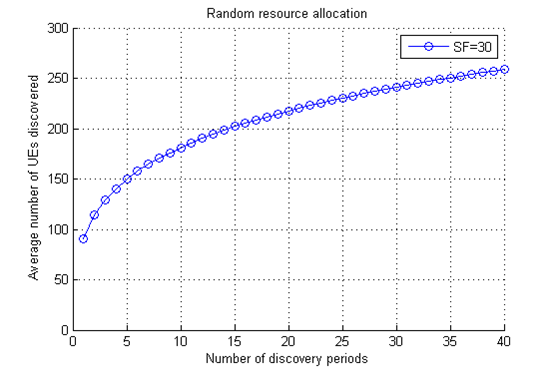}
            \caption[]%
            {{\small Number of discovered \gls{UE}s as function of the number of discovery periods for layout option 2}}
            \label{fig.Mean_Disc_L2}
\end{minipage}
    \end{figure}
    
    \iffalse
\subsection*{Table summarizing the settings} see ovidiu thesis
\fi

\bibliographystyle{IEEEtran}
\bibliography{thesis_bibliography}

The authors would like to thank...

% Can use something like this to put references on a page
% by themselves when using endfloat and the captionsoff option.
\ifCLASSOPTIONcaptionsoff
  \newpage
\fi

% trigger a \newpage just before the given reference
% number - used to balance the columns on the last page
% adjust value as needed - may need to be readjusted if
% the document is modified later
%\IEEEtriggeratref{8}
% The "triggered" command can be changed if desired:
%\IEEEtriggercmd{\enlargethispage{-5in}}

% references section

% can use a bibliography generated by BibTeX as a .bbl file
% BibTeX documentation can be easily obtained at:
% http://mirror.ctan.org/biblio/bibtex/contrib/doc/
% The IEEEtran BibTeX style support page is at:
% http://www.michaelshell.org/tex/ieeetran/bibtex/
%\bibliographystyle{IEEEtran}
% argument is your BibTeX string definitions and bibliography database(s)
%\bibliography{IEEEabrv,../bib/paper}
%
% <OR> manually copy in the resultant .bbl file
% set second argument of \begin to the number of references

% biography section
% 
% If you have an EPS/PDF photo (graphicx package needed) extra braces are
% needed around the contents of the optional argument to biography to prevent
% the LaTeX parser from getting confused when it sees the complicated
% \includegraphics command within an optional argument. (You could create
% your own custom macro containing the \includegraphics command to make things
% simpler here.)
%\begin{IEEEbiography}[{\includegraphics[width=1in,height=1.25in,clip,keepaspectratio]{mshell}}]{Michael Shell}
% or if you just want to reserve a space for a photo:

\begin{IEEEbiographynophoto}{Rita Ibrahim}
received the B.E. degree from Telecom-Paristech, France, and Lebanese University, Lebanon, in 2015, and the M.Sc. degree in advanced communication networks jointly from Ecole Polytechnique and Telecom-Paristech, France, in 2015. She obtained a PhD degree from CentraleSupélec, France, in 2019. She is currently a research engineer in Orange Labs, France. Her current research interests include analytic modeling and performance evaluation of cellular networks, stochastic network optimization, and radio access technologies for 5G networks.
\end{IEEEbiographynophoto}
% if you will not have a photo at all:
\begin{IEEEbiographynophoto}{Mohamad Assaad}
received the MSc and PhD degrees (with high honors), both in telecommunications, from Telecom ParisTech, Paris, France, in 2002 and 2006, respectively. Since 2006, he has been with the Telecommunications Department at CentraleSupélec, where he is currently a professor and holds the TCL Chair on 5G. He is also a researcher at the Laboratoire des Signaux et Systèmes (L2S, CNRS). He has co-authored 1 book and more than 100 publications in journals and conference proceedings and serves regularly as TPC member or TPC co-chair for top international conferences. He is an Editor for the IEEE Wireless Communications Letters and the Journal of Communications and Information Networks. He has given in the past successful tutorials on 5G systems at various conferences including IEEE ISWCS'15 and IEEE WCNC'16 conferences. His research interests include 5G networks, MIMO systems, mathematical modelling of communication networks, stochastic network optimization and Machine Learning in wireless networks. 
\end{IEEEbiographynophoto}

% insert where needed to balance the two columns on the last page with
% biographies
%\newpage

\begin{IEEEbiographynophoto}{Berna Sayrac}
 is a senior research expert in Orange Labs. She received the B.S., M.S. and Ph.D. degrees from the Department of Electrical and Electronics Engineering of Middle East Technical University (METU), Turkey, in 1990, 1992 and 1997, respectively. She worked as an Assistant Professor at METU between 2000 and 2001, and as a research scientist at Philips Research France between 2001 and 2002. Since 2002, she is working at Orange Labs. Her current research activities and interests include 5G radio access and spectrum. She has taken responsibilities and been active in several FP7 European projects, such as FARAMIR, UNIVERSELF, and SEMAFOUR. She was the technical coordinator of the Celtic+ European project SHARING on self-organized heterogeneous networks, and also the technical manager of the H2020 FANTASTIC-5G project. Currently, she is the coordinator of the research program on RAN design and critical communications in Orange. She holds several patents and has authored more than fifty peer-reviewed papers in prestigious journals and conferences. She also acts as expert evaluator for research projects, as well as reviewer, TPC member and guest editor for various conferences and journals.

\end{IEEEbiographynophoto}

% You can push biographies down or up by placing
% a \vfill before or after them. The appropriate
% use of \vfill depends on what kind of text is
% on the last page and whether or not the columns
% are being equalized.

%\vfill

% Can be used to pull up biographies so that the bottom of the last one
% is flush with the other column.
%\enlargethispage{-5in}

% that's all folks
\end{document}